\definecolor{carmine}{rgb}{0.59, 0.0, 0.09}
\DeclareMathOperator*{\argmax}{arg\,max}
\newcommand{\mathbbm}[1]{\text{\usefont{U}{bbm}{m}{n}#1}}
\newcommand{\bea}{\begin{eqnarray}}
\newcommand{\eea}{\end{eqnarray}}
\newcommand{\densitymatrices}{\mathcal{D}(\mathcal{H})}
\newcommand{\stkout}[1]{\ifmmode\text{\sout{\ensuremath{#1}}}\else\sout{#1}\fi}
\def\C{\hbox{$\mit I$\kern-.7em$\mit C$}}
\def\R{\hbox{$\mit I$\kern-.6em$\mit R$}}
\def\N{\hbox{$\mit I$\kern-.6em$\mit N$}}
\def\ket#1{|#1\rangle}
\def\tr{\mathrm{tr}}
\def\ket#1{\left| #1\right>}
\def\bra#1{\left< #1\right|}
\newtheorem{theorem}{Theorem}
\newtheorem{lemma}[theorem]{Lemma}
\newtheorem{observation}[theorem]{Observation}
\begin{document}

\title{Approximate and ensemble local entanglement transformations for multipartite states}
\date{\today}
\author{David Gunn}
\affiliation{Institute for Theoretical Physics, University of Innsbruck, A–6020 Innsbruck, Austria}
\author{Martin Hebenstreit}
\affiliation{Institute for Theoretical Physics, University of Innsbruck, A–6020 Innsbruck, Austria}
\author{Cornelia Spee} 
\affiliation{Institute for Theoretical Physics, University of Innsbruck, A–6020 Innsbruck, Austria}
\author{Julio I. de Vicente} 
\affiliation{Departamento de Matemáticas, Universidad Carlos III de Madrid, Avda. de la Universidad 30, E-28911, Leganés (Madrid), Spain}
\affiliation{Instituto de Ciencias Matemáticas (ICMAT), E-28049 Madrid, Spain}
\author{Barbara Kraus}
\affiliation{Institute for Theoretical Physics, University of Innsbruck, A–6020 Innsbruck, Austria}
\affiliation{Department of Physics, QAA, Technical University of Munich,
James-Franck-Str. 1, D-85748 Garching, Germany}

\begin{abstract}
Understanding multipartite entanglement is a key goal in quantum information. Entanglement in pure states can be characterised by considering transformations under Local Operations assisted by Classical Communication (LOCC). However, it has been shown that, for $n\ge5$ parties, multipartite pure states are generically isolated, i.e., they can neither be reached nor transformed under LOCC. Nonetheless, in any real lab, one never deterministically transforms a pure initial state exactly to a pure target state. Instead, one transforms a mixed state near the initial state to an ensemble that is on average close to the target state. This motivates studying approximate LOCC transformations. After reviewing in detail the known results in the bipartite case, we present the gaps that remain open in the multipartite case. While the analysis of the multipartite setting is much more technically involved due to the existence of different SLOCC classes, certain features simplify in the approximate setting. In particular, we show that it is sufficient to consider pure initial states, that it is sufficient to consider LOCC protocols with finitely-many rounds of communication and that approximate transformations can be approximated by ensemble transformations within an SLOCC class. Then, we formally define a hierarchy of different forms of approximate transformations that are relevant from a physical point of view. Whereas this hierarchy collapses in the bipartite case, we show that this is not the case for the multipartite setting, which is fundamentally richer. To wit, we show that optimal multipartite approximate transformations are not generally deterministic, that ensemble transformations within an SLOCC class can achieve a higher fidelity than deterministic transformations within an SLOCC class, and that there are approximate transformations with no deterministic transformations nearby.
\end{abstract}

\maketitle

\section{Introduction}
Entanglement lies at the heart of quantum information theory \cite{Horodeckis2009_QuantumEntanglement}. A key insight into entanglement is that it cannot be created with local operations. As a result, entanglement can be studied through the ``distant labs''  model in which spatially separated parties are each constrained to be able to act only on their local system but are allowed to coordinate their actions by communicating the results of any measurements. The set of operations achievable in this setting is referred to as Local Operations assisted by Classical Communication (LOCC) \cite{DonaldHorodeckiRudolph2002_LOCC}. From this perspective, entanglement becomes a resource which enhances the information processing power of the separated parties  (see e.g. Refs \cite{Horodeckis2009_QuantumEntanglement,PlenioVirmani2007_IntroToEntanglementMeasures}). As entanglement cannot be created locally, if $\ket{\psi}\rightarrow_{\text{LOCC}}\ket{\phi}$ then $E(\ket{\psi})\ge E(\ket{\phi})$ for any measure of entanglement, $E$. Note, it may be the case that  $\ket{\psi}\not\rightarrow_{\text{LOCC}}\ket{\phi}$ nor $\ket{\phi}\not\rightarrow_{\text{LOCC}}\ket{\psi}$. In this case, the entanglement of the states cannot be compared. Thus, LOCC transformations induce only a partial order in the Hilbert space. Nonetheless, in cases where LOCC transformations are possible, this resource theoretic approach to studying entanglement gives us a physically motivated means to quantitatively compare entanglement.

Bipartite entanglement is very well understood. Ref. \cite{Nielsen1999_NielsensThm} provided necessary and sufficient
conditions for LOCC transformations. As a result, one can identify a unique (up to Local Unitaries [LUs]) maximally entangled state, capable of reaching the entire Hilbert space. In the three qubit case, contrary to the bipartite case, not all fully-entangled states (pure states with maximal local rank) can be converted to one another with some non-zero probability \cite{DurEtAl2000_ThreeQubitSLOCC}. Instead, fully-entangled three qubit states are partitioned into two Stochastic LOCC (SLOCC) equivalence classes with incomparable forms of entanglement. Nonetheless, Ref. \cite{Turgut2010_GHZtransfo, KintaTurgut2010_WStateTransfo} successfully characterised state conversions for three qubits. Whilst there is no single maximally entangled state, Ref. \cite{DeVicenteEtAl2013_MES} identified a zero-measure Maximally Entangled Set (MES) - a minimal set of states, capable of reaching the entire Hilbert space. However, for five or more qubits, Ref. \cite{GourKrausWallach2017_AlmostAllTrivStab, SauerweinEtAl2018_AlmostAllStatesNotReachable} showed that states are generically isolated under LOCC. That is, for almost all states, $\ket{\psi}$, there neither exists a fully-entangled state that can be transformed to $\ket{\psi}$ via LOCC, nor can $\ket{\psi}$ be transformed with LOCC into any other fully-entangled state (up to LUs). Consequently, the MES is full-measure, and the partial order induced by LOCC is trivial. Thus, generically, the entanglement of pure states cannot be compared.

This result motivates studying physically motivated modifications to the standard entanglement theory picture. One of the most natural modifications is to study approximate transformations instead of exact transformations. Indeed, in any physically realistic situation, one will not be interested in exact, deterministic transformations of pure states but instead approximate, ensemble transformations of mixed states close to pure states. This approximate setting has already been studied and solved in the bipartite case \cite{VidalJonathanNielsen2000_ApproxLOCC}. There it was shown that if the initial state is pure, then the optimal approximate transformation is surprisingly a deterministic transformation to a nearby state. The multipartite setting is less well studied. Ref. \cite{Acin2000_OptimalDistillGHZ} briefly comments on approximate transformations of three qubits, and approximate transformations under ``resource non-generating operations''  (i.e., a class of operations larger than LOCC) have been studied in the context of generalised resource theories \cite{Regula2022_ApproximateNonEntanglingTransfo}. Otherwise, multipartite approximate LOCC transformations have been largely unexplored. This is precisely the setting we study in this work.

The remainder of the paper is structured as follows. In Section I, we introduce our notation and give a summary of known results on state transformations under LOCC (see Table \ref{tab:knownresultssummary}). This also motivates the study of multipartite, approximate LOCC transformations presented here. In Section II, we then precisely define the approximate transformations that we study in this paper. Here, we also show that it is sufficient to consider pure initial states, finite round LOCC protocols and that general faithful transformations can be arbitrarily well approximated by faithful transformations within an SLOCC class. Then, in Section III, we set out to better understand the approximate transformations we have defined. We first show faithful transformations within an SLOCC class are more powerful than deterministic transformations within an SLOCC class. We then consider general faithful transformations and show that, unlike in the bipartite case, optimal multipartite faithful transformations are not generally deterministic. Finally, we investigate the question of whether, for any given faithful transformation, there is always a nearby deterministic transformation. We provide strong numerical evidence that faithful transformations are indeed more powerful than deterministic transformations between the vicinities of the states.

\section{Preliminaries}
\subsection{Preliminary Results on State Transformations}

In this section, we will introduce our notation and will give an overview of known results on state transformations under LOCC. This will allow us to put the results derived in this paper into a wider context. We do this by first reviewing the bipartite results regarding LOCC transformations and how these results were expanded to various approximate transformations. To this end, we review the results on deterministic, probabilistic, ensemble, faithful, finite-copy, catalytic and asymptotic LOCC transformations in the bipartite setting. We then move on to multipartite systems and consider the corresponding known results in each of these different settings. The results reviewed here are summarized in Table \ref{tab:knownresultssummary}. 

Throughout this paper, we will generally use $\ket{\psi}$ to denote the initial state of a transformation and $\ket{\phi}$ ($\{(p_i, \ket{\phi_i})\}_{i\in I}$, with $I$ some index set) to denote the final state (final ensemble) of a transformation. We use $\ket{\psi}\rightarrow_{LOCC}\ket{\phi}$ to denote that $\ket{\psi}$ can be converted to $\ket{\phi}$ via an LOCC protocol. We will refer to a desired output state as the target state. Sometimes we will use mixed states, in which case $\rho$ and $\sigma$ will be typically used for the input and output states respectively. Moreover, given a pure state, $\ket{\psi}$, we denote the corresponding mixed state as $\psi$, i.e., $\psi=|\psi\rangle\langle\psi|$. We will mainly consider the situation where $n$ spatially-separated parties each have access to a local quantum system of dimension $d$. Thus, we study transformations of states in the Hilbert space $\mathcal{H}\cong \mathbbm{C}_d^{\otimes n}$. We denote the set of mixed states on $\mathcal{H}$ by $\mathcal{D}(\mathcal{H})$. Finally, we will use the phrase ``fully-entangled''  to refer to pure states which have maximum local rank on all sites, i.e., $\text{rk}(\rho_i)=d$ for all single-site reduced density matrices, $\rho_i$.

One of the main motivations for studying exact transformations is that the existence of an LOCC protocol transforming one state into the other (deterministically) ensures that the entanglement of the initial state is at least as high as the entanglement of the final state. This holds for any entanglement measure. Hence, the study of LOCC transformations allows one to partially order the entanglement contained in states \footnote{The order is only partial as even within an SLOCC class, two states need not be convertible in either direction via LOCC.}. Local Unitary (LU) operations are the simplest example of LOCC transformations and do not alter the entanglement of states. Indeed, two pure states are reversibly, exactly inter-convertible via LOCC iff they are related by an LU transformation \cite{Gingrich2002_LOCC=LU}. States related by LUs naturally form an equivalence class. The largest equivalence classes under local operations are Stochastic LOCC (SLOCC) classes \cite{Chitambar2014_EverythingYouWantedToKnow}. Two states are said to be SLOCC equivalent if they both can be converted into one another via LOCC with non-vanishing probability in both directions. Thus, the entanglement of SLOCC-inequivalent, fully-entangled states is incomparable in the LOCC framework. For this reason, one usually considers initial and final states that correspond to the same SLOCC class.  Mathematically, two states are called LU equivalent (SLOCC equivalent) if they are related by local unitary (local invertible \cite{DurEtAl2000_ThreeQubitSLOCC}) operators. We write this as $\ket{\psi}\cong_{LU}\ket{\phi}$ ($\ket{\psi}\cong_{SLOCC}\ket{\phi}$). We typically choose a representative for the SLOCC equivalence class, called the ``seed''  state, $\ket{\psi_s}$, and then write (unnormalised) states in the SLOCC class as $g \ket{\psi_s}$, with $g \in GL(d,\mathbbm{C})^{\otimes n}$. 

\subsubsection{Bipartite LOCC State Transformations}
We begin with the bipartite setting. Any pure bipartite state can be written up to LUs as $\ket{\psi}=\sum_{i=0}^{d-1} \sqrt{\psi_i} \ket{i,i}$, where $\psi_{0} \geq \ldots \psi_{d-1}\ge 0$ and $\sum_{i=0}^{d-1} \psi_i =1$ (via its Schmidt decomposition). Thus, $\ket{\psi}$ can be identified uniquely (up to LUs) with its Schmidt vector, $\vec{\psi}=(\psi_{0},...,\psi_{d-1})$. An immediate consequence is that two states are LU equivalent (SLOCC equivalent) iff their Schmidt vectors (the number of non-vanishing Schmidt coefficients, aka their Schmidt rank, $\text{Sr}$) coincide. A deterministic LOCC transformation from $\ket{\psi}$ to $\ket{\phi}$ is possible iff the Schmidt vector of  $\ket{\psi}$ is majorized by the Schmidt vector of $\ket{\phi}$, i.e., $\vec{\psi} \preceq \vec{\phi}$ \cite{Nielsen1999_NielsensThm}. This condition can equivalently be characterised with the entanglement monotones introduced in Ref. \cite{Vidal1999_ProbablisticBipariteMonotones}, $E_l(\ket{\psi})=\sum_{i=l}^{d-1} \psi_i$. Namely, a deterministic LOCC transformation from $\ket{\psi}$ to $\ket{\phi}$ is possible iff $E_l(\ket{\psi})\ge E_l(\ket{\phi}), \forall l\in\{0,...,d-1\}$. These necessary and sufficient conditions allow one to identify a unique (up to LUs) maximally entangled state, $\ket{\Phi^+}=\frac{1}{\sqrt{d}} \sum_{i=0}^{d-1} \ket{i,i}$, capable of  reaching the entire Hilbert space via LOCC. That is, $\ket{\Phi^+}$ can be transformed to any other state in ${\cal H}$ via deterministic LOCC transformations.  

In the event one cannot transform $\ket{\psi}$ to $\ket{\phi}$ deterministically, the transformation might be possible  probabilistically. Such transformations are referred to as ``conclusive''  transformations \cite{VidalJonathanNielsen2000_ApproxLOCC}. In the bipartite setting, such a transformation is possible if and only if the Schmidt rank of $\ket{\phi}$ is less than or equal to that of $\ket{\psi}$, $\text{Sr}(\ket{\psi})\ge \text{Sr}(\ket{\phi})$. In this setting, it is natural to ask what is the maximum success probability of transforming $\ket{\psi}$ to $\ket{\phi}$ with an LOCC protocol, $p_{max}(\ket{\psi}\rightarrow\ket{\phi})$ (which we shorten to $p_{max}$ when the argument is obvious from the context). The entanglement monotones from above can be used to determine $p_{max}$. Namely, $p_{max}$ is given by the minimum of all the ratios of $E_l(\ket{\psi})$ and $E_l(\ket{\phi})$ \cite{Vidal1999_ProbablisticBipariteMonotones}. Moreover, this maximum success probability is achievable with a One Successful Branch Protocol (OSBP), in which only one sequence of measurement outcomes yields the desired state and all others yield states that are no longer fully-entangled.

Conclusive transformations are a subset of the more general set of ensemble transformations. These are transformations in which an initially pure state is transformed into an ensemble of states. In the case of a finite ensemble of pure states, Ref. \cite{JonathanPlenio1999_ReachingEnsemblesofPureStates} showed that $\ket{\psi}$ can be transformed to $\{(p_i, \ket{\phi_i})\}_{i=1}^m$  iff $E_l(\ket{\psi}) \geq \sum_{i=1}^{m} p_i E_l(\ket{\phi_i})$, $\forall l\in\{0,...,d-1\}$.

Another particularly physically relevant subset of ensemble transformations are ``faithful''  transformations \cite{VidalJonathanNielsen2000_ApproxLOCC}. Here one constrains ensemble transformations to those in which an input state, $\ket{\psi}$, is transformed to an output ensemble that has a large enough average fidelity with some desired, pure, target state, $\ket{\phi}$. Recall, the fidelity between two mixed states, $\rho$ and $\sigma$, is given by \cite{Jozsa1994_Fidelity}
\begin{equation}
    \label{eq:fideqn}
   F(\rho,\sigma)=\left( \tr{\sqrt{\sqrt{\rho}\sigma\sqrt{\rho}}}\right)^2
\end{equation}
and is a measure of how indistinguishable two states are \cite{Fuchs1996_Distinguishability} (see also Section \ref{sec:fidelity}). Thus, faithful transformations approximate transformations from $\ket{\psi}$ to $\ket{\phi}$. In Ref. \cite{VidalJonathanNielsen2000_ApproxLOCC}, it was shown that, in the bipartite setting, the optimal fidelity via a faithful transformation (i.e., the maximum achievable fidelity of an output ensemble with respect to the target state) is always achievable with a deterministic transformation.

Note, conclusive and faithful transformations are intimately related. If one can conclusively transform $\ket{\psi}$ to $\ket{\phi}$ with a high success probability, then one can also transform $\ket{\psi}$ to $\ket{\phi}$ faithfully. To see this, it is sufficient to note that for any conclusive transformation from $\ket{\psi}$ to $\ket{\phi}$ with success probability $1-\epsilon$ (where $\epsilon>0$ is small), one can also transform $\ket{\psi}$ via LOCC to the ensemble $\left\{ \big(1-\epsilon,\ket{\phi}\big),\big(\epsilon,\ket{0}^{\otimes n}\big) \right\}$. The average fidelity that this ensemble has with the target state $\ket{\phi}$ is at least $1-\epsilon$. However, the converse does not hold: a faithful transformation does not imply a conclusive transformation, even in the bipartite setting. To see this, consider two states, $\ket{\psi}$ and $\ket{\phi}$ such that $F(\psi,\phi)>1-\epsilon$, yet $\ket{\phi}$ has a higher Schmidt rank than $\ket{\psi}$. $\ket{\psi}$ can be transformed into $\ket{\phi}$ via a faithful transformation trivially (by doing nothing). However, the probability of transforming $\ket{\psi}$ to $\ket{\phi}$ is zero. With this said, if the initial and final state both have the same Schmidt rank, then the optimal faithful transformation (which as mentioned, can be chosen deterministic) outputs a state, $\ket{\chi}$, that also has the same Schmidt rank. Moreover, $p_{max}(\ket{\chi}\rightarrow \ket{\phi})=p_{max}(\ket{\psi}\rightarrow \ket{\phi})$. That is, the maximum success probability is not reduced by first applying the optimal faithful transformation \cite{VidalJonathanNielsen2000_ApproxLOCC}.

Due to physical constraints, it is also relevant to consider  transformations which begin with a state which is not exactly, but only close to, the desired initial state. Ref. \cite{VidalJonathanNielsen2000_ApproxLOCC} extended their results on bipartite faithful transformations to include nearby (potentially mixed) initial states via the inequality
\begin{align}
\label{eq:vidalinequality}
| D(\rho_1 \rightarrow \sigma_1) - D(\rho_2 \rightarrow \sigma_2) | \leq D(\rho_1, \rho_2) + D(\sigma_1, \sigma_2),
\end{align}
where $D$ is the trace distance (which is intimately related to the fidelity -- see Section \ref{sec:fidelity}), and $D(\rho\rightarrow\sigma)$ is the minimum trace distance between a (generally-mixed) state, $\sigma$, and any state to which $\rho$ can be transformed via LOCC. Thus, Eq. (\ref{eq:vidalinequality}) tells us that if we have a known faithful transformation, $\ket{\psi}\rightarrow \{(p_i,\ket{\phi_i})\}_{i\in I}$, with respect to a target state, $\ket{\phi}$, and then we take a (generally mixed) state, $\rho$, in the vicinity of $\ket{\psi}$, i.e., $F(\rho,\psi)\ge1-\delta$ for some $\delta>0$, then the optimal faithful transformation from $\rho$ to $\phi$ will be at most $\sqrt{\delta}$ worse than the original known faithful transformation \footnote{The square root appears due to the relationship between the trace distance and the fidelity, which is discussed further in Section \ref{sec:fidelity} (see Eq. \ref{eq:FidInequality}).}. More precisely we have 
\begin{equation}
    D(\rho\rightarrow \phi) \le D(\psi\rightarrow \phi) + \sqrt{\delta}.
\end{equation}

In all the aforementioned scenarios the initial state was a single copy of a state. However, one can consider more general transformations, where the initial state corresponds to multiple copies of a state, or multiple different states. Exact transformations of finitely many copies of states have been studied \cite{Hardy1999_OptimalConcentration, Bandyopadhyay2002_BipartiteMulticopy,DuanEtAl2005_MulticopyAndCatalysis}, yielding interesting features; for example, multiple copies of a state may be transformable whilst a single copy is not \cite{Bandyopadhyay2002_BipartiteMulticopy}. Expanding from multi-copy transformations to transformations of multiple different states (``multi-state''  transformations) yields further interesting results. The most well-known example of this setting is entanglement catalysis \cite{JonathanPlenio1999_Catalysis, DuanEtAl2005_MulticopyAndCatalysis, Turgut2007_BipartiteCatalysis1, Klimesh2007_BipartiteCataylsis2}, in which a state, $\ket{\psi}$, is transformed into another state, $\ket{\phi}$, with the help of a catalyst, $\ket{\chi}$, i.e., $\ket{\psi} \otimes \ket{\chi} \rightarrow_{LOCC} \ket{\phi} \otimes \ket{\chi}$. Indeed, multi-copy transformations and catalysis were shown to be deeply connected in Ref. \cite{DuanEtAl2005_MulticopyAndCatalysis}. General multi-state transformations have also been studied \cite{FengEtAl2002_MutualCatalysis, NevenEtAl2021_Multicopy}, yielding a yet wider array of possibilities. For instance, it has been shown that a pair of bipartite states $\ket{\psi_1}\otimes\ket{\psi_2}$ can be LU equivalent to another pair of states $\ket{\phi_1}\otimes\ket{\phi_2}$, even though none of the input and output states are LU equivalent, i.e., $\ket{\psi_i} \not\cong_{LU} \ket{\phi_j}$ for $i,j\in \{1,2\}$ \cite{NevenEtAl2021_Multicopy}. Moving away from exact transformations, multi-copy, catalytic and multi-state bipartite transformations have also been studied in the probabilistic setting \cite{Bandyopadhyay2002_BipartiteMulticopy, ChenWinterEtAl2010_TensorRankSLOCCCatalysis} and approximate setting \cite{vanDamHayden2003_Embezzlement}.

Finally, one can consider the asymptotic limit. Here it is well known that $n$ copies of a pure state can be asymptotically, reversibly transformed to $m$ copies of the maximally entangled state at a rate, $m/n$, given by the entanglement of formation of the initial pure state, the von Neumann entropy of the reduced state \cite{BennettEtAl1996_Assymp2ConcentratingPartialEntByLocalOps}. Moreover, the transformation of asymptotically many copies of a mixed state to a state arbitrarily close to a pure state has been widely studied in the literature, where it is known as entanglement distillation (see e.g. \cite{BennettEtAl1996_Assymp1PurificationofNoisyEntandFaithfulTeleport}).

\subsubsection{Multipartite LOCC State Transformations}
\label{sec:multipartiteloccreview}

We now consider the multipartite setting. All the results in the previous section change significantly in the multipartite scenario. This is for two predominant reasons. The first is the increased complexity in the entanglement structure of the states. The second is the increased complexity of LOCC.

The increased complexity in the entanglement structure of states is due to the fact that fully-entangled multipartite states are not necessarily SLOCC equivalent \cite{DurEtAl2000_ThreeQubitSLOCC}. In fact, for four qubits and larger systems, there exist generically infinitely many SLOCC classes \cite{DurEtAl2000_ThreeQubitSLOCC,VerstraeteEtAl2002_FourQubitsSLOCC1}. 
Furthermore, for 3 qudit systems, it has been shown that deciding SLOCC conversion is NP hard \cite{Chitambar2008_SLOCCDecidabilityisNPHard}.

Nonetheless, there has been considerable progress in recent years. To solve the problem of LU equivalence of multipartite states, Ref. \cite{Kraus2010_LUequivalence} introduced an algorithm to determine the local unitaries that relate two $n$-qubit states in the event they are LU equivalent. The problem of SLOCC equivalence has been tackled with the use of SL Invariant Polynomials (SLIPs) (see Ref. 
\cite{EltschkaEtAl2014_SLIPs6} and references therein). These are functions on the Hilbert space that are polynomials with respect to the coefficients of a state and are invariant under determinant-1, local operators, i.e., $f(\otimes g_i \ket{\psi})=f(\ket{\psi}), \forall g_i \in SL(d_i,\mathbbm{C}), \forall \ket{\psi}\in\mathcal{H}$. In particular, SLOCC classes that contain a critical state (a state such that the single party reduced density matrices are maximally mixed for all sites) can be distinguished by ratios of SL-invariant polynomials \cite{GourWallach2013_SLIPs5} \footnote{SLOCC classes which do not contain a critical state have also been studied (see e.g. Ref. \cite{Slowik2020_SLOCCtypes}).}. Moreover, for each such SLOCC class, the critical state is unique (up to LUs) \cite{KempfNess1979} and an algorithm for determining the corresponding critical state from any state in the SLOCC class is provided in Ref. \cite{VerstraeteEtAl_AlgorithmCriticalSLOCC}. The union of all such classes is full-measure ~\cite{KempfNess1979}. That is, the complement is of a lower dimension than the Hilbert space. 

In fact, for $n \ge 5$, the set of states that (a) are SLOCC equivalent to a critical state and (b) do not possess any local symmetry but the trivial one, $\mathbbm{1}^{\otimes n}$, is full-measure \cite{GourKrausWallach2017_AlmostAllTrivStab,SauerweinEtAl2018_AlmostAllStatesNotReachable}. This set of states will play an important role in this paper. We will refer to these states as ``generic''. The lack of symmetry also means that two SLOCC equivalent generic states, $g \ket{\psi_s}$ and $h \ket{\psi_s}$, are LU equivalent iff $G=H$, where $\ket{\psi_s}$ is a representative seed state for the SLOCC class. Here and in the following, $G=g^\dagger g$ and $H=h^\dagger h$ are both local operators \cite{SauerweinEtAl2018_AlmostAllStatesNotReachable}. Not only is this set of generic states full-measure, it is also open and dense (wrt to the standard topology on $\mathcal{H}$).

Whilst some work has been done into characterising state transformations from fully-entangled to non-fully-entangled states (e.g. Refs \cite{GuoChitambarDuan2016_LOCCHightoLowDim1, HebenstreitEtAl_SLOCCHightoLowDim3}), typically one considers transformations within an SLOCC class. The reason for this is that this allows one to order the entanglement contained in these pure states. If we restrict ourselves to considering transformations among fully-entangled states in the same SLOCC class, then we can choose the same seed state, $\ket{\psi_s}$, for both states. We then typically express the initial state as $\ket{\psi}=g \ket{\psi_s} / {n_g}$ and the final state as $\ket{\phi}=h \ket{\psi_s} / {n_h}$, where $g=\otimes_i g_i,\ h=\otimes_i h_i$ with $g_i, h_i \in GL(d,\mathbbm{C})$, and $n_g, n_h$ are normalisation constants. In case one considers more general transformations where one changes SLOCC class, such a description is no longer possible, and the analysis becomes even more involved. This is further complicated by the existence of infinitely many SLOCC classes. For instance, to characterise transformations from a five qubit, fully-entangled state to a four qubit state, i.e., $\ket{\psi}_{12345}\rightarrow_{LOCC}\ket{0}_1\otimes \ket{\phi}_{2345}$, one must deal with the fact that $\ket{\phi}$ can potentially belong to infinitely many different SLOCC classes.  

The second reason characterising LOCC transformations in the multipartite setting is more complicated than the bipartite setting is that the mathematical description of LOCC protocols is considerably more complicated in the multipartite setting \cite{Chitambar2014_EverythingYouWantedToKnow}. Unlike in the bipartite setting, in which one round of communication is sufficient for any LOCC transformation among pure states \cite{LoPopescu2001_BipartiteLOCC=LOCC1}, multipartite LOCC transformations may require infinitely many rounds of communication \cite{Chitambar2011_InfiniteRoundLOCC}, as well as probabilistic intermediate steps \cite{SpeeEtAl2017_LOCCN}. 

Despite these challenges, deterministic transformations of three qubit states have been completely characterised  \cite{Turgut2010_GHZtransfo,KintaTurgut2010_WStateTransfo,DeVicenteEtAl2013_MES}. In three qubit Hilbert spaces, one finds that, instead of a single maximally entangled state, one must settle for a Maximally Entangled Set (MES) \cite{DeVicenteEtAl2013_MES}. This is a minimal set of states with the property that all states in the Hilbert space can be reached deterministically via LOCC from a state inside the MES. For three qubits, the MES is zero-measure. The MES for four qubits has also been studied \cite{SpeeEtAl2016_FourQubitMES}. For $n\ge5$ qubits and $n\ge 4$ qudits, the MES has  been shown to be a full-measure set \cite{GourKrausWallach2017_AlmostAllTrivStab,SauerweinEtAl2018_AlmostAllStatesNotReachable}.

One can obtain considerable insight by considering more mathematically-tractable super-sets of LOCC transformations.  One such super-set of LOCC is the set of separable maps (SEP), which are characterized by local Kraus operators. SEP is a strict superset of LOCC \cite{BennettEtAl1999_SEP>LOCC1, KleinmannEtAl2011_SEP>LOCC2, ChitambarEtAl2012_SEP>LOCC3, HebenstreitEtAl2016_SEP>LOCC4}. Nonetheless, it can be used to provide necessary conditions for the existence of LOCC transformations. The necessary and sufficient conditions for the existence of a SEP map between SLOCC equivalent states have been derived in Ref. \cite{Gour2011_SEP} (see also Ref. \cite{HebenstreitEtAl2021_SEP1isnotSEP}). As will become immediately apparent, the local symmetries of states are central here. Given a state, $\ket{\psi}$, we define the stabilizer of $\ket{\psi}$, $S_{\ket{\psi}}$, to be the set of local invertible operators that leave the state invariant, i.e.,
\begin{equation}
    S_{\ket{\psi}}=\{S=\otimes_i S_i \in GL(d,\mathbbm{C})^{\otimes n}\ :\ S\ket{\psi}=\ket{\psi}\}. 
\end{equation}
 
We also define the set of local operators which annihilate the state, i.e.,
\begin{equation}
    N_{\ket{\psi}}=\{N=\otimes_i N_i \in \text{Mat}(d,\mathbbm{C})^{\otimes n}\ :\ N\ket{\psi}=0\}. 
    \label{eq:AnnihilatingOpertators}
\end{equation}
The necessary and sufficient conditions for the existence of a SEP map between SLOCC equivalent states are then given by the following theorem.

\begin{theorem}[\cite{Gour2011_SEP,HebenstreitEtAl2021_SEP1isnotSEP}]
\label{thm:SEP}
The state $g\ket{\psi_s}$ can be transformed to $h\ket{\psi_s}$ via SEP if and only if there exists a finite set of probabilities $\{p_k\}$, local symmetries $\{S_k\} \subseteq \mathcal{S}_{\ket{\psi_s}}$, and local singular operators $\{N_q\}\subseteq$ $\ \mathcal{N}_{g\ket{\psi_s}}$ such that
\begin{equation}
    \frac{1}{r}\sum_k p_k S_k^\dagger HS_k + g^\dagger\sum_q N_q^\dagger N_q g = G,\label{eq:SEP}
\end{equation}
where $r=||h\ket{\psi_s}||^2/||g\ket{\psi_s}||^2$. 
\end{theorem}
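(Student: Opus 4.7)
The plan is to exploit that, for a deterministic SEP map whose output is the pure state $h\ket{\psi_s}/||h\ket{\psi_s}||$, every branch must either annihilate the input or produce the target up to normalization. Concretely, any SEP operation is specified by local Kraus operators $\{M_k\}$ with $\sum_k M_k^\dagger M_k = \mathbbm{1}$, and the requirement that $\sum_k M_k g\ket{\psi_s}\bra{\psi_s}g^\dagger M_k^\dagger$ be proportional to the rank-one operator $h\ket{\psi_s}\bra{\psi_s}h^\dagger$ forces every non-zero vector $M_k g\ket{\psi_s}$ to be a scalar multiple of $h\ket{\psi_s}$ (a sum of rank-one positive operators equals a rank-one positive operator only if all summands are collinear). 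I would therefore split $\{M_k\}$ into a ``successful'' family satisfying $M_k g\ket{\psi_s} = \lambda_k h\ket{\psi_s}$ with $\lambda_k \neq 0$, and a ``failing'' family that I relabel $\{N_q\} \subseteq N_{g\ket{\psi_s}}$.

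For the forward direction, the identity $M_k g\ket{\psi_s} = \lambda_k h\ket{\psi_s}$ lets me define $S_k := h^{-1} M_k g / \lambda_k$, which is local (since $g,h,M_k$ are all tensor products), invertible (because $\ket{\psi_s}$ is fully entangled, so any local operator that fixes it up to scalar must have full local rank on every site, as one sees from the reduced states), and satisfies $S_k\ket{\psi_s} = \ket{\psi_s}$; hence $S_k \in S_{\ket{\psi_s}}$. Writing $M_k = \lambda_k h S_k g^{-1}$, the branch probability on the normalized input is $p_k = |\lambda_k|^2 \cdot ||h\ket{\psi_s}||^2/||g\ket{\psi_s}||^2 = |\lambda_k|^2 r$, so $|\lambda_k|^2 = p_k/r$. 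Substituting into $\sum_k M_k^\dagger M_k + \sum_q N_q^\dagger N_q = \mathbbm{1}$ and conjugating by $g^\dagger$ on the left and $g$ on the right yields exactly Eq.~(\ref{eq:SEP}), while $\sum_k p_k = 1$ is guaranteed by trace preservation of the output.

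For the converse, given $\{p_k\}, \{S_k\}, \{N_q\}$ obeying Eq.~(\ref{eq:SEP}), I would explicitly construct the Kraus operators $M_k := \sqrt{p_k/r}\, h S_k g^{-1}$ together with the $N_q$. Locality is preserved since each factor is a tensor product; applying $M_k$ to $g\ket{\psi_s}$ returns $\sqrt{p_k/r}\, h\ket{\psi_s}$, so each successful branch outputs the normalized target with probability $p_k$; and conjugating Eq.~(\ref{eq:SEP}) by $(g^{-1})^\dagger$ on the left and $g^{-1}$ on the right reproduces the completeness relation $\sum_k M_k^\dagger M_k + \sum_q N_q^\dagger N_q = \mathbbm{1}$, certifying a valid SEP map. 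The main subtlety I anticipate is justifying that the successful branch can always be cast in this symmetry-parameterized form with a \emph{finite} index set; finiteness of the Kraus representation is standard for CP maps on finite-dimensional systems, and the parameterization is forced by the rigidity argument above once arbitrary global phases are absorbed into $\lambda_k$, so no further machinery should be required.
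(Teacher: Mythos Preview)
Your argument is correct and mirrors the paper's own derivation (given in Appendix~\ref{sec:AppendixSepEnsTransfo} for the ensemble case, of which this is the single-output specialization): split the local Kraus operators into those that annihilate the input and those that must hit a scalar multiple of $h\ket{\psi_s}$, rewrite the latter as $\lambda_k h S_k g^{-1}$ with $S_k\in\mathcal{S}_{\ket{\psi_s}}$, and then conjugate the completeness relation by $g$. Your justification that $S_k$ is automatically invertible via the full local rank of $\ket{\psi_s}$ is the same observation the references rely on, and the converse construction is identical.
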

 
Let us emphasize here that the operators in the set $\mathcal{N}_{\ket{\psi}}$ annihilate the state $\ket{\psi}$. Consequently, the corresponding measurement outcomes do not occur. Nevertheless, it has been shown that some SEP transformations among pure states only exist if these operators are taken into account \cite{HebenstreitEtAl2021_SEP1isnotSEP}. This is 
due to the fact that the completeness relation (leading to Eq. (\ref{eq:SEP})) for the measurement operators can be satisfied by including these operators but cannot without them. Separable transformations which do not use operators from the set $\mathcal{N}_{\psi}$ are referred to as $\text{SEP}_1$. An explicit example of such a transformation that is possible with SEP but not with SEP$_1$ is provided in Ref. \cite{HebenstreitEtAl2021_SEP1isnotSEP}. Moreover, it is easily seen that if the stabilizer is trivial, i.e., $S_{\ket{\psi_s}}=\{\mathbbm{1}^{\otimes n}\}$, then a transformation of $g\ket{\psi}$ to $h\ket{\psi}$ is possible with SEP iff it is possible with SEP$_1$ iff it is possible with LOCC.

As well as super-sets, one can consider physically relevant subsets of LOCC. From a practical point of view, a particularly relevant subset of LOCC transformations are LOCC transformations which utilize only finitely many rounds of classical communication ($\text{LOCC}_{\mathbbm{N}}$). In this case, simple necessary and sufficient conditions for fully-entangled states to be reachable via $\text{LOCC}_{\mathbbm{N}}$ and/or convertible via LOCC with one round of communication are known \cite{SpeeEtAl2017_LOCCN}. Note, Ref. \cite{HebenstreitEtAl2021_SEP1isnotSEP} also shows that the set of $\text{LOCC}_{\mathbbm{N}}$ transformations between fully-entangled states is a subset of SEP$_1$.

Considering Theorem \ref{thm:SEP}, we see that symmetries are essential for the existence of local (deterministic) transformations among fully-entangled pure states. Indeed, if a state has only trivial symmetries, i.e., $S_{\ket{\psi}}=\{\mathbbm{1}^{\otimes n}\}$, then the only SLOCC equivalent states it can be transformed to via SEP are LU equivalent states \cite{GourKrausWallach2017_AlmostAllTrivStab,SauerweinEtAl2018_AlmostAllStatesNotReachable}. This has dramatic consequences for entanglement theory. As discussed above, the set of generic states is a full-measure set of states with only trivial symmetries. Consequently, almost all states are isolated under LOCC (and SEP). That is, almost all states can neither be transformed via LOCC to another fully-entangled non-LU-equivalent state, nor can they be reached via LOCC from another non-LU-equivalent state. This implies that the partial-order induced by LOCC in multipartite systems is trivial. Furthermore, it implies that the maximally entangled set (MES) is generically a full-measured set \cite{DeVicenteEtAl2013_MES,GourKrausWallach2017_AlmostAllTrivStab,SauerweinEtAl2018_AlmostAllStatesNotReachable}.

The fact that almost all states are isolated under LOCC means that, in the multipartite setting, we must move away from exact transformations. Due to its physical relevance, a natural choice of more general transformations are approximate transformations. The first of such approximate transformations are  multipartite conclusive transformations. Once again three and four qubit systems correspond to a special case. Refs. \cite{Acin2000_OptimalDistillGHZ, VerstraeteEtAl2002_GHZOptimalDistillation} identified the optimal success probability of distilling a GHZ state from a fully-entangled three qubit state, and likewise in Ref. \cite{Yildiz2010_OptimalDistillationOfWState} for the W state. Moreover, upper and lower bounds have been attained for the maximum probability of general state transformations within the three qubit GHZ class (see e.g. \cite{CuiEtAl2010_UpperBoundsonTransfoBetweenGHZStates}) and W class (see e.g.  \cite{KintaTurgut2010_WStateTransfo}). Bounds on the maximum probability have also been studied in the case of four qubits (see e.g. \cite{ Gour2011_SEP}).

For $n\ge 5$, the study of probabilistic transformations of generic states turns out to be simpler than even the bipartite case. In fact, for generic states, it has been shown that the maximal success probability with which one state can be transformed into the other is given by \cite{GourKrausWallach2017_AlmostAllTrivStab, SauerweinEtAl2018_AlmostAllStatesNotReachable}
\begin{align}
    p_{max}\left(\ket{\psi} \rightarrow \frac{\otimes_i h_i \ket{\psi}}{n_h} \right)= \frac{n_h^2}{\Pi_i \mu_{max}(H_i)},
    \label{eq:pmaxgeneric}
\end{align}
where again $n_h$ is the normalisation constant, $H_i=h_i^{\dagger}h_i$, and $\mu_{max}$ is the maximum eigenvalue. As in the bipartite case, the maximum success probability is achievable with a One Successful Branch Protocol (OSBP). Moreover, Ref. \cite{Sauerwein2018_DifferentiableTransfo} characterizes the optimal intermediate states and optimal SLOCC paths. Respectively, these are states that one can first transform the initial state to without reducing the overall success probability of reaching the final state and continuous paths of optimal intermediate states.

Using the simple observation that Eq. (\ref{eq:pmaxgeneric}) needs to coincide with the minimal ratio of all entanglement monotones \cite{Vidal1999_ProbablisticBipariteMonotones}, a complete set of entanglement monotones within a generic SLOCC class has been derived \cite{Sauerwein2018_DifferentiableTransfo}. Namely, given a generic SLOCC class with seed state $\ket{\psi_s}$, the set of functions 
\begin{align}
    E_{\vec{x}}^{\ket{\psi_s}}(g\ket{\psi}/n_g)= \langle \vec{x}| G|\vec{x} \rangle /n_g^2,
    \label{eq:genericentmonotones}
\end{align}
where $\ket{\vec{x}}$ is any product state, are entanglement monotones. These monotones are easy to calculate as $G$ is a local operator. Moreover, only a finite number suffice to completely characterize the entanglement of a generic state given its SLOCC class. It has also been shown that these monotones are invariant (monotonic) under deterministic (ensemble) SEP within a generic SLOCC class \cite{Sauerwein2018_DifferentiableTransfo} (see Appendix \ref{sec:AppendixSepEnsTransfo} for a further discussion). However, they are not invariant under SLOCC. Hence, they allow one to compare entanglement between states in the same SLOCC class.

Let us now consider ensemble transformations. Before discussing the known results regarding multipartite ensemble transformations, let us highlight some challenges. As mentioned before, deterministic pure state transformations are, with a few exceptions (e.g. Refs \cite{GuoChitambarDuan2016_LOCCHightoLowDim1,HebenstreitEtAl_SLOCCHightoLowDim3}), usually considered within the same SLOCC class. However, many natural ensemble transformations output states that are not in the same SLOCC class; for instance, conclusive transformations that obtain the target state with the maximum success probability necessarily output states in different SLOCC classes in the failing branches (if they did not, the success probability could be improved). Thus, for ensemble transformations, both scenarios - transformations where all the states in the output ensemble are in the same SLOCC class as the initial state and transformations where the outputs are not necessarily in the same SLOCC classes - are physically motivated. Which of the two settings is more appropriate depends on the scenario within which one considers the transformation. We will address below both scenarios. As the transformation is no longer deterministic to a single pure state, local symmetries no longer play such an important role. Furthermore, in the second case, singular measurement operators that do not simply annihilate the state need to be taken into account. This fact leads to the advantage that more transformations are possible and to the disadvantage that their characterizations are much more involved. 

Let us first discuss the situation where all the states in the output ensemble are in the same SLOCC class as the initial and the final state. Note that in this case tools from exact LOCC transformations can be (at least partially) employed. In fact, similarly to deterministic transformations, we have the following theorem.\\

\begin{theorem}[\cite{Gour2011_SEP,HebenstreitEtAl2021_SEP1isnotSEP}]
\label{thm:SEPensemble}
The state $g\ket{\psi_s}$ can be transformed to the (finite) ensemble $\{(p_i, h_i\ket{\psi_s})\}$ (with $h_i$ local and invertible) via SEP if and only if there exists a finite set of probabilities $\{p_{ij}\}$, symmetries $\{S_j\} \subseteq \mathcal{S}_{\ket{\psi_s}}$, and $N_q\in\mathcal{N}_{g\ket{\psi_s}}$ such that $\sum_j p_{ij} = p_{i}$
\begin{equation}
    \sum_{ij} \frac{1}{r_{i}} p_{ij} S_{ij}^\dagger H_i S_{ij} + g^\dagger\sum_q N_q^\dagger N_q g = G,\label{eq:SEPens}
\end{equation}
where $r_i=||h_i\ket{\psi_s}||^2/||g\ket{\psi_s}||^2$.
\end{theorem}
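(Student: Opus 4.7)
My plan is to follow the same structure as the proof of Theorem \ref{thm:SEP}, which is the special case with a single output ($p_1=1$, $h_1=h$), and enlarge the index set to account for the multiple ensemble branches.

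For sufficiency ($\Leftarrow$), I would construct the SEP protocol explicitly. Given a decomposition satisfying Eq.~(\ref{eq:SEPens}), define local Kraus operators
\begin{equation}
M_{ij} = \sqrt{p_{ij}/r_i}\, h_i\, S_{ij}\, g^{-1},
\end{equation}
supplemented by the annihilating Kraus operators $N_q\in\mathcal{N}_{g\ket{\psi_s}}$. Multiplying Eq.~(\ref{eq:SEPens}) on the left by $(g^{-1})^\dagger$ and on the right by $g^{-1}$, and using $(g^{-1})^\dagger G\, g^{-1}=\mathbbm{1}$, the completeness relation $\sum_{ij} M_{ij}^\dagger M_{ij}+\sum_q N_q^\dagger N_q=\mathbbm{1}$ is immediate. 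Applied to $g\ket{\psi_s}$, the operator $M_{ij}$ yields $\sqrt{p_{ij}/r_i}\, h_i\ket{\psi_s}$ (using $S_{ij}\ket{\psi_s}=\ket{\psi_s}$), so branch $(i,j)$ occurs with probability $p_{ij}$ and produces the (normalized) state $h_i\ket{\psi_s}/\sqrt{r_i\|g\ket{\psi_s}\|^2}$. Grouping branches with the same $i$, and using $\sum_j p_{ij}=p_i$, gives precisely the target ensemble $\{(p_i,h_i\ket{\psi_s})\}$, while the $N_q$-branches contribute zero probability.

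For necessity ($\Rightarrow$), suppose a SEP protocol with local Kraus operators $\{M_k\}$ realizes the transformation. Since each output state in the ensemble is pure, any branch $k$ contributing to outcome $i$ must satisfy $M_k g\ket{\psi_s}\propto h_i\ket{\psi_s}$; branches with $M_k g\ket{\psi_s}=0$ are collected as $\{N_q\}$ (note $N_q\in\mathcal{N}_{g\ket{\psi_s}}$ by construction). Relabel the surviving Kraus operators as $M_{ij}$ with $j$ indexing the sub-branches yielding outcome $i$. The key step is to show that each such $M_{ij}$ decomposes as $\sqrt{p_{ij}/r_i}\, h_i S_{ij} g^{-1}+\tilde N_{ij}$ with $S_{ij}\in\mathcal{S}_{\ket{\psi_s}}$ and $\tilde N_{ij}\in\mathcal{N}_{g\ket{\psi_s}}$ (which can be absorbed into the $N_q$ list), where $p_{ij}$ is the probability of branch $(i,j)$ and $\sum_j p_{ij}=p_i$. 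Inserting $M_{ij}^\dagger M_{ij}$ into the completeness relation, cross-terms with $\tilde N_{ij}$ vanish when sandwiched appropriately, and conjugating by $g^\dagger$ on the left and $g$ on the right reproduces Eq.~(\ref{eq:SEPens}).

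The hard part, as in the deterministic case of Theorem~\ref{thm:SEP}, is precisely the decomposition of $M_{ij}$ into a ``symmetry part'' plus a ``null part''. Writing $T_{ij}=h_i^{-1}M_{ij}g$, we have $T_{ij}\ket{\psi_s}\propto \ket{\psi_s}$, so that $T_{ij}/c_{ij}$ fixes $\ket{\psi_s}$; however, $T_{ij}$ need not be invertible, and the stabilizer $\mathcal{S}_{\ket{\psi_s}}$ only contains invertible local operators. One must therefore split $T_{ij}$ into an invertible element of $\mathcal{S}_{\ket{\psi_s}}$ and a (non-invertible) local operator in $\mathcal{N}_{\ket{\psi_s}}$, which can be done by the limiting/perturbation argument developed in Ref.~\cite{HebenstreitEtAl2021_SEP1isnotSEP}. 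This is precisely the technical subtlety that distinguishes $\text{SEP}_1$ from $\text{SEP}$ and explains why the explicit $g^\dagger\sum_q N_q^\dagger N_q g$ term in Eq.~(\ref{eq:SEPens}) cannot in general be dropped.
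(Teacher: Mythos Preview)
Your sufficiency direction is correct and is the standard construction. For necessity, your overall structure matches the paper's, but the paper (following Refs.~\cite{Gour2011_SEP,HebenstreitEtAl2021_SEP1isnotSEP}) resolves the invertibility issue more directly than you propose: a local Kraus operator $M_k=\bigotimes_i (M_k)_i$ with $M_kg\ket{\psi_s}\neq 0$ must already be invertible, because if some tensor factor $(M_k)_j$ were singular then the local rank of $M_kg\ket{\psi_s}$ at site $j$ would be strictly smaller than that of $\ket{\psi_s}$, contradicting $M_kg\ket{\psi_s}\propto h_i\ket{\psi_s}$ with $h_i$ local and invertible. Hence the Kraus operators split cleanly into invertible ones $M_{ij}$ and annihilating ones $N_q$, and $T_{ij}=h_i^{-1}M_{ij}g$ is automatically an invertible local operator fixing $\ket{\psi_s}$, i.e.\ an element of $\mathcal{S}_{\ket{\psi_s}}$. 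No perturbation or splitting argument is needed.

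Your proposed decomposition $M_{ij}=\sqrt{p_{ij}/r_i}\,h_iS_{ij}g^{-1}+\tilde N_{ij}$ is therefore unnecessary, and as written it has a gap: in $M_{ij}^\dagger M_{ij}$ the cross-terms $(h_iS_{ij}g^{-1})^\dagger\tilde N_{ij}+\tilde N_{ij}^\dagger(h_iS_{ij}g^{-1})$ do not vanish in any obvious way, nor do they fit the form of Eq.~(\ref{eq:SEPens}) (the $g^\dagger\sum_q N_q^\dagger N_q\, g$ contribution is a positive operator, which a sum of such cross-terms generally is not). The local-rank argument sidesteps this entirely.
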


This theorem follows directly from Ref. \cite{Gour2011_SEP} and Theorem 1. However, in order to be complete, we present a proof of Theorem 2 in Appendix \ref{sec:AppendixSepEnsTransfo}. Analogously to the deterministic case, we can consider ensemble transformations under SEP$_1$. In this case, we have
\begin{equation}  
    \sum_{ij} \frac{1}{r_{i}} p_{ij} S_{ij}^\dagger H_i S_{ij}  = G.
    \label{eq:sep1ensemble}
\end{equation}
Note that, as in the deterministic case, $\text{LOCC}_\mathbbm{N}$ ensemble transformations within an SLOCC class are also a subset of SEP$_1$ (see Appendix \ref{sec:AppendixSepEnsTransfo}).

Moving onto general ensemble transformations, the set of possible transformations one obtains is considerably more complex. The example from the beginning of this section, in which one considers transformations between five and four qubit states, gives a good example of this. Namely, the output ensemble may include states belonging to infinitely many different SLOCC classes.  This simple example shows that the set of general ensemble transformations is too rich to be analyzed in a general way. 

It is natural then to restrict the set of ensemble transformations by imposing physically motivated constraints. As in the bipartite setting, one of the most natural constraints is to consider faithful transformations, where the output ensemble has a high average fidelity with some desired target state. Note that considering states up to some finite fidelity blurs the aforementioned SLOCC classification. For instance, it is well known that there are states SLOCC equivalent to the GHZ state that are arbitrarily close to the W state. More generally, the set of states with fidelity greater than $1-\epsilon$ with any given state $\ket{\psi}$ may intersect infinitely many SLOCC classes. Nonetheless, there are some partial results in this setting. Ref. \cite{Acin2000_OptimalDistillGHZ} provides some numerical evidence that the optimal faithful transformation to the three qubit GHZ state is, as in the bipartite setting, deterministic. Moreover, approximate transformations have been considered under ``resource non-generating operations''  \cite{Regula2022_ApproximateNonEntanglingTransfo} - that is, the class of transformations which do not create entangled states from non-entangled states. 
Finally, some progress has also  been made in studying multi-copy, catalytic and multi-state transformations \cite{NevenEtAl2021_Multicopy, ChenWinterEtAl2010_TensorRankSLOCCCatalysis, ChenHayashi2011_MulticopyStochasticMultipartite},  as well as

\onecolumngrid 
\quad 
 \begin{table}[h]
 \begin{center}
\begin{tabular}{|c|l|l|l|}
 \hline
    Transformation & Notation & Bipartite & Multipartite \\ \hline
    \multirow{3}{*}{\begin{tabular}{c} SLOCC \\ equivalence  \end{tabular}} & \multirow{3}{*}{$\ket{\psi} \simeq_{SLOCC} \ket{\phi}$}  & \multirow{3}{*}{iff $\text{Sr}(\ket{\psi})=\text{Sr}(\ket{\phi} )$} & \multirow{3}{*}{\begin{tabular}{l l} General: & NP Hard \cite{Chitambar2008_SLOCCDecidabilityisNPHard} \\  Generic: & iff ratios of SLIPs coincide \cite{GourWallach2013_SLIPs5} \end{tabular}}      \\ 
      &    &    &   \\  
      &    &    &   \\ \hline 
    \multirow{3}{*}{\begin{tabular}{c} LU \\ equivalence  \end{tabular}} & \multirow{3}{*}{$\ket{\psi} \simeq_{LU} \ket{\phi}$}  & \multirow{3}{*}{iff $\vec{\psi}=\vec{\phi}$}  & \multirow{3}{*}{\begin{tabular}{l l} Qubits: & Algorithm to find LUs, if  they exist  \cite{Kraus2010_LUequivalence} \\   Generic: & iff same SLOCC class and $G=H$ \cite{SauerweinEtAl2018_AlmostAllStatesNotReachable} \end{tabular}}     \\
      &    &    &  \\ 
      &    &    &  \\ \hline
    \multirow{3}{*}{\begin{tabular}{c} Deterministic \\ LOCC \end{tabular}} & \multirow{3}{*}{$\ket{\psi} \rightarrow_{LOCC} \ket{\phi}$} & \multirow{3}{*}{iff $\vec{\psi}\preceq \vec{\phi}$ \cite{Nielsen1999_NielsensThm}}  & \multirow{3}{*}{\begin{tabular}{l l} General: & Necessary constraints from SEP \cite{Gour2011_SEP,HebenstreitEtAl2021_SEP1isnotSEP}\\  Generic: & Isolated \cite{GourKrausWallach2017_AlmostAllTrivStab, SauerweinEtAl2018_AlmostAllStatesNotReachable}  \end{tabular}}   \\
      &    &    &  \\ 
      &    &    &  \\ \hline
    \multirow{4}{*}{\begin{tabular}{c} Conclusive \\ LOCC \end{tabular}}  & \multirow{4}{*}{\begin{tabular}{l} $\ket{\psi}\rightarrow_{LOCC}$ \\ $\qquad \{(p, \ket{\phi}),...\}$ \end{tabular}}& \multirow{4}{*}{\begin{tabular}{l}$p_{max}=\min_l \frac{E_l(\ket{\psi})}{E_l(\ket{\phi})}$ and \\ achievable with OSBP \cite{Vidal1999_ProbablisticBipariteMonotones} \end{tabular}} & \multirow{4}{*}{\begin{tabular}{l l} General: & Bounds on $p_{max}$ from SEP \cite{Gour2011_SEP} \\ Generic: & $p_{max}(\ket{\psi}\rightarrow h\ket{\psi})=n_h^2/\lambda_{max}(H))$ and \\ & achievable with OSBP \cite{GourKrausWallach2017_AlmostAllTrivStab,SauerweinEtAl2018_AlmostAllStatesNotReachable} \end{tabular}}  \\
      &    &    &  \\ 
      &    &    &  \\ 
      &    &    &  \\ \hline 
     \multirow{3}{*}{\begin{tabular}{c} Ensemble \\ LOCC \end{tabular}}   & \multirow{3}{*}{\begin{tabular}{l} $\ket{\psi}\rightarrow_{LOCC}$ \\ $\qquad \{(p_i, \ket{\phi_i})\}$ \end{tabular}}&\multirow{3}{*}{\begin{tabular}{l} iff $E_l(\ket{\psi}) $ \\   $\quad \geq \sum_{i=1}^{m} p_i E_l(\ket{\phi_i}) \forall l$ \cite{JonathanPlenio1999_ReachingEnsemblesofPureStates} \end{tabular}} & \multirow{3}{*}{\begin{tabular}{l l} General: &  States in output ensemble may belong  \\  &  to $\infty$ different SLOCC classes \end{tabular}}  \\
      &    &    &  \\
      &    &    &  \\\hline
     \multirow{4}{*}{\begin{tabular}{c} Ensemble \\ LOCC within \\ SLOCC Class \end{tabular}} & \multirow{4}{*}{\begin{tabular}{l}$\ket{\psi}\rightarrow_{LOCC} \quad \ \ $ \\ $\qquad \{(p_i, \ket{\phi_i})\} \ st $  \\ $\qquad \ket{\phi_i}\cong_{SLOCC}\ket{\psi}$ \end{tabular}} & \multirow{4}{*}{  (As above \cite{JonathanPlenio1999_ReachingEnsemblesofPureStates})} &  \multirow{4}{*}{\begin{tabular}{l l} General: &  Necessary constraints from SEP (Thm \ref{thm:SEPensemble}) \\ Generic: & only if $E_{\vec{x}}^{\ket{\psi_s}}(\ket{\psi}) \ge \sum_i p_i E_{\vec{x}}^{\ket{\psi_s}}(\ket{\phi_i})$   \\ & $\forall \ket{\vec{x}}$ \cite{Sauerwein2018_DifferentiableTransfo} \end{tabular}} \\
      &    &    &  \\
      &    &    &  \\
      &    &    &  \\\hline
    \multirow{5}{*}{\begin{tabular}{c} Faithful \\ LOCC \end{tabular}}  & \multirow{5}{*}{\begin{tabular}{l}$\ket{\psi}\rightarrow_{LOCC} \quad \ \ $ \\  $\qquad \{(p_i, \ket{\phi_i})\} \ st $  \\ $\sum_i p_i F(\phi_i,\phi)>1-\epsilon$    \end{tabular}}  &\multirow{5}{*}{\begin{tabular}{l}  Optimal transformation \\ is always a deterministic \\ transformation. Moreover, \\ it leads to $p_{max}$ \cite{JonathanPlenio1999_ReachingEnsemblesofPureStates}   \\    \end{tabular}}  &    \multirow{5}{*}{\begin{tabular}{l l} $n=3:\ \ $ &Numerical Results \cite{Acin2000_OptimalDistillGHZ}\end{tabular}  }  \\
      &    &    &  \\
      &    &    &  \\
      &    &    &  \\ 
      &    &    &  \\ \hline
\end{tabular}   
\end{center}
\caption{Summary of some known results on state transformations under LOCC that are particularly relevant for the analysis of approximate transformations. For more details, see main text. List does not include many copy, catalytic, many-state or asymptotic transformations, nor results regarding $n=3,4$. In the multipartite column, we list two different cases. ``General''  refers to all states. ``Generic''  refers to known results with respect to the full-measure set of states referred to as generic in the preliminaries. We see that there are many open questions regarding faithful multipartite transformations. We will address them in the subsequent sections.}
\label{tab:knownresultssummary}
\end{table}
\twocolumngrid

\noindent asymptotic transformations \cite{LindenEtAl1999MREGS2, BennetEtAl2000_MREGS1} in the multipartite setting.

A summary of the known results on state transformations, as reviewed here, is given in Table \ref{tab:knownresultssummary}. We see that, despite their physical relevance, non-asymptotic, faithful transformations under LOCC remain largely unexplored in the multipartite setting. As in any real lab one never exactly transforms a pure state into another pure state but instead transforms some state nearby the desired initial state into an ensemble of states near the desired final state, the faithful setting is physically relevant.  Consequently, we set out to investigate this setting in this paper. However, before doing so, we make precise what we mean by ``nearby''.

\subsection{LU-Optimized Fidelity}
\label{sec:fidelity}

In this section, we make precise our notion of ``nearby'' by introducing the average LU--optimized fidelity of the ensemble with the target state and discussing some of its key properties. The fidelity (see Eq. (\ref{eq:fideqn})) is a measure of how indistinguishable two quantum states are \cite{Fuchs1996_Distinguishability}, with  $F$ being symmetric, basis independent, $F(\rho,\sigma)=1$ iff $\rho=\sigma$, $F(\rho,\sigma)=0$ iff the states have orthogonal support. However, the fidelity is obviously not invariant under LUs applied to only one of the states. Local unitaries do not affect entanglement and can be performed freely and reversibly. Therefore, physically it makes sense to optimise the fidelity over LUs, i.e., we consider
\begin{align}
    F_{LU}(\rho,\sigma)&=\max_{U\in LU} F(\rho , U \sigma U^\dagger).
\end{align}

 Throughout this paper then, $\epsilon$-close is meant in terms of the LU-optimized fidelity, i.e., $\sigma\in\mathcal{D}(\mathcal{H})$ is $\epsilon$-close to $\ket{\phi}$ if $F_{LU}(\sigma,\phi) \geq 1-\epsilon$. Correspondingly, an $\epsilon$-vicinity around $\ket{\psi}$ is the set of (generally mixed) states $\epsilon$-close to $\ket{\psi}$. Note that throughout this paper the $\epsilon$-vicinity  refers to (potentially mixed) states with support that is a subset of $\mathcal{H}$. That is given a $\psi\in\densitymatrices$, we define the $\epsilon$-vicinity as
 \begin{equation}
     \{ \rho \in \densitymatrices : F_{LU}(\rho,\psi)\ge 1-\epsilon\}
 \end{equation}
 
Note that the Hilbert space considered here is not necessarily the smallest dimensional Hilbert space containing $\ket{\psi}$. In particular, when studying transformations to a target state, the underlying Hilbert space, $\mathcal{H}$, is the smallest dimensional Hilbert space containing both the initial and final state. The $\epsilon$--vicinities are defined accordingly.

 An ensemble $\{(p_i,\sigma_i)\}$ is $\epsilon$-close to $\ket{\phi}$, if it is on average $\epsilon$-close, i.e.:
 \begin{equation}
     F_{av}\left(\{(p_i,\sigma_i)\},\phi\right)=\sum p_i F_{LU}(\sigma_i,\phi) \geq 1-\epsilon
 \end{equation}

Note, as LU's can be performed freely when considering LOCC transformations, wlog we can assume that, for a given input, $\rho$, and target state, $\ket{\phi}$, a map, $\Lambda$, is always LU-optimized, i.e., $ \Lambda(\rho)=\{(p_i, \sigma_i) \}$ and $F_{LU}(\sigma_i,\phi)=F(\sigma_i,\phi),\ \forall i$. For LU-optimized maps with a pure target state, we have $F_{av}(\Lambda(\rho),\phi)=F(\sum p_i \sigma_i,\phi)$. Moreover, it follows from the purity of $\phi$ that, for LU-optimized maps and any decomposition $\rho=\sum_j q_j \psi_j$, we have 
\begin{equation}
    F_{av}\big(\Lambda (\rho), \phi \big) \le \sum_j q_j F_{av}\big(\Lambda(\psi_j),\phi\big).
    \label{eq:Favisconvex}
\end{equation}

 We can succinctly summarise approximate transformations, then, as ensemble transformations which map an input state $\delta$-close to some ideal pure initial state, $\ket{\psi}$, to an output ensemble $\epsilon$-close to some ideal pure target state, $\ket{\phi}$.

Finally, let us mention here a well--known relation between the fidelity and the trace distance, $D(\rho,\sigma)=\frac{1}{2}||\rho-\sigma||_1$ (which is also a measure of indistinguishability), namely, \cite{FuchsDeGraaf1997_FidelityInequalities}
\begin{equation}
   1-\sqrt{F(\rho,\sigma)}\le D(\rho,\sigma) \le \sqrt{1-F(\rho,\sigma)} 
   \label{eq:FidInequality}.
\end{equation}
In the event one of the states is pure, the lower bound tightens to 
\begin{equation}
   1-F(\rho,\phi) \le D(\rho,\phi) 
   \label{eq:FidInequalityonepurestate},
\end{equation}
and when both states are pure, the upper bound in Eq. (\ref{eq:FidInequality}) is exact. However, unlike the fidelity, the trace distance has the advantage of being a metric. Furthermore, the LU-optimised trace distance, $\min_{U\in LU} D(U  \rho U^\dagger, \sigma)$ is also a metric (between LU orbits) as can be easily verified \footnote{This can be easily seen as follows. $
    \min_{U\in LU}  D(U  \rho U^\dagger, \sigma) = \min_{U,V\in LU} D(U\rho U^\dagger, V\rho V^\dagger)  \le  D(U_0\rho U_0^\dagger, V_0^\dagger \rho V_0)   \le  D(U_0\rho U_0^\dagger, \eta) +  D(\eta, V_0^\dagger \rho V_0) =\min_{U\in LU} D(U\rho U^\dagger, \eta) + \min_{V\in LU} D(V\eta V^\dagger, \sigma)$
where $U_0, V_0$ are the unitaries which minimise $D(U\rho U^\dagger, \eta)$ and $D(V\eta V^\dagger, \sigma)$ respectively, and we have used the basis independence and metric properties of $D$.}.

\section{General Properties of Approximate  Transformations }

In this section, we define precisely approximate transformations and illuminate some of their general properties. As discussed in the preliminaries, approximate transformations are a subset of ensemble transformations. Consequently, we begin by introducing several sets of physically-motivated types of ensemble transformation. Next, as motivated above, we constrain these ensemble transformations by imposing that the input states and output ensembles are near - wrt to the LU-optimised (average) fidelity - to some ideal, pure initial and target states respectively, thereby restricting our consideration to types of approximate transformations. Having defined precisely approximate transformations, we recap the bipartite results in light of these definitions. We then start to consider the general properties of approximate transformations, which will make our subsequent analysis of the multipartite setting easier. We show that, when studying these transformations, we are in fact justified in restricting ourselves to transformations where the initial state is pure. Furthermore, we show that, unsurprisingly, it is sufficient to consider LOCC protocols with only finitely many rounds of communication. We also prove that approximate transformations within an SLOCC class can approximate arbitrarily well general approximate transformations. 

\subsection{Types of Ensemble Transformations}

To begin, we consider LOCC transformations, $\Lambda$, which transform a given initial pure state, $\ket{\psi}$, to some finite ensemble of pure states, $\{(p_i,\ket{\phi}_i)\}_{i=1}^m$, i.e., $\Lambda(\psi)=\sum_{i=1}^m p_i \ket{\phi_i}\bra{\phi_i}\otimes \ket{i}\bra{i}$, where $\sum_{i=1}^m p_i =1$. Throughout this paper, we make the physically motivated assumption that any measurement performed by a single party during a given round of an LOCC protocol only has finitely many outcomes. To ease the notation, we write $\Lambda(\psi)= \{(p_i, \ket{\phi}_i)\}_{i=1}^m$ for an ensemble transformation. That is, we consider
\begin{align}
    T_{ens}(\psi)&=\big\{\Lambda \in LOCC:  \Lambda(\psi)=\{(p_i,\ket{\phi}_i)\}_{i=1}^m \big\}
    \label{eq:genensembletransfo}.
\end{align}  
Following the discussion in the preliminaries, we also consider the following physically-relevant subsets of $T_{ens}(\psi)$. Firstly, we have deterministic transformations, i.e., those where there is only one state in the output ensemble \footnote{Here and in the following, we consider all states $\ket{\phi_i}$ in the ensemble to be distinct. 
Note, this is not a restriction. An ensemble that has combined all identical outcomes is reachable via an LOCC protocol iff the original ensemble is reachable.}: 
\begin{align}
    T_{det}(\psi)=\big\{\Lambda \in T_{ens}(\psi) : \exists\ \phi : \Lambda(\psi)=\{(1,\ket{\phi})\}\big\}.
\end{align}
Note, the output of these deterministic transformations need not be in the same SLOCC class as $\ket{\psi}$. For instance, any transformation to a product state is always included in $T_{det}(\psi)$.

Secondly, we have optimal conclusive transformations. Here one considers a specific target state, $\ket{\phi}$. Optimal conclusive transformations are then transformations of $\ket{\psi}$ which reach $\ket{\phi}$ with the maximum success probability, $p_{max}(\ket{\psi}\rightarrow \ket{\phi})=p_{max}$; i.e., if $p_{max}(\ket{\psi}\rightarrow \ket{\phi})>0$ then we define
\begin{align}
    T_{pmax}(\psi,\phi)&=\big\{ \Lambda \in T_{ens}(\psi) : \nonumber \\
    & \quad \qquad   \Lambda(\psi) =\{(p_{max},\ket{\phi}),...\}\big\},
\end{align}
and otherwise $T_{pmax}(\psi,\phi)=\emptyset$. In Ref. \cite{Sauerwein2018_DifferentiableTransfo}, the optimal intermediate states -- i.e., those that can be reached from the initial state without reducing the overall maximum probability of transforming to the final state -- have been characterized for the bipartite and multipartite case. For this reason, we will not focus too much on this set of transformations in this paper.

Thirdly, as motivated in the preliminaries, we can consider transformations within an SLOCC class. That is, we consider
\begin{align}
    &T_{ens-SLOCC}(\psi)\nonumber\\
    &\quad =\{ \Lambda \in T_{ens}(\psi) : \Lambda(\psi) =\{(p_i,\ket{\phi_i})\}, \mbox{ with } \nonumber\\
    &\qquad \qquad \qquad \qquad \qquad \qquad \ket{\phi_i} \cong_{SLOCC} \ket{\psi}\  \forall i \}.
\end{align}

Finally, we have deterministic transformations within an SLOCC class (i.e., deterministic transformations excluding transformations to, for example, product states) \footnote{For completeness, we note that the set of conclusive transformations within an SLOCC class, i.e., $T_{pmax-SLOCC}(\psi,\phi)$, is by definition empty. This is because any outputs that are not the target state, $\ket{\phi}$, yet are SLOCC equivalent to $\ket{\phi}$ can be converted to $\ket{\phi}$ with some non-zero probability, thereby contradicting the assumption that the transformation achieves $\ket{\phi}$ with probability $p_{max}(\ket{\psi}\rightarrow\ket{\phi})$.}. We have
\begin{align}
    &T_{det-SLOCC}(\psi)\ = T_{det}(\psi)\cap T_{ens-SLOCC}(\psi). 
\end{align}

\subsection{Approximate and Ensemble Transformations and Optimal Transformations}
\label{sec:definingTsets}

Having identified five types of ensemble transformations of interest, we now restrict these transformations to define the sets of approximate transformations that we study in this paper. In correspondence to the sets of physically relevant transformations introduced above, we define the sets of approximate transformations with respect to the $\delta$ and $\epsilon$-vicinities around the initial and final state respectively. That is, given $\ket{\psi},\ket{\phi}\in\mathcal{H}$, we define the set 
\begin{align}
    T_{ens}^{\delta,\epsilon}(\psi,\phi)& =\{ \Lambda \in T_{ens}(\tilde{\psi}) : |\tilde{\psi}\rangle \in \mathcal{H},\ \nonumber\\
    & \qquad F_{LU}(\tilde{\psi},\psi)\ge 1-\delta,  F_{av}(\Lambda(\tilde{\psi}),\phi)\ge 1-\epsilon \}.
    \label{eq:TensDeltaEpsilon}
\end{align}

The sets $T_{det}^{\delta,\epsilon}(\psi,\phi)$, $ T_{pmax}^{\delta,\epsilon}(\psi,\phi)$ \footnote{To be more precise, naturally $ T_{pmax}^{\delta,\epsilon}(\psi,\phi)$ consists of conclusive transformations in  $T_{ens}^{\delta,\epsilon}(\psi,\phi)$. That is, $T_{pmax}^{\delta,\epsilon}(\psi,\phi)=\{\Lambda\in T_{pmax}(\tilde{\psi},\tilde{\phi}) : F_{LU}(\tilde{\psi},\psi)\ge 1-\delta, F_{LU}(\tilde{\phi},\phi)\ge 1-\epsilon,\ F_{av}(\Lambda(\tilde{\psi}),\phi)\ge 1-\epsilon\} \subseteq T_{ens}^{\delta,\epsilon}(\psi,\phi) $.}, $T_{ens-SLOCC}^{\delta,\epsilon}(\psi,\phi)$ and $T_{det-SLOCC}^{\delta,\epsilon}(\psi,\phi)$ are all defined similarly. These transformations are depicted in Fig. \ref{fig:GeneralTransfo}. Note, that in the case of $\delta=0$, one starts exactly with $\ket{\psi}$.  

For each of these sets of approximate transformations, it is natural to ask what is the optimal transformation. Therefore, we define
\begin{equation}
     F_X^{\delta}(\psi\rightarrow \phi)= \max_{\tilde{\psi} : F_{LU}(\psi,\tilde{\psi})\ge 1-\delta} \sup_{\Lambda\in T_X(\tilde{\psi})} F_{av}(\Lambda(\tilde{\psi}),\phi),
    \label{eq:Fmax}
 \end{equation}
 where $X$ is a stand-in label for each transformation type.

Note, that the optimal fidelity is defined via the supremum as it is not in general the case that the set of transformations is closed. Also, note that, if $\delta=0$, then $F^0_{ens}$ is simply the optimal fidelity achievable via a faithful transformation of $\ket{\psi}$, as studied in the bipartite case in Ref. \cite{VidalJonathanNielsen2000_ApproxLOCC}. We use $F_{ens}$ as shorthand for $F_{ens}^0$. Moreover, note that, due to the presence of the supremum, the optimal fidelities and the sets of transformations are subtly related: given two types of transformations, $X$ and $Y$,  $F_X^{\delta}(\psi\rightarrow\phi)=F_Y^{\delta}(\psi\rightarrow\phi)$ does not imply that $T_X^{\delta,\epsilon}(\psi, \phi)=T_Y^{\delta,\epsilon}(\psi, \phi)$ for $\epsilon=1-F_X^{\delta}(\psi\rightarrow \phi)$; nor does the existence of an $\epsilon>0$ such that $T_X^{\delta,\epsilon}(\psi, \phi)\ne \emptyset$ whilst $T_Y^{\delta,\epsilon}(\psi, \phi)=\emptyset$ imply that  $F_X^{\delta}(\psi\rightarrow\phi)>F_Y^{\delta}(\psi\rightarrow\phi)$.

\begin{figure}
    \subfloat[\label{subfiga}]{\includegraphics[width=.95\linewidth]{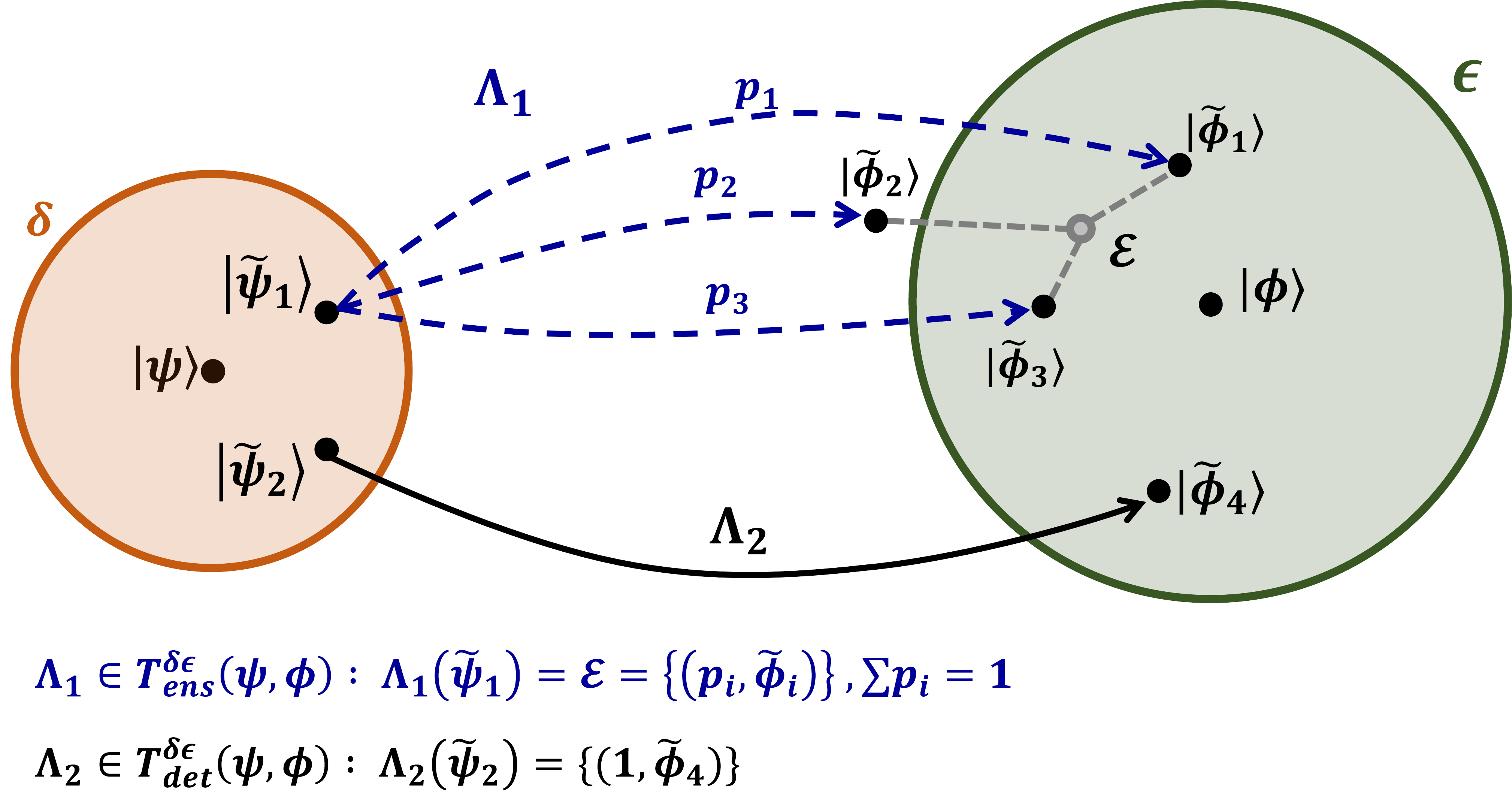}}
    
    \medskip
    
    \subfloat[\label{subfigb}]{\includegraphics[width=.95\linewidth]{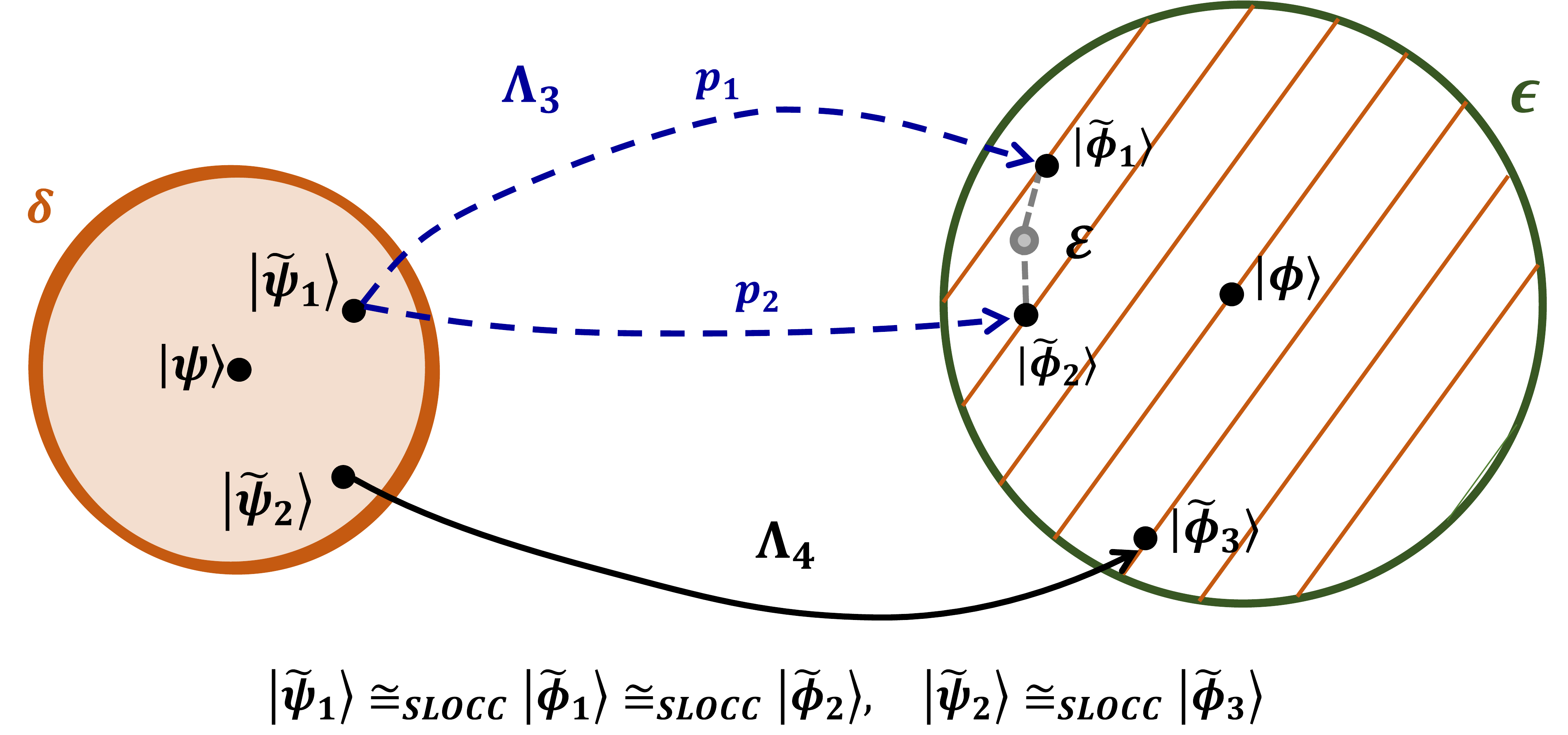}}
    \caption{Approximate transformations: Blue dashed lines represent branches of an ensemble LOCC transformation. Black solid lines represent deterministic LOCC transformations. Fig. \ref{subfiga} concerns general faithful transformations. The orange (green) disk represents the $\delta$ ($\epsilon)$ vicinity around the initial (final) state. $\Lambda_1\in T_{ens}^{\delta,\phi}(\psi,\phi)$ maps a state $|\tilde{\psi}_1\rangle$ (which is $\delta$-near to $\ket{\psi}$) to an ensemble of states, $\{(p_i,|\tilde{\phi}_i\rangle)\}$, such that the average fidelity  with $\ket{\phi}$ (represented by the grey dot) is greater than or equal to $1-\epsilon$. $\Lambda_2\in T_{det}^{\delta,\epsilon}(\psi,\phi)$ is a map that corresponds to a protocol which deterministically transforms $|\tilde{\psi}_2\rangle$ to $|\tilde{\phi}_4\rangle$. Transformation in $T_{pmax}^{\delta,\epsilon}(\psi,\phi)\subseteq T_{ens}^{\delta,\epsilon}$ correspond to transformations with the structure of $\Lambda_1$ but with the additional constraint that $p_i=p_{max}(|\tilde{\psi}\rangle \rightarrow |\tilde{\phi}_i\rangle)$ for one of the outputs in the $\epsilon$-vicinity (i.e., in the above example, either $|\tilde{\phi}_1\rangle$ or $|\tilde{\phi}_3\rangle$). Fig. \ref{subfigb} considers transformations within SLOCC classes. In this case, we only consider output states that are SLOCC equivalent to a state in the initial $\delta$-vicinity (indicated by the incomplete orange fill in the final vicinity). $\Lambda_3 \in T_{ens-SLOCC}^{\delta,\epsilon}(\psi,\phi)$ transforms a state $|\tilde{\psi_1}\rangle$ to an ensemble of states SLOCC equivalent to $|\tilde{\psi_1}\rangle$. $\Lambda_4\in T_{det-SLOCC}^{\delta,\epsilon}(\psi,\phi)$ deterministically transforms $|\tilde{\psi}_2\rangle$ to an SLOCC equivalent state $|\tilde{\phi}_3\rangle$.}
    \label{fig:GeneralTransfo}
\end{figure}

\subsection{Recap of Bipartite Results}

Having defined these sets, let us recap how they relate to one another in the bipartite case. As mentioned, Ref. \cite{VidalJonathanNielsen2000_ApproxLOCC} studied faithful transformations in the bipartite setting. There it was shown that, for any pure state input, the optimal fidelity achievable with a faithful transformation, $F$, can always be achieved with a deterministic transformation to a pure state whose fidelity with the target state is $F$. We can express these results using the terminology introduced above. Firstly, note that if $\delta>0$, then the above results still hold as we simply optimise over pure states $\delta$-close to the initial state. Thus, for all $\delta\ge0$, we have
\begin{equation}
    F_{ens}^\delta(\psi\rightarrow \phi) = F_{det}^\delta(\psi\rightarrow \phi), 
    \label{eq:bipartiteDetIsOptimal}
\end{equation}
where the supremum in Eq. (\ref{eq:Fmax}) is always obtainable. Consequently, whenever an approximate ensemble transformation is possible, an approximate deterministic transformation is also possible; that is,
\begin{equation}
    T_{ens}^{\delta,\epsilon}(\psi,\phi)\ne\emptyset \ \text{iff} \ T_{det}^{\delta,\epsilon}(\psi,\phi)\ne\emptyset.
\end{equation}

We can also consider bipartite SLOCC preserving transformations. Here, we note that the optimal faithful transformation with respect to $\ket{\psi}$ and $\ket{\phi}$, transforms $\ket{\psi}$ to a state with Schmidt rank equal to the minimum of the Schmidt ranks of $\ket{\psi}$ and $\ket{\phi}$ \cite{VidalJonathanNielsen2000_ApproxLOCC}. Now consider $\ket{\psi}\cong_{SLOCC}\ket{\phi}$ to be fully-entangled and consider a state $|\tilde{\psi}\rangle$ $\delta$-close to $\ket{\psi}$ (note, $|\tilde{\psi}\rangle$ need not be fully-entangled but can at most have the same Schmidt rank as $\ket{\psi}$). 
Let $\Lambda\in T_{det}(\tilde{\psi})$ be the deterministic transformation that achieves the optimal fidelity with respect to $\ket{\phi}$. Then $\Lambda$ must output a state that has the same Schmidt rank as $|\tilde{\psi}\rangle$ (as the Schmidt rank of $\ket{\phi}$ is maximal and hence, might be larger than the one of $|\tilde{\psi}\rangle$). Therefore, $\Lambda\in T_{det-SLOCC}(\tilde{\psi})$. This holds for all $|\tilde{\psi}\rangle$ $\delta$-close to $\ket{\psi}$. This is all to say, if $\ket{\psi}\cong_{SLOCC}\ket{\phi}$ are fully-entangled, then
\begin{align}
   F_{ens}^\delta(\psi \rightarrow \phi)&=F_{ens-SLOCC}^\delta(\psi \rightarrow \phi)\\
   &=F_{det-SLOCC}^\delta(\psi \rightarrow \phi).
\end{align}
Moreover, we have, for all $\delta\ge 0$ and $\epsilon \ge 0$,
\begin{align}
  T_{ens}^{\delta,\epsilon}(\psi,\phi)\ne\emptyset \    & 
  \text{iff}\ T_{ens-SLOCC}^{\delta,\epsilon}(\psi,\phi)\ne\emptyset\\
  &\text{iff} \ T_{det-SLOCC}^{\delta,\epsilon}(\psi,\phi)\ne\emptyset.
\end{align}
In this sense, the hierarchy of approximate transformations collapses for fully-entangled states in the bipartite setting.

\subsection{Simplifications of Approximate Transformations}
We now make some general observations which simplify our subsequent analysis in the multipartite setting. Namely, we first show that it is indeed sufficient to study approximate transformations of pure states. Then we show it is sufficient to consider LOCC protocols with finitely-many rounds of communication. Finally, we show that faithful transformations within an SLOCC class approximate general faithful transformations arbitrarily well.

\subsubsection{Mixed vs Pure Initial States}

We begin by showing that it is sufficient to consider transformations with a pure state input. More precisely, we have:

\begin{lemma} Let $\rho$ be $\delta$-close to $\ket{\psi}$ and $\Lambda(\rho)$ be $\epsilon$-close to $\ket{\phi}$, with $\Lambda\in T_{ens}(\psi)$. Then for all $\alpha, \beta >0$ such that $\frac{1}{\alpha}+\frac{1}{\beta} \le 1$ there exists some state $|\tilde{\psi}\rangle$ and transformation $\tilde{\Lambda}\in T_{ens}(\psi)$ such that $|\tilde{\psi}\rangle$ is $(\alpha \delta)$-close to $\ket{\psi}$ and such that $\tilde{\Lambda}(\tilde{\psi})$ is $(\beta \epsilon)$-close to $\ket{\phi}$.\end{lemma}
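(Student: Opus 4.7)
The plan is a Markov-type averaging argument applied to a pure-state decomposition of $\rho$. Before starting, I would use the freedom of LU-optimization on both sides: since $F_{LU}(\rho,\psi)\ge 1-\delta$, there is a local unitary $U$ with $F(\rho,U\psi U^\dagger)\ge 1-\delta$, and replacing $\ket{\psi}$ by $U\ket{\psi}$ costs nothing because $F_{LU}$ is invariant under independent LU rotations of its two arguments. Analogously, by the remark preceding Eq.~(\ref{eq:Favisconvex}), I may take $\Lambda$ to be LU-optimized toward the target $\ket{\phi}$, so that the inequality (\ref{eq:Favisconvex}) is available in the useful direction.

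Next, I would decompose $\rho=\sum_j q_j\,\psi_j$ as a convex combination of pure states. Because (the rotated) $\ket{\psi}$ is pure, the input fidelity is linear in the decomposition: $F(\rho,\psi)=\sum_j q_j\, F(\psi_j,\psi)\ge 1-\delta$. Defining the input defects $a_j:=1-F_{LU}(\psi_j,\psi)\le 1-F(\psi_j,\psi)$, this yields $\sum_j q_j a_j\le\delta$. For the output side, Eq.~(\ref{eq:Favisconvex}) gives
\begin{equation}
    F_{av}\bigl(\Lambda(\rho),\phi\bigr)\le\sum_j q_j\, F_{av}\bigl(\Lambda(\psi_j),\phi\bigr),
\end{equation}
so the output defects $b_j:=1-F_{av}(\Lambda(\psi_j),\phi)$ satisfy $\sum_j q_j b_j\le\epsilon$.

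The core step is a union/Markov bound. Suppose for contradiction that no index $j$ satisfies \emph{both} $a_j\le\alpha\delta$ and $b_j\le\beta\epsilon$. Let $A=\{j:a_j>\alpha\delta\}$ and $B=\{j:b_j>\beta\epsilon\}$, so $A\cup B$ covers all indices. Markov's inequality applied to the two averages gives $\sum_{j\in A}q_j<1/\alpha$ and $\sum_{j\in B}q_j<1/\beta$; but $\sum_{j\in A}q_j+\sum_{j\in B}q_j\ge\sum_{j\in A\cup B}q_j=1$, contradicting $1/\alpha+1/\beta\le 1$. Hence there is an index $j^{\ast}$ with both defects small, and setting $|\tilde{\psi}\rangle:=|\psi_{j^{\ast}}\rangle$ and $\tilde{\Lambda}:=\Lambda$ finishes the argument: $\tilde{\Lambda}\in T_{ens}(\psi)$ trivially since $\Lambda$ was unchanged, and applied to the pure state $|\tilde{\psi}\rangle$ it automatically produces a finite pure ensemble.

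I do not foresee any real obstacle: the argument is essentially elementary. The only subtleties are (i) to correctly invoke Eq.~(\ref{eq:Favisconvex}) in the direction that upper-bounds the mixed-input fidelity by a convex average of pure-input fidelities (rather than the other, useless, direction), and (ii) to handle the LU-optimization uniformly so that the defects are all defined with respect to $F_{LU}$ rather than $F$, which is what lets the same $\Lambda$ work for the pure representative $|\tilde{\psi}\rangle$.
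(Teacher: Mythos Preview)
Your proof is correct and follows essentially the same approach as the paper's: decompose $\rho$ into pure states, use linearity of the input fidelity and Eq.~(\ref{eq:Favisconvex}) for the output, then run a Markov-type averaging argument to locate a single pure component that is good on both sides. The only cosmetic difference is that you phrase the contradiction as a symmetric union bound on the two ``bad'' sets $A$ and $B$, whereas the paper first bounds the input-good weight $q>1-1/\alpha$ and then derives $q<1/\beta$; the resulting contradiction $1/\alpha+1/\beta>1$ is identical.
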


\proof{
Let $\alpha, \beta >0$ such that $\frac{1}{\alpha}+\frac{1}{\beta} \le  1$, which implies that $\alpha,\beta> 1$. We can assume wlog that $F_{LU}(\rho,\phi)=F(\rho,\phi)$, $\Lambda(\rho)=\{(p_i,\sigma_i)\}$ and $F_{LU}(\sigma_i,\phi)=F(\sigma_i,\phi),\ \forall i$.  Let $\rho$ decompose as $\rho=\sum_{i\in I} q_i\psi_i$. Let us now define a subset $S\subset I$, such that $F(\psi_i,\psi)<1-\alpha\delta$ (i.e., $S$ is the set of states in the decomposition that are outside the $\alpha\delta$-vicinity). Note, naturally this set must be a strict subset (otherwise $F(\rho,\psi)<1-\alpha\delta<  1-\delta$). If $S$ is empty, then it is easy to see that the claim must hold for at least one $\ket{\psi_i}$. If not, then we have
\begin{align}
    1-\delta &\le \sum_{i\in I\setminus S} q_i F(\psi, \psi_i) + \sum_{i\in S} q_i F(\psi,\psi_i)\nonumber\\
    &< q + (1-q)(1-\alpha \delta).
\end{align}
where $q=\sum_{i\in I\setminus S} q_i$. Rearranging, we have
\begin{equation}
    q> 1-\frac{1}{\alpha}.
    \label{eq:ineq1}
\end{equation}
We now complete the proof by showing that at least one of the pure states in $I\setminus S$ has the property that $F_{av}(\Lambda(\psi_i),\phi)\ge 1-\beta \epsilon$. To show this, assume the contrapositive, i.e., $F_{av}(\Lambda(\psi_i),\phi)< 1-\beta \epsilon, \forall i\in I\setminus S$. Then we have
\begin{align}
    1-\epsilon\ &\le F_{av}(\Lambda(\rho),\phi) \\
                &\le \sum_{i\in I\setminus S} q_i F_{av}(\Lambda(\psi_i),\phi)+ \sum_{i\in S} q_i F_{av}(\Lambda(\psi_i),\phi)\\
                & < q (1-\beta\epsilon) + (1-q),
\end{align}
where we have used Eq. (\ref{eq:Favisconvex}) to reach the second line. Rearranging, we have
\begin{equation}
    q < \frac{1}{\beta}.
    \label{eq:ineq2}
\end{equation}
Combining Eq. (\ref{eq:ineq1}) and Eq. (\ref{eq:ineq2}), we deduce $\frac{1}{\alpha}+\frac{1}{\beta}>1$, which contradicts our assumptions. Thus, there must exist a pure state $\ket{\psi_i}\in I\setminus S$ that is $\alpha \delta$-close to $\ket{\psi}$ and is mapped $\beta \epsilon$-close to $\ket{\phi}$. $\qedsymbol$}

This lemma tells us that approximate transformations with mixed state inputs can be studied by considering approximate transformations with pure state inputs but with slightly bigger $\delta$ and $\epsilon$-vincinities ($\alpha=\beta=2$ is sufficient). This is why, when defining $T_{ens}(\psi)$, we consider pure state inputs. Note that a less tight constraint follows directly from Eq. (\ref{eq:vidalinequality}) (see Appendix \ref{sec:AppendixusingVidalsInequality}).

\subsubsection{LOCC vs $\text{LOCC}_\mathbbm{N}$}
\label{sec:LOCCvsLOCCn}

Not only is it sufficient to restrict the initial state to be pure, but it is also sufficient to restrict ourselves to LOCC transformations with only finitely many rounds of communication, $\text{LOCC}_{\mathbbm{N}}$ \cite{Chitambar2014_EverythingYouWantedToKnow, HebenstreitEtAl2021_SEP1isnotSEP}. To see this, let us first recall the definition of an (infinite)-round LOCC protocol \cite{Chitambar2014_EverythingYouWantedToKnow, HebenstreitEtAl2021_SEP1isnotSEP}. A map, $\Lambda$, is in LOCC, if there exists a sequence of $\text{LOCC}_{\mathbbm{N}}$ protocols with corresponding maps $\left(\Lambda_i\right)_{i \in \mathbb{N}}$ with the following three properties: (a) the $i$th protocol consists of $i$ rounds; (b) the first $i$ rounds in the $(i+1)$th protocol coincide with the $i$th protocol; and (c) the sequence $\left( \Lambda_i \right)_{i\in\mathbbm{N}}$ converges to $\Lambda$ with respect to the diamond norm \cite{Chitambar2014_EverythingYouWantedToKnow}. This implies that, for all states $\ket{\psi}$, $\lim_{i \rightarrow \infty}D(\Lambda_i(\ket{\psi}\bra{\psi}), \Lambda(\ket{\psi}\bra{\psi})) = 0$, where $D$ is the trace distance. This means that for all $\eta>0$, there exists an $N \in \mathbb{N}$ such that, for all $i \geq N$, $D(\Lambda_i(\ket{\psi}\bra{\psi}), \Lambda(\ket{\psi}\bra{\psi}))< \eta$. 

Applying this notion to approximate transformations, it follows that for any infinite round protocol that gets $\epsilon$-close to a target state, $\ket{\phi}$, and $\forall \eta>0$, the protocol can be truncated to a finite number of rounds, such that (by the metric property of the trace norm) the truncated protocol yields a state that is $(\epsilon + \eta)$-close to $\ket{\phi}$. Hence, up to arbitrarily small $\eta$, it is sufficient to study $\text{LOCC}_{\mathbbm{N}}$. Moreover, as we assume throughout this paper that each measurement performed by a party during a round of an LOCC protocol only has finitely many outputs, the LOCC$_\mathbbm{N}$ protocol has only finitely many outputs. Thus, we see why we may consider finite ensembles as outputs when defining $T_{ens}$.

\subsubsection{Ensemble Transformations within an SLOCC Class vs General Ensemble Transformations}

We now show that in fact faithful transformations within an SLOCC class, i.e., $\Lambda\in T_{ens-SLOCC}(\psi)$, can approximate general faithful transformations, i.e., $\tilde{\Lambda}\in T_{ens}(\psi)$, arbitrarily well. Specifically, we prove the following theorem.

\begin{theorem}
    \label{thm:Fens=FensSLOCC}
    For all states, $\ket{\psi},\ \ket{\phi}$, and $\delta>0$
    \begin{equation}
        F_{ens}^\delta(\psi,\phi) = F_{ens-SLOCC}^\delta(\psi,\phi)
    \label{eq:Fens=FensSLOCC}
    \end{equation}
\end{theorem}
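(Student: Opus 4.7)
The plan is to establish this by proving the non-trivial direction $F_{ens}^\delta(\psi,\phi)\le F_{ens-SLOCC}^\delta(\psi,\phi)$, since the reverse inequality is immediate from the inclusion $T_{ens-SLOCC}(\tilde{\psi})\subseteq T_{ens}(\tilde{\psi})$. The key idea is that every general ensemble protocol $\Lambda\in T_{ens}(\tilde{\psi})$ can be approximated arbitrarily well by a protocol $\Lambda_\eta\in T_{ens-SLOCC}(\tilde{\psi})$ obtained by perturbing each local measurement so that its Kraus operators become invertible. Since invertible local operators preserve the SLOCC class, every branch of $\Lambda_\eta$ maps $\tilde{\psi}$ to an SLOCC-equivalent output, yet by continuity the output ensemble converges to that of $\Lambda$.

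Concretely, I would first invoke the preceding subsection on $\text{LOCC}_\mathbbm{N}$ to reduce to a finite-round protocol at the cost of an arbitrarily small loss in fidelity. Then, for each round, if the acting party's Kraus operators $\{K_r\}_r$ satisfy $\sum_r K_r^\dagger K_r=I$, I would use the density of $GL(d,\mathbbm{C})$ in $\text{Mat}(d,\mathbbm{C})$ to pick invertible $\tilde{K}_r$ with $\|\tilde{K}_r-K_r\|\le\eta$. Setting $Y=\sum_r\tilde{K}_r^\dagger\tilde{K}_r$, which is close to $I$ and hence positive for small $\eta$, I would define the corrected operators $K_r'=\tilde{K}_r Y^{-1/2}$. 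These are local (as $Y^{-1/2}$ acts only on the party in question), invertible (as a product of invertible operators), and satisfy $\sum_r K_r'^\dagger K_r'=I$. Moreover, a routine perturbation calculation gives $\|K_r'-K_r\|=O(\eta)$. Applying this renormalization round by round produces a valid $\text{LOCC}_\mathbbm{N}$ protocol $\Lambda_\eta$ each of whose branches is a composition of invertible local operators, so every state in the output ensemble is SLOCC-equivalent to $\tilde{\psi}$ and hence $\Lambda_\eta\in T_{ens-SLOCC}(\tilde{\psi})$.

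What remains is to verify that $F_{av}(\Lambda_\eta(\tilde{\psi}),\phi)\to F_{av}(\Lambda(\tilde{\psi}),\phi)$ as $\eta\to 0$. For branches with positive probability under $\Lambda$, both the probability $\|A_i\tilde{\psi}\|^2$ and the normalized output $A_i\tilde{\psi}/\|A_i\tilde{\psi}\|$ depend continuously on the composed branch operator $A_i$, which converges to its unperturbed counterpart. Branches that were annihilating for $\tilde{\psi}$ in $\Lambda$ may acquire nonzero probability in $\Lambda_\eta$, but their probabilities scale to zero with $\eta$, so their contribution to the average fidelity vanishes. Since $F_{LU}$ is continuous on $\mathcal{D}(\mathcal{H})$ (the maximum over the compact group of LUs of a continuous function), taking $\eta\to 0$ and then a supremum over $\Lambda$ followed by a maximum over $\tilde{\psi}$ in the $\delta$-vicinity yields $F_{ens-SLOCC}^\delta(\psi,\phi)\ge F_{ens}^\delta(\psi,\phi)$, completing the proof.

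The main technical obstacle I anticipate is the round-by-round bookkeeping: one must ensure that the accumulated error along every branch remains under control and that the classical-communication structure (branch-dependent future measurements) is preserved by the local perturbations. Since in an $\text{LOCC}_\mathbbm{N}$ protocol there are only finitely many rounds and finitely many measurement outcomes per round, uniform continuity suffices, but the treatment of branches on which the original protocol has vanishing probability requires some care to show their contribution indeed disappears in the limit rather than being amplified by the normalization.
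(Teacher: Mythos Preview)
Your proposal is correct and follows essentially the same strategy as the paper: reduce to $\text{LOCC}_{\mathbbm{N}}$, perturb each local measurement so that all Kraus operators become invertible (hence the protocol becomes SLOCC-preserving), and then invoke continuity of the branch outputs in the perturbation parameter.

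The paper differs only in the specific perturbation mechanism. It first reduces every local measurement to a sequence of two-outcome measurements, writes each pair as $M_i=U_iVD_iV^\dagger$ with $D_0,D_1\ge 0$ diagonal in a common basis, and then replaces $D_i$ by $\sqrt{(1-\chi)D_i^2+\chi D_{i\oplus1}^2}$, which is manifestly complete and invertible for all $\chi\in(0,1)$. Your density-plus-renormalization construction $K_r'=\tilde K_rY^{-1/2}$ achieves the same end without the two-outcome reduction and is arguably cleaner, at the cost of being less explicit. The paper also sidesteps your separate treatment of originally annihilating branches by working directly with the unnormalized outputs $\Lambda_i^{\epsilon,\chi}(\psi)$ and observing that $p_iF(\Lambda_i^{\epsilon,\chi}(\psi)/p_i,\phi)=\tr(\Lambda_i^{\epsilon,\chi}(\psi)\,\phi)$ is continuous in $\chi$; this trick absorbs the zero-probability branches automatically and might simplify your bookkeeping.
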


\begin{proof}
   Before proving this theorem, note that the equality in Eq. (\ref{eq:Fens=FensSLOCC}) clearly relies on the fact that the optimal fidelity is defined with respect to the supremum. Thus, we will show that, for any ensemble transformation which achieves a given average fidelity, there is an SLOCC preserving transformation that achieves an arbitrarily close average fidelity.
    
   Now, consider $n$-qudit  systems and set $\delta=0$ (we extend the argument to $\delta>0$ at the end). By Section \ref{sec:LOCCvsLOCCn}, for all $\epsilon>0$ there exists a finite round LOCC protocol \footnote{Which, as we always assume, has finitely many measurement outcomes per measurement round} with corresponding map, $\Lambda^{\epsilon}$, such that $F_{ens}(\psi,\phi)-F_{av}(\Lambda^\epsilon(\psi),\phi)\leq\epsilon$. Any such finite-round protocol is equivalent to another protocol in which all parties only ever apply two outcome measurements \cite{AnderssonOi2007_BinaryMeasurementsSufficient} (equivalent in the sense that for any input, the new protocol outputs the same outputs with the same probabilities). Therefore, wlog let $\Lambda^\epsilon$ correspond to such a protocol for each $\epsilon$. For any two-outcome measurement during this protocol with measurement operators $\{M_0,M_1\}$, we can use the polar decomposition to write $M_i = U_i Q_i$, with $U_i$ unitary and $Q_i\ge 0$. It follows from the completeness relation that $Q_0$ and $Q_1$ share a common eigenbasis. Thus we may write $M_i=U_i V D_i V^{\dagger}$, where $V$ is unitary and $D_i\ge 0$ is a positive, diagonal matrix. Note, we have two possibilities for each two outcome measurement: (a) both measurements are full-rank, (b) at least one $M_i$ is not full rank (at least one $D_i$ has at least one zero diagonal entry). At the end of the protocol, there will be finitely many outputs. We write $\Lambda^\epsilon=\sum_{i=1}^m \Lambda_i^\epsilon\otimes |i\rangle\langle i|$, where $\Lambda_i^{\epsilon}$ are the CP maps corresponding to the $m\in \mathbbm{N}$ final outputs \footnote{Note, in the protocol, we also implicitly assume that the parties never ``forget''  measurement outcomes. Thus, $\Lambda_i^{\epsilon}$ are Kraus-rank 1 CP maps.}.

We construct now an SLOCC-preserving protocol with outputs arbitrarily close to the original outputs and hence leading to the same average fidelity. To do so, we now modify each non-invertible measurement in the protocol above by an arbitrarily small amount, ensuring that all measurement operators are invertible. To this end, for each $\epsilon>0$, we consider a family of alternative finite-round protocols, $\Lambda^{\epsilon,\chi}$, parameterised by $\chi\in(0,1)$, by replacing each measurement above accordingly: in case (a), we retain the original measurement; in case (b), we implement instead the measurement $\{\tilde{M}_i^\chi\}$ with $\tilde{M}_i^\chi = U_i V \left(\sqrt{(1-\chi) D_i^2 + \chi D_{i\oplus 1}^2} \right) V^\dagger$ (with $\oplus$ indicating addition modulo 2). It is easy to verify these measurement operators satisfy the completeness relation $\forall\chi\in(0,1)$. Moreover, it is clear that $\tilde{M}_i^\chi$ are invertible $\forall\chi\in(0,1)$, as the diagonal entries of $D_0$ and $D_1$ are greater than or equal to zero and never vanish for the same entry (otherwise $\{M_0,M_1\}$ would not be complete). Therefore, $\Lambda^{\epsilon,\chi}\in T_{ens-SLOCC}(\psi)$, $\forall \chi\in (0,1)$.
   
    For the final step of the argument, note that, since there are finitely many rounds, for each $\epsilon$, each of the outputs of $\Lambda^{\epsilon,\chi}$, $\{\Lambda_i^{\epsilon,\chi}(\psi)\}_{i=1}^m$, is continuous in $\chi$ \footnote{Note, $\Lambda_i^{\epsilon,\chi}(\psi)$ are not normalised.}. In particular, $\forall\epsilon'>0$, $\exists\chi>0$ such that $D(\Lambda^{\epsilon,\chi}_i(\psi),\Lambda_i^\epsilon(\psi))\leq\epsilon'$, $\forall i \in \{1,..,m\}$. Moreover, $p_i F(\Lambda_i^{\epsilon,\chi}(\psi)/p_i,\phi) \equiv \tr(\Lambda_i^{\epsilon,\chi}(\psi)\ket{\phi}\bra{\phi})$ is also continuous in $\chi$ and thus so is $F_{av}(\Lambda^{\epsilon,\chi}(\psi),\phi)$.  Therefore, for each given $\epsilon$, it holds that $\forall\epsilon''>0$ $\exists\chi>0$ such that $|F_{av}(\Lambda^{\epsilon,\chi}(\psi))-F_{av}(\Lambda^\epsilon(\psi),\phi)|\leq\epsilon''$. 
    
    Thus, for any $\eta>0$, choose $\epsilon,\epsilon''>0$ such that $\epsilon+\epsilon''\leq\eta$, and then
    \begin{align}
        |F_{ens}(\psi,\phi)&-F_{av}(\Lambda^{\epsilon,\chi}(\psi))|\nonumber\\
            &\leq |F_{ens}(\psi,\phi)-F_{av}(\Lambda^\epsilon(\psi),\phi)| \nonumber\\
            &\qquad + |F_{av}(\Lambda^\epsilon(\psi),\phi)-F_{av}(\Lambda^{\epsilon,\chi}(\psi))| \nonumber\\
            &\leq\epsilon''+\epsilon\leq\eta,
    \end{align}
    which proves the result for $\delta=0$. 
    
    To extend to the $\delta>0$ case, it is sufficient to consider the maximum achievable fidelity from any state within the $\delta$-vicinity. Let $|\tilde{\psi}\rangle$ with $F(\tilde{\psi},\psi)\ge 1-\delta$ be the input state for this optimal transformation. Then the above argument applied to this transformation proves the claim.

\end{proof}

As a final comment, we recall that $\text{LOCC}_{\mathbbm{N}}$ ensemble transformations within an SLOCC class are a subset of SEP$_1$ ensemble transformations \cite{HebenstreitEtAl2021_SEP1isnotSEP} (see Appendix \ref{sec:AppendixSepEnsTransfo} for further details).\\

\section{Approximate transformations}

Having introduced approximate transformations and identified five physically relevant types ($T_{ens}^{\delta,\epsilon}(\psi,\phi)$ etc.), we now set out to better understand these transformations. A complete characterisation of these sets of transformations seems unfeasible. To see this, consider for instance the set $T_{det}^{\delta,\epsilon}(\psi,\phi)$. In order to characterise this set, one must characterise all deterministic transformations from any state in the $\delta$-vicinity around $\ket{\psi}$ to a state in the $\epsilon$-vicinity around $\ket{\phi}$. However, for any finite $\delta>0$, there may be an infinite number of SLOCC classes intersecting the $\delta$-vicinity around $\ket{\psi}$. Moreover, in the general cases, one must also consider transformations to non-fully-entangled states, which is challenging by itself. 

Nonetheless, in the coming sections, we will lay out the landscape of approximate transformations. We have seen that, in the bipartite setting, the hierarchy between these transformations collapses as the optimal fidelity is always achievable with a deterministic transformation. In the following, we will show that the multipartite landscape is considerably richer. In Section \ref{sec:MES}, we begin by considering the maximally entangled set (MES) under approximate transformations. Then in Section \ref{sec:delta0}, we consider transformations in which we start exactly with the initial state. That is, we set $\delta=0$. Here we start in Section \ref{sec:Delta0WithinAnSLOCCclass} with the simplest approximate transformations, i.e., $T_{ens-SLOCC}^{0,\epsilon}$ and $T_{det-SLOCC}^{0,\epsilon}$. We show that, unlike the bipartite case, there are ensemble transformations within an SLOCC class that are better than any deterministic transformation within an SLOCC class (see Fig \ref{fig:SLOCCensBeatsSLOCCdet}).

\begin{figure}[h]
  \includegraphics[width=.95\linewidth]{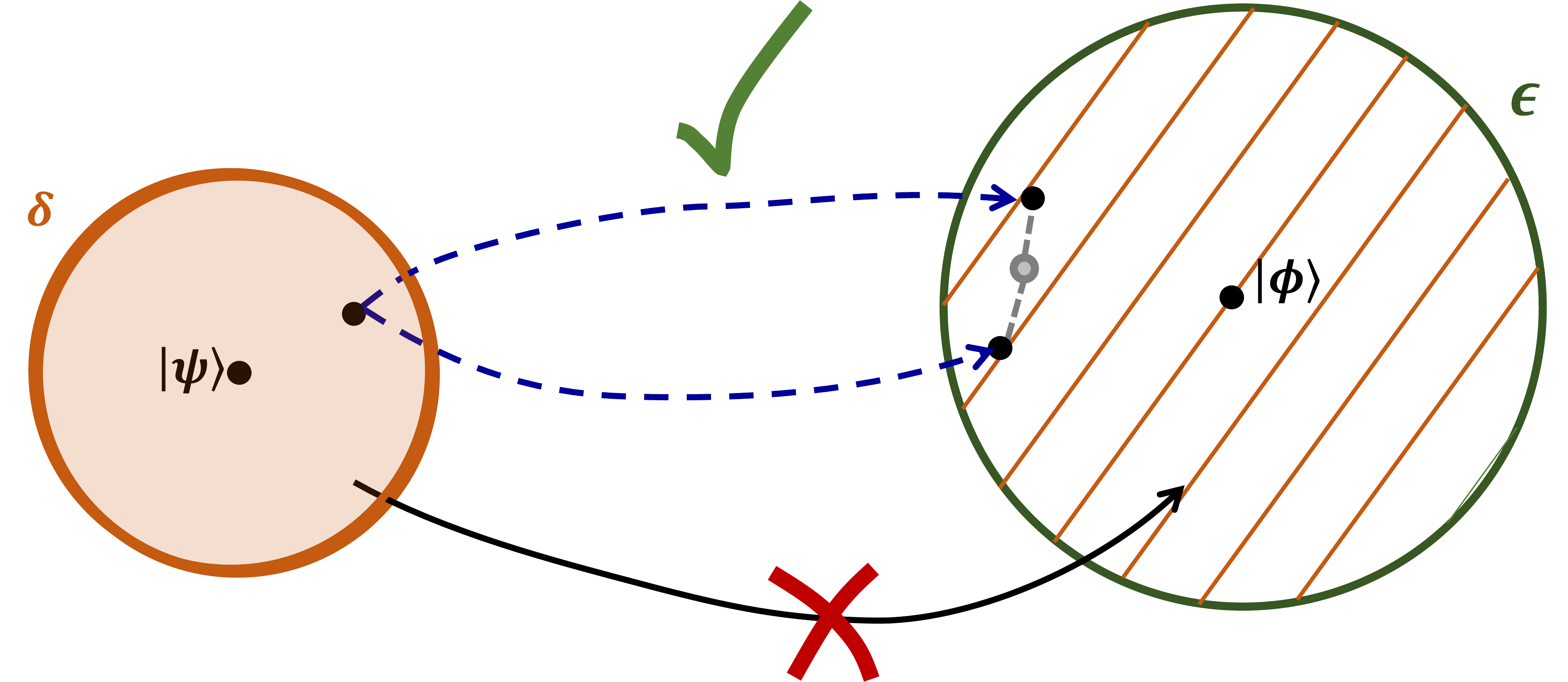}
  \caption{Approximate transformations within an SLOCC class are more powerful than deterministic transformations within an SLOCC class.}
  \label{fig:SLOCCensBeatsSLOCCdet}
\end{figure}

From here, we move on to study general faithful transformations in Section \ref{sec:Delta0General}. Here we study how $T_{ens}^{0,\epsilon},$ and $T_{det}^{0,\epsilon}$ relate to one another. We show that, unlike the bipartite case, there are cases in the multipartite setting where the optimal approximate transformation is non-deterministic (see Fig. \ref{fig:OptIsNotDet}). We also present a multipartite example where the optimal faithful transformation is deterministic and extend these results to the setting where $\delta>0$, but sufficiently small. In Appendix \ref{sec:Appendixlimitingcase}, we also discuss the limiting case where $\epsilon\rightarrow 0$. 

\begin{figure}[h]
  \includegraphics[width=.8\linewidth]{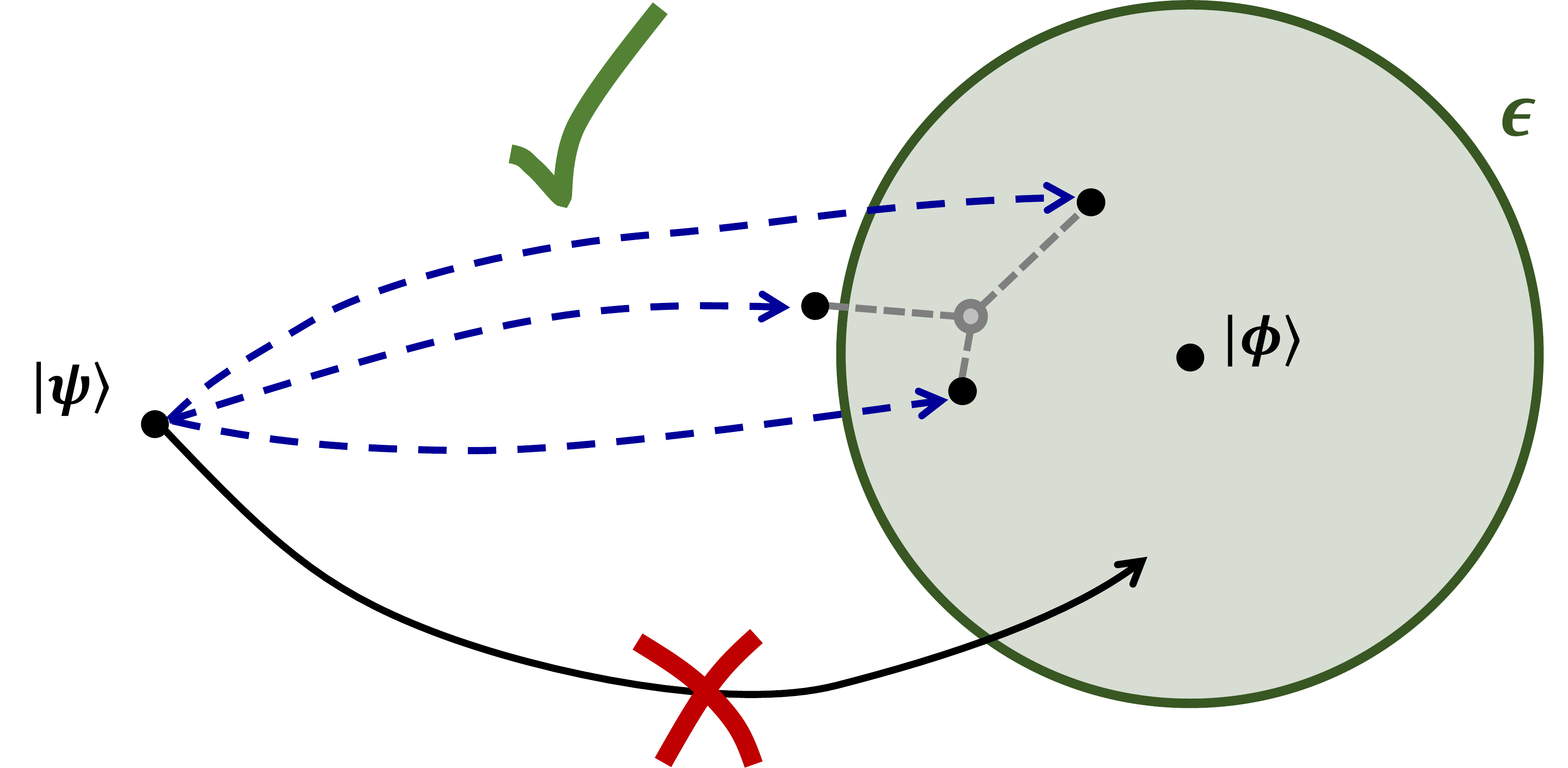}
  \caption{In the multipartite case, the optimal transformation is not always deterministic. There are choices of states $\ket{\psi}$ and $\ket{\phi}$ and $\epsilon>0$ such that a faithful ensemble transformation of $\ket{\psi}$ is possible (blue dashed lines), but there is no deterministic transformation of $\ket{\psi}$ into the $\epsilon$-vicinity around $\ket{\phi}$.}
  \label{fig:OptIsNotDet}
\end{figure}

Finally, in Section \ref{sec:delta=epsilon}, we tackle the question of whether there are really significantly more approximate transformations than deterministic transformations;  namely, are there approximate transformations which are not in the vicinity of any exact, deterministic transformation? More precisely, in Section \ref{sec:delta=epsilon}, we set $\delta=\epsilon>0$ -- in effect, fixing a resolution, up to which we can identify states -- and provide strong numerical evidence for the existence of an approximate transformation with no deterministic transformation between the corresponding $\epsilon$-vicinities around the initial and final states (see Fig. \ref{fig:ApproxIsMoreThanDet}).

\begin{figure}[h]
  \includegraphics[width=.9\linewidth]{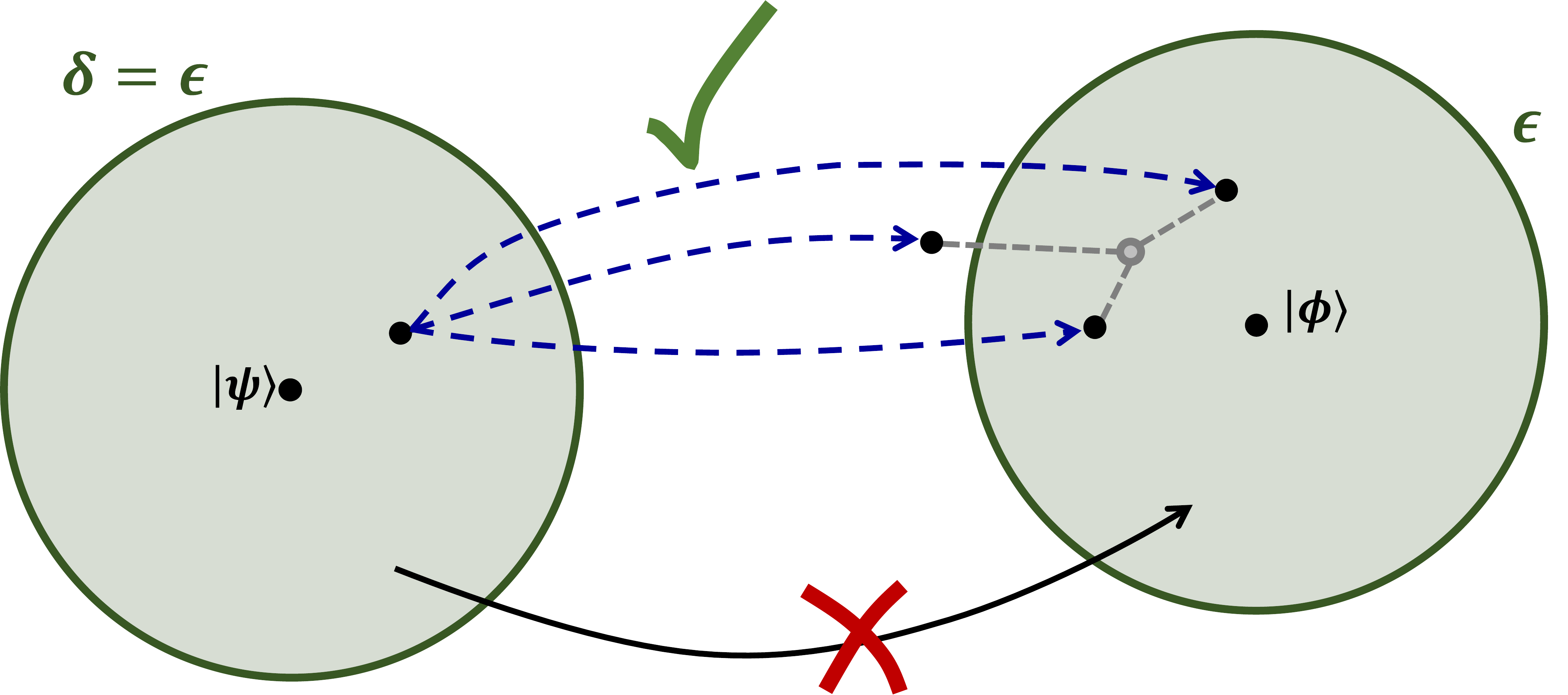}
  \caption{We provide strong numerical evidence for the existence of a faithful transformation for which no deterministic transformations between the $\epsilon$-vicinities of the initial and final states exist. Thus, approximate transformations are more powerful than deterministic transformations.}
  \label{fig:ApproxIsMoreThanDet}
\end{figure}

\subsection{MES}
\label{sec:MES}

We begin by studying the Maximally Entangled Set (MES) under approximate transformations. To begin, contrary to deterministic LOCC in the multipartite setting \cite{GourKrausWallach2017_AlmostAllTrivStab, SauerweinEtAl2018_AlmostAllStatesNotReachable} (see the preliminaries), multipartite states can never be isolated under general approximate transformations for any finite $\delta$ and/or $\epsilon$. The reason for this is trivial: all states are always convertible to states $\epsilon$-close by simply doing nothing. This observation has direct consequences for the MES. Namely, for any finite $\epsilon>0$, we can cover the Hilbert space in finitely many $\epsilon$-vicinities. Therefore, under general approximate transformations, the MES is always finite. However, unlike in the exact case, the MES under faithful transformations is non-unique (for a further discussion of the MES under larger sets of operations than LOCC, see Ref. \cite{NevenEtAl2021_Multicopy}). In the case of SLOCC preserving transformations, by the same argument, we also have that, for any given SLOCC class and any finite $\epsilon$ and/or $\delta>0$, a finite set of states in the SLOCC class are sufficient to reach every state in the SLOCC class via SLOCC-preserving approximate transformations.

\subsection{Transformations from a Fixed Initial State ($\delta=0$)}
\label{sec:delta0}
In this section, we consider the case where we start exactly with $\ket{\psi}$ (i.e., set $\delta=0$) and transform it $\epsilon$-close (with $\epsilon>0$) to some desired state, $\ket{\phi}$. We will show ensemble transformations within an SLOCC class, $T_{ens-SLOCC}^{0,\epsilon}$, are more powerful than deterministic transformations within an SLOCC class, $T_{det-SLOCC}^{0,\epsilon}$. We will also show the optimal transformation is not always deterministic, and thus ensemble faithful transformations, $T_{ens}^{0,\epsilon}$, are in general more powerful than deterministic faithful transformations, $T_{det}^{0,\epsilon}$. Finally, we will provide examples where the optimal transformation is deterministic.

\subsubsection{Transformations within an SLOCC Class}
\label{sec:Delta0WithinAnSLOCCclass}

We begin our investigation by studying ensemble transformations within an SLOCC class and comparing $T_{ens-SLOCC}^{0,\epsilon}(\psi,\phi)$ and $T_{det-SLOCC}^{0,\epsilon}(\psi,\phi)$. Throughout this subsection, we consider $\ket{\psi}$ and $\ket{\phi}$ to be SLOCC equivalent. Recall in the bipartite case, we had that the optimal fidelity under $T_{ens-SLOCC}$ is equal to the optimal fidelity under $T_{det-SLOCC}$. Moreover, the optimal fidelity is always achievable, and thus $T_{ens-SLOCC}^{0,\epsilon}(\psi, \phi)$ is empty iff $T_{det-SLOCC}^{0,\epsilon}(\psi, \phi)$ is empty in the bipartite case.

In this section, we show that this result does not generally hold in the multipartite case. More precisely, we show that there exist SLOCC equivalent states, $\ \ket{\psi}\cong_{SLOCC} \ket{\phi}$, and an $\epsilon > 0$ such that
\begin{align}
    T_{ens-SLOCC}^{0,\epsilon}(\psi, \phi)\supsetneq T_{det-SLOCC}^{0,\epsilon}(\psi, \phi)=\emptyset 
    \label{eq:EnsSLOCCbetterthanDetSLOCC}
\end{align}
and
\begin{align}
    F_{ens-SLOCC}^0(\psi \rightarrow \phi)>F_{det-SLOCC}^0(\psi \rightarrow \phi).
\end{align}
We will show this by constructing an ensemble transformation that is \textit{not} SLOCC-preserving, i.e., $\Lambda\in T_{ens}(\psi)\setminus T_{ens-SLOCC}(\psi)$, but has a considerably higher fidelity than $F_{det-SLOCC}^0(\psi\rightarrow\phi)$. We will then use the fact that transformations in $T_{ens-SLOCC}(\psi)$ approximate transformations in $T_{ens}(\psi)$ arbitrarily well (see Theorem \ref{thm:Fens=FensSLOCC}) to prove that there is a transformation in $T_{ens-SLOCC}(\psi)$ which is better than any transformation in $T_{det-SLOCC}(\psi)$.

To construct our example, we consider the following initial state:
\begin{equation}
    \ket{\psi}\propto \sqrt{7}\ket{00000}+ \sqrt{5} \ket{11111} + \sqrt{10} \ket{D_{5,3}},
    \label{eq:isolatedstate}
\end{equation}
where $\ket{D_{5,3}}$ is the Dicke state of five qubits with three excitations. This state is generic in the sense described in the preliminaries \cite{SauerweinEtAl2018_AlmostAllStatesNotReachable}; in particular, it is isolated. Moreover, it is critical and permutationally symmetric, which will make calculations simpler.

As a target state, we consider a 1-parameter family of states given by
\begin{equation}
    \ket{\phi(\lambda)} \propto D^{\otimes 5}_\lambda \ket{\psi},
\end{equation}
where $D_\lambda=\text{diag}(1/2+\lambda ,1/2-\lambda)$ and $\lambda \in (0,1/2)$. Note, in the limit $\lambda\rightarrow1/2$, the local operator $D_\lambda$ converges to the projector $|0\rangle \langle 0|$. As $\langle0|^{\otimes 5}\ket{\phi(\lambda)}\ne 0$,  $\ket{\phi(\lambda)}$ therefore converges to $\ket{0}^{\otimes 5}$.

We begin with $T_{det-SLOCC}^{0,\epsilon}(\psi,\phi)$. As the initial state is generic and therefore isolated, the only fully-entangled states it can deterministically reach with LOCC are LU equivalent states. Therefore,
$F_{det-SLOCC}^0(\psi\rightarrow \phi(\lambda))=F_{LU}(\psi,\phi(\lambda))$. Moreover, we have
\begin{equation}
    T_{det-SLOCC}^{0,\epsilon}(\psi,\phi(\lambda))\ne \emptyset \text{ iff } F_{LU}(\psi,\phi(\lambda))\ge 1-\epsilon.
    \label{eq:necconditiongenericnotempty}
\end{equation}

Note that, even though both $\ket{\psi}$ and $\ket{\phi(\lambda)}$ are permutationally-invariant states with positive coefficients in the computational basis \cite{Hubeneretal2009_GeometricEntSymStates, NevenEtAl2016_FidelitySymStates, AulbachEtAl2010_GeomEntSymPosStates}, an analytic expression for $F_{LU}(\psi,\phi(\lambda))$ is still not easily attainable. As such, we will often consider bipartite unitaries to determine upper bounds. Namely, $F_{LU}(\psi,\phi(\lambda))=\max_{\otimes U_i} F(\ket{\psi},\otimes_i U_i \ket{\phi})\le \max_{U,V} F(\ket{\psi}, U\otimes V \ket{\phi})$. This last expression, can be easily evaluated \cite{VidalJonathanNielsen2000_ApproxLOCC}. For instance, we have that
\begin{equation}
    F_{LU}(\psi,\phi(\lambda))\le \frac{1}{2} \left(\sum_i \sqrt{\mu_i(\rho_1(\lambda))}\right)^2,
\end{equation}
where $\mu_i(\rho_1(\lambda))$ are the eigenvalues of the reduced density matrix on qubit 1 for $\ket{\phi(\lambda)}$. We will use this upper bound frequently through-out this paper \footnote{Note, we could use the reduced density matrix for qubits 1 and 2 to obtain a tighter bound, but it makes the expression more complicated and doesn't change any of the subsequent analysis.}. 

As a general ensemble transformation, we consider the standard One Successful Branch Protocol (OSBP). In this protocol, all parties measure simultaneously. If they are all successful, they will have transformed $\ket{\psi}$ to $\ket{\phi(\lambda)}$. This successful outcome occurs with maximum probability, $p_{max}(\ket{\psi}\rightarrow \ket{\phi(\lambda)})\equiv p_{max}(\lambda)= \frac{n_\lambda^2}{(1+\lambda)^{10}}$, where $n_\lambda^2=||D_\lambda^{\otimes5}\ket{\psi}||^2$  \cite{GourKrausWallach2017_AlmostAllTrivStab,SauerweinEtAl2018_AlmostAllStatesNotReachable} (see Eq. (\ref{eq:pmaxgeneric})). In the event, at least one party does not get a successful outcome, then all parties convert the post-measurement state to a product state. For our protocol, we choose this product state to be $\ket{00000}$. We choose this because (as is easy to verify using the fact that $\ket{\phi(\lambda)}$ is permutationally symmetric and only has positive coefficients \cite{Hubeneretal2009_GeometricEntSymStates,NevenEtAl2016_FidelitySymStates, AulbachEtAl2010_GeomEntSymPosStates}), the product state nearest to $\ket{\phi(\lambda)}$ is $\ket{00000}$, $\forall \lambda \in (0.00416,1/2)$, i.e.,
\begin{equation}
    \ket{00000}= \argmax_{\otimes_i \ket{\tilde{e}_i}} F\left( \otimes_i \ket{\tilde{e}_i} , \ket{\phi(\lambda)}\right).
    \label{eq::SloccPreservingNearestProductState}
\end{equation}
$\forall \lambda \in (0.00416,1/2)$ \footnote{As in all the following arguments we end up considering $\lambda>0.0696$, we simplify our analysis by choosing $\ket{00000}$ as the failing-outcome product state, $\forall \lambda\in(0,1/2)$, rather than modifying the protocol for the case of $\lambda<0.00416$}. The average fidelity of this protocol is then given by
\begin{equation}
    F_{av}(\Lambda_{OSBP}^\lambda(\psi), \phi(\lambda)) = p_{max}(\lambda)+(1-p_{max}(\lambda)) F_0(\lambda),
\end{equation}
where $F_0(\lambda)= F(\ket{0}^{\otimes5},\ket{\phi(\lambda)})$. 

It is easy to verify that, for $\lambda>\lambda_0=0.0696$, $F_{av}(\Lambda_{OSBP}^\lambda(\psi), \phi(\lambda))$ is strictly larger than $F_{LU}(\psi,\phi(\lambda))$. As, by Theorem \ref{thm:Fens=FensSLOCC}, we have that there is an SLOCC-preserving ensemble transformation with an average fidelity arbitrarily close to $F_{av}(\Lambda_{OSBP}^\lambda(\psi), \phi(\lambda))$, we can therefore conclude that there is an SLOCC-preserving ensemble transformation with a strictly better average fidelity than $F_{det-SLOCC}^0(\psi\rightarrow \phi(\lambda))$. That is, for $\lambda\in(\lambda_0,1/2)$
\begin{equation}
    F_{ens-SLOCC}^0(\psi\rightarrow \phi(\lambda))>F_{det-SLOCC}^0(\psi\rightarrow \phi(\lambda)).
\end{equation}
Moreover, for $\lambda\in(\lambda_0,1/2)$, we also have that there exists an $\epsilon>0$ such that
\begin{align}
    T_{ens-SLOCC}^{0,\epsilon}(\psi, \phi)\supsetneq T_{det-SLOCC}^{0,\epsilon}(\psi, \phi)=\emptyset.
    \label{eq:TensSLOCCMoreThanTdetSLOCC}
\end{align}

A few comments are now in order. First, it is easy to see that this argument extends to any generic initial state and any SLOCC equivalent target state sufficiently far away. That is, for almost all states (generic states are full-measure), $\ket{\psi}$, and any SLOCC equivalent state, $\ket{\phi}$, such that $F_{LU}(\psi,\phi)<F_{av}(\Lambda_{OSBP}(\psi),\phi)$, there is an $\epsilon>0$ such that Eq. (\ref{eq:TensSLOCCMoreThanTdetSLOCC}) holds. Second, this argument can easily be extended to sufficiently small $\delta>0$. This holds as the set of generic states is open \cite{GourKrausWallach2017_AlmostAllTrivStab, SauerweinEtAl2018_AlmostAllStatesNotReachable} \footnote{Note, normalised generic states are also open and dense in the set of normalised states with respect to the topology induced by the trace distance and are full measure, see \cite{Slowik2020_SLOCCtypes}.}, and therefore, for sufficiently small $\delta$, all states in the $\delta$-vicinity are also generic. On must also choose $\ket{\phi}$ sufficiently far away that the $\delta$ and $\epsilon$-vicinities do not overlap (see Appendix \ref{sec:apppendixoverlapbetweenballs}). Finally, as the results of Ref. \cite{SauerweinEtAl2018_AlmostAllStatesNotReachable} apply for $n\ge5$, these results also hold for larger system sizes.

\subsubsection{Comparison between General Transformations}
\label{sec:Delta0General}

In this section, we continue to consider transformations with $\delta=0$ but now consider general transformations. Specifically, we study how $T_{ens}^{0,\epsilon}(\psi,\phi)$ and $T_{det}^{0,\epsilon}(\psi,\phi)$ relate to one another. Recall that, in the bipartite setting, we have that the optimal fidelity is always achievable with a deterministic transformation and $T_{ens}^{0,\epsilon}(\psi,\phi)$ is empty iff $T_{det}^{0,\epsilon}(\psi,\phi)$ is empty. In this section, we show that the multipartite setting is considerably more nuanced. In particular, we show that, unlike the bipartite case, the optimal fidelity cannot generally be achieved with a deterministic transformation. More precisely, we show that there exist states $ \ket{\psi}$ and $\ket{\phi}$ such that
\begin{equation}
    F_{ens}^{0}(\psi\rightarrow \phi)>F_{det}^{0}(\psi\rightarrow \phi),
\end{equation}
and that, for these states, there exist $\epsilon>0$ such that
\begin{equation}
    T_{ens}^{0,\epsilon}(\psi,\phi)\supsetneq T_{det}^{0,\epsilon}(\psi,\phi)=\emptyset.
    \label{eq:faithfulmorethandeterministic}
\end{equation}
We also show, unsurprisingly, that there exist pairs of states where the optimal transformation is deterministic.

So, as outlined, we begin by proving that, contrary to the bipartite setting, the optimal fidelity cannot generally be achieved with a deterministic transformation. To see this, consider two five-qubit states $\ket{\psi}$ and $\ket{\phi}$, such that a deterministic transformation from $\ket{\psi}$ to $\ket{\phi}$ is possible, but $\ket{\psi}$ is not close to $\ket{\phi}$. Now let us pick a generic state, $|\tilde{\psi}_\epsilon \rangle$, $\delta=\epsilon^2$-close to $\ket{\psi}$. It follows from the inequalities Eq. (\ref{eq:vidalinequality}) \footnote{It is easy to verify that Eq. (\ref{eq:vidalinequality}) also holds in the multipartite case.} and Eq. (\ref{eq:FidInequality}) that $F(|\tilde{\psi}_\epsilon\rangle \rightarrow \ket{\phi})\ge 1-\epsilon$; that is, we can faithfully transform $|\tilde{\psi}_\epsilon \rangle$ to $\ket{\phi}$. Note that, as the set of generic states is dense \cite{GourKrausWallach2017_AlmostAllTrivStab, SauerweinEtAl2018_AlmostAllStatesNotReachable}, $\epsilon$ can be chosen arbitrarily small. We now show that we can choose $\epsilon$ such that there is no deterministic transformation from  $|\tilde{\psi}_\epsilon \rangle$ to $\ket{\phi}$. As $|\tilde{\psi}_\epsilon \rangle$ is generic, it can only be transformed into either (a) a non-fully-entangled state or (b) an LU equivalent state. Regarding (a), we can choose $\epsilon$ sufficiently small that the $\epsilon$-vicinity around $\ket{\phi}$ only contains fully-entangled states (we can do this as the set of separable states is closed). Regarding (b), it is easy to verify, using Eq. (\ref{eq:FidInequality}) and the metric properties of the trace distance, that provided $\max_{\otimes U_i} F(\ket{\psi},\otimes_i U_i \ket{\phi})\ll 1-(\sqrt{\epsilon}+\epsilon)^2$, $|\tilde{\psi}_\epsilon \rangle$ cannot be transformed into the $\epsilon$-vicinity around $\ket{\phi}$ with LUs. Thus, we have constructed a faithful transformation from $|\tilde{\psi}_\epsilon \rangle$ to $\ket{\phi}$ such that there is no deterministic transformation from $|\tilde{\psi}_\epsilon \rangle$ to the $\epsilon$-vicinity around $\ket{\phi}$. Consequently, Eq. (\ref{eq:faithfulmorethandeterministic}) holds and the optimal faithful transformation is strictly not deterministic.

As a concrete example, consider the transformation $\ket{\psi}=\ket{GHZ_5}\propto\ket{00000}+\ket{11111}$ to $\ket{\phi}\propto \left(\mathbf{1}^{\otimes 4}\otimes\text{diag}(\frac{2}{3},\frac{3}{2})\right)\ket{GHZ_5}$. Such a transformation is possible via LOCC \cite{Turgut2010_GHZtransfo}. By considering all possible bipartite splittings, we find that the closest biseparable, pure state has fidelity $\frac{81}{97}\approx 0.835$ with $\ket{\phi}$. Moreover, the maximum fidelity over bipartite unitaries upper bounds the maximum fidelity over local unitaries. Thus, from Ref. \cite{VidalJonathanNielsen2000_ApproxLOCC}, we have $\max_{U_i} F( \otimes U_i|\psi \rangle,\ket{\phi}) < \frac{169}{194}\approx0.871$. Now let $\epsilon< \frac{1-\sqrt{1+4\sqrt{1-\frac{169}{194}}}+2\sqrt{1-\frac{169}{194}}}{2} \approx0.0786$ and consider a generic state, $|\tilde{\psi}\rangle$ such that $F(\ket{GHZ_5},|\tilde{\psi}\rangle)>1-\epsilon^2$ (which, as previously argued, is guaranteed to exist). Then by the above argument, $|\tilde{\psi}\rangle$ can be transformed $\epsilon$-close to $\ket{\phi}$ with an ensemble transformation, whereas no deterministic transformation can achieve this accuracy.  

 Finally, note that this argument easily extends to the case of sufficiently small $\delta>0$ (chosen independently from $\epsilon$). As in the previous section, this is true because the set of generic states used in the above argument is open \cite{GourKrausWallach2017_AlmostAllTrivStab, SauerweinEtAl2018_AlmostAllStatesNotReachable} (see the preliminaries). Consequently, for any generic state, $|\tilde{\psi}\rangle$, there is a $\delta>0$ such that the $\delta$-vicinity around $|\tilde{\psi}\rangle$ consists of only generic states. Therefore (assuming $\delta$ and $\epsilon$ vicinities do not overlap [see Appendix \ref{sec:apppendixoverlapbetweenballs}]), the entire argument above goes through. We emphasise that this only holds for small enough $\delta$, as the argument depends on all states in the $\delta$-vicinity being generic; for example, in our construction above, the deterministically transformable state $\ket{\psi}$ was $\epsilon^2$-close to $|\tilde{\psi}\rangle$. Therefore, in order for the $\delta$-vicinity around $|\tilde{\psi}\rangle$ to consist only of generic states, $\delta$ must be necessarily smaller than $\epsilon^2$, i.e., $\delta<\epsilon^2\ll \epsilon$ (indeed, it most likely will have to be chosen yet smaller). Lastly, as the above argument follows from the existence of a dense, open set of isolated states, the conclusions also hold true for $n\ge 5$ and qudits \footnote{In the case, $d\ge3$, the result holds for $n\ge 4$} \cite{GourKrausWallach2017_AlmostAllTrivStab,  SauerweinEtAl2018_AlmostAllStatesNotReachable}.  

Having demonstrated that, in general, optimal transformations are not deterministic, we now round out this insight and demonstrate that, whilst in general deterministic transformations are sub-optimal, there are cases where the optimal approximate transformation is deterministic. That is, we show that, for all system sizes, there is a $\ket{\psi}$ and a $\ket{\phi}$ such that $\ket{\psi}\not\rightarrow_{LOCC}\ket{\phi}$, yet $F^0_{ens}(\ket{\psi}\rightarrow\ket{\phi}) = F^0_{det}(\ket{\psi}\rightarrow\ket{\phi})$.

To see this, we build a construction using the arguments from Ref. \cite{VidalJonathanNielsen2000_ApproxLOCC}. Consider as an initial state the following states in the GHZ class, $\ket{\psi_\lambda}=\sqrt{d}\ (\sqrt{D_\lambda}\otimes \mathbbm{1}^{\otimes n-1}) \ket{GHZ_n^d}$, where $\ket{GHZ_n^d}=\frac{1}{\sqrt{d}}\sum_{i=0}^{d-1} \ket{ii...i}$ is the n-qudit GHZ state and $\sqrt{D_\lambda}=\text{diag}(\sqrt{\lambda_0},\sqrt{\lambda_1},...\sqrt{\lambda_{d-1}})$ is a diagonal positive matrix such that $\sum_i \lambda_i =1$. As a target state, we consider the n-qudit GHZ state itself, i.e., $\ket{\phi}=\ket{GHZ_n^d}$. Note that if $\lambda_i \ne 1/d$  for any $i\in\{0,..,d-1\}$, then $\ket{\psi_\lambda}\not\rightarrow_{LOCC} \ket{\phi}$. The fact that the optimal faithful transformation is deterministic then follows from a similar argument as in Ref. \cite{VidalJonathanNielsen2000_ApproxLOCC}. Namely, it is easy to see that the optimal bipartite LOCC transformation is to do nothing, which naturally is achievable in the multipartite setting too.

 As a final comment, one can consider the case of $\delta>0$. Here we do not provide a proof, but Ref \cite{Acin2000_OptimalDistillGHZ} reported strong numerical evidence that, in the case of three qubits, the optimal protocol for transforming any GHZ-like state to the GHZ state is to apply LUs.

\subsection{Transformations with respect to a Fixed Resolution, ($\delta=\epsilon>0$)}
\label{sec:delta=epsilon}

In the previous section, we considered faithful transformations that start precisely with the initial state ($\delta=0)$. Moreover, we saw that whenever our results hold due to the initial state being a generic state, then our results also hold for sufficiently small $\delta$. This is because the set of generic states is open, and so, for sufficiently small $\delta$, the $\delta$-vicinity around a generic state consists only of generic states. This means, for instance, that there are $\ket{\psi}$, $\ket{\phi}$ and $\delta>0$ such that
\begin{equation}
    F_{ens}^{\delta}(\psi,\phi)>F_{det}^{\delta}(\psi,\phi).
    \label{eq:deltaapproxmorethanlocc}
\end{equation}

However, the constructions in the previous sections were nevertheless ``near''  to a known deterministic transformation. For example, we demonstrated Eq. (\ref{eq:deltaapproxmorethanlocc}) by considering a generic state near the initial state of a known deterministic transformation and using the inequality in Eq. (\ref{eq:vidalinequality}). This begs the question of whether approximate transformations are more than just deterministic transformations between the $\epsilon$-vicinities around the initial and final state. It is this question that we tackle in this section.

To this end, we consider the case where $\delta=\epsilon$. Physically, this is motivated by imagining a scenario in which we have some fixed resolution, $\epsilon>0$. In this case, both the initial and final state can only be known up to a resolution of $\epsilon$. Thus, we wish to know if any of the pure states that could be misidentified as the initial state can be deterministically transformed into a pure state that could be misidentified as the target state.  Note that it is not straightforward to simply use the inequality in Eq. (\ref{eq:vidalinequality}) to construct an approximate transformation that is not near a deterministic transformation (as done in Section \ref{sec:Delta0General}). In fact, it is easy to verify that setting $\delta=\epsilon$ rules out such a construction.

We note that in the bipartite case, we still have $T_{ens}^{\epsilon,\epsilon}(\psi,\phi)=\emptyset$ iff $T_{det}^{\epsilon,\epsilon}(\psi,\phi)=\emptyset$. In the remainder of this section, we provide strong numerical evidence that approximate transformations are more powerful than deterministic transformations between the $\epsilon$-vicinities around $\ket{\psi}$ and $\ket{\phi}$. More precisely, we provide strong numerical evidence that there is a $\ket{\psi}$, a $\ket{\phi}$ and an $\epsilon>0$ such that
\begin{equation}
    T_{ens}^{\epsilon,\epsilon}(\psi, \phi) \supsetneq T_{det}^{\epsilon,\epsilon}(\psi, \phi) = \emptyset,
    \label{eq:approxmorethandeterministic}
\end{equation}
and thus, in the multipartite case, approximate transformations really are more than deterministic transformations between the $\epsilon$-vicinities around the initial and final state.

The idea of our construction is to consider generic states and have $\epsilon$ small enough such that the $\epsilon$-vicinity around the initial state consists only of generic states. Consequently, our protocol will have to have an average fidelity of almost 1 as, in order for $\Lambda\in T_{ens}^{\epsilon,\epsilon}(\psi,\phi)$, we must have $F_{av}(\Lambda(\psi),\phi)\ge 1-\epsilon$. This will ultimately lead us to considering transformations to target states that, though still entangled, are very close to $\ket{0}^{\otimes 5}$. Nonetheless, by choosing $\epsilon$ this way, we simplify our problem considerably as all states in the initial $\epsilon$-vicinity can only be transformed into either (a) LU equivalent states or (b) non-fully-entangled states.  

We introduce a family of protocols, $\Lambda_\lambda$ with $\lambda\in(0,1/2)$, which transform an initial, generic state, $\ket{\psi}$, towards a generic, target state, $\ket{\phi(\lambda)}$. We find in the limiting case $\lim_{\lambda\rightarrow 1/2}F_{av}(\Lambda_\lambda(\psi),\phi(\lambda))=1$. Therefore, by a suitable choice of $\lambda\in(0,1/2)$, our protocol does indeed achieve an average fidelity $F_{av}(\Lambda_\lambda(\psi),\phi(\lambda))\ge 1-\epsilon$, with $\epsilon$ sufficiently small to ensure that all states in the $\epsilon$-vicinity around the initial state are generic states. We then upper bound the maximum achievable fidelity of deterministic transformations from the $\epsilon$-vicinity around the initial state. That is, we upper bound  $F_{det}^\epsilon(\psi,\phi(\lambda))$. Finally, we put these results together, showing strong numerical evidence that there is a $\lambda \in(0,1/2)$ and $\epsilon>0$ such that  $F_{av}(\Lambda_\lambda(\psi),\phi(\lambda))\ge 1-\epsilon > F_{det}^\epsilon(\psi,\phi(\lambda))$ and thus there is a $\ket{\psi}$, $\ket{\phi}$ and $\epsilon>0$ such that
\begin{equation}
    T_{ens}^{\epsilon,\epsilon}(\psi, \phi) \supsetneq T_{det}^{\epsilon,\epsilon}(\psi, \phi) = \emptyset.
\end{equation}

Our construction is a modification of the one we introduced in Section \ref{sec:Delta0WithinAnSLOCCclass}. Once again, we consider the generic, permutationally invariant state in Eq. (\ref{eq:isolatedstate}), $\ket{\psi}$, as the initial state and  consider the family of states $ \ket{\phi (\lambda)} \propto D_\lambda^{\otimes 5} \ket{\psi}$, with $D_\lambda = \text{diag}\left(1/2+\lambda, 1/2-\lambda \right)$ and $\lambda\in (0,1/2)$, as target states. As a protocol, we consider a modification of the OSBP from Section \ref{sec:Delta0WithinAnSLOCCclass} \footnote{Note that neither the OSBP nor the natural extension of it involving a single four qubit state leads to a higher fidelity as the protocol presented here.}. Recall, in the original protocol, all parties measure simultaneously; if all the parties get a successful outcome, then they will have transformed the initial state to the target state (this occurs with maximal probability); but if one of the parties gets a non-successful outcome, then they all transform the post-measurement state to the nearest product state. 

In our modified protocol, the parties do not perform their measurements simultaneously but instead perform them sequentially (see Fig. 5). Each party performs in turn the measurement:
\begin{align}
    M_0 &= \frac{1}{1/2+\lambda} D_\lambda\\
    M_1 &=  \sqrt{\mathbbm{1}-M_0^\dagger M_0} \propto \ket{1}\bra{1}.
\end{align}

This is the measurement that would be used in an OSBP. In the event each party measures $M_0$ (which happens for each party with probability $p_i$), they obtain the target state. This outcome occurs with probability $\Pi_i p_i (\lambda)= p_{max}(\ket{\psi}\rightarrow_{LOCC} \ket{\phi(\lambda)})\equiv p_{max}(\lambda)$ \cite{GourKrausWallach2017_AlmostAllTrivStab,SauerweinEtAl2018_AlmostAllStatesNotReachable}. Thus indeed, this protocol will be an optimal conclusive transformation (i.e., belongs to $T_{pmax}(\psi,\phi(\lambda))$).

However, unlike in the standard OSBP, in the event one party in the sequence does not get the successful outcome, the parties do not simply transform to a product state. Let party $k\in\{1,..,5\}$ measure outcome $M_1$. In this case, the post-measurement state will be
\begin{equation}
    |\chi_k(\lambda)\rangle \propto D_\lambda^{\otimes k-1}\otimes \mathbbm{1}^{\otimes 6-k} \ket{\tilde{\chi}_k}, 
\end{equation}
with
\begin{equation}
    \ket{\tilde{\chi}_k} = \ket{1}_k \otimes \left( \sqrt{\frac{5}{11}}\ket{1111} + \sqrt{\frac{6}{11}} \ket{D_{4,2}}\right)_{\{1,...,5\}\setminus\{k\}}.
\end{equation}

$|\chi_k(\lambda)\rangle$ is no longer fully-entangled, and thus there is zero probability of transforming it to $\ket{\phi(\lambda)}$. Instead of simply transforming this failing branch into a product state, the parties apply a conclusive transformation to a four qubit state that optimises the average output fidelity.

To be more precise, first note that for any given state, $|\tilde{\xi}\rangle\cong_{SLOCC} \ket{\chi_k(\lambda)}$, we can analytically lower bound the maximum probability of transforming $\ket{\chi_k(\lambda)}$ to $|\tilde{\xi}\rangle$ by \cite{GourKrausWallach2017_AlmostAllTrivStab, SauerweinEtAl2018_AlmostAllStatesNotReachable}
\begin{equation}
    p_{max}\left(|\chi_k(\lambda)\rangle\rightarrow  |\tilde{\xi}\rangle \right)\ge \frac{\left|\left|\otimes g_i \ket{\chi_k(\lambda)} \right|\right|^2}{\Pi_i \mu_{max}(g_i^\dagger g_i)} \equiv q_{\lambda,k}(\tilde{\xi}).
    \label{eq:FourQubitProb}
\end{equation}
Here, $\otimes_i g_i  \ket{\chi_k(\lambda)} =  |\tilde{\xi}\rangle$ and $\mu_{max} (X)$ is the largest eigenvalue of $X$. 

Returning to our protocol, in the event party $k$ does not get a successful outcome, the parties instead implement the ensemble transformation
\begin{align}
    \ket{\chi_k(\lambda)} \rightarrow& \Big\{ \big(q_k(\lambda), \ket{\xi_k (\lambda)}\big),\big(( 1- q_k(\lambda)), \ket{0}^{\otimes 5}\big)\Big\},
\end{align}
where $q_k(\lambda)\equiv q_{\lambda, k}(\xi_k(\lambda))$, and $\ket{\xi_k (\lambda)}$ is defined such that the average fidelity of this protocol is optimised, i.e.,
\begin{align}
    \ket{\xi_k (\lambda)} \equiv \argmax_{|\tilde{\xi}\rangle\cong_{SLOCC} \ket{\chi_k(\lambda)}} & \Big(q_{\lambda,k}(\tilde{\xi})\ F(|\tilde{\xi}\rangle, \ket{\phi(\lambda)})\ \nonumber\\
    &\qquad + \big(1-q_{\lambda,k}(\tilde{\xi})\big) F_0(\lambda) \Big),
    \label{eq:optimization4qubits}
\end{align}
with $F_0(\lambda)$ being the overlap of $\ket{\phi(\lambda)}$ and $\ket{0}^{\otimes 5}$ (recall $\ket{0}^{\otimes 5}$ is the nearest product state for $\lambda\in(0.00416,1/2)$).

Thus, in the event party $k$ has a non-successful measurement, i.e., measures $M_1$, the parties coordinate to transform $|\chi_k(\lambda)\rangle$ to $\ket{\xi_k(\lambda)}$ with probability $q_k(\lambda)$. If they fail to transform the state into $\ket{\xi_k(\lambda)}$, all parties determinstically transform the post-measurement state to $\ket{0}^{\otimes 5}$.

Putting this altogether, we have an optimal conclusive protocol in which: (a) with probability $\Pi_i p_i(\lambda) = p_{max}(\lambda)$, we get exactly the target state $\ket{\phi(\lambda)}$; (b) with probability
\begin{equation}
    \tilde{q}_k(\lambda)= \left( \Pi_{j=0}^{k-1} p_j(\lambda) \right)\left(1-p_k(\lambda) \right) q_k(\lambda)
\end{equation}
we reach the state $\ket{\xi_k(\lambda)}$, which has fidelity $F_{k}(\lambda)\equiv F(\ket{\xi_k(\lambda)},\ket{\phi(\lambda)})$ with the target state,
and (c) in the remaining branches, the protocol outputs $\ket{0}^{\otimes 5}$ (see Fig. \ref{fig:Protocol}). Thus, the average fidelity of the protocol is given by:
\begin{align}
   F_{prot}(\lambda) &= p_{max}(\lambda) + \sum_{k=1}^{5} \tilde{q}_k(\lambda) F_{k}(\lambda) \nonumber \\
   &\qquad + \Big(1-p_{max}(\lambda)- \sum_{k=1}^{5} \tilde{q}_k(\lambda)\Big)  F_0(\lambda).
\end{align}
 
As analytically evaluating the optimization in Eq. (\ref{eq:optimization4qubits}) is not straightforward \footnote{See Ref. \cite{NevenEtAl2016_FidelitySymStates} for a discussion of optimising the fidelity between permutationally symmetric states over general invertible (as opposed to unitary) operators.}, we numerically optimise to find $|\xi_k(\lambda) \rangle$. As a result, the remainder of this proof remains numerical in nature.

 \begin{figure}
     \centering
     \includegraphics[width=.65\linewidth]{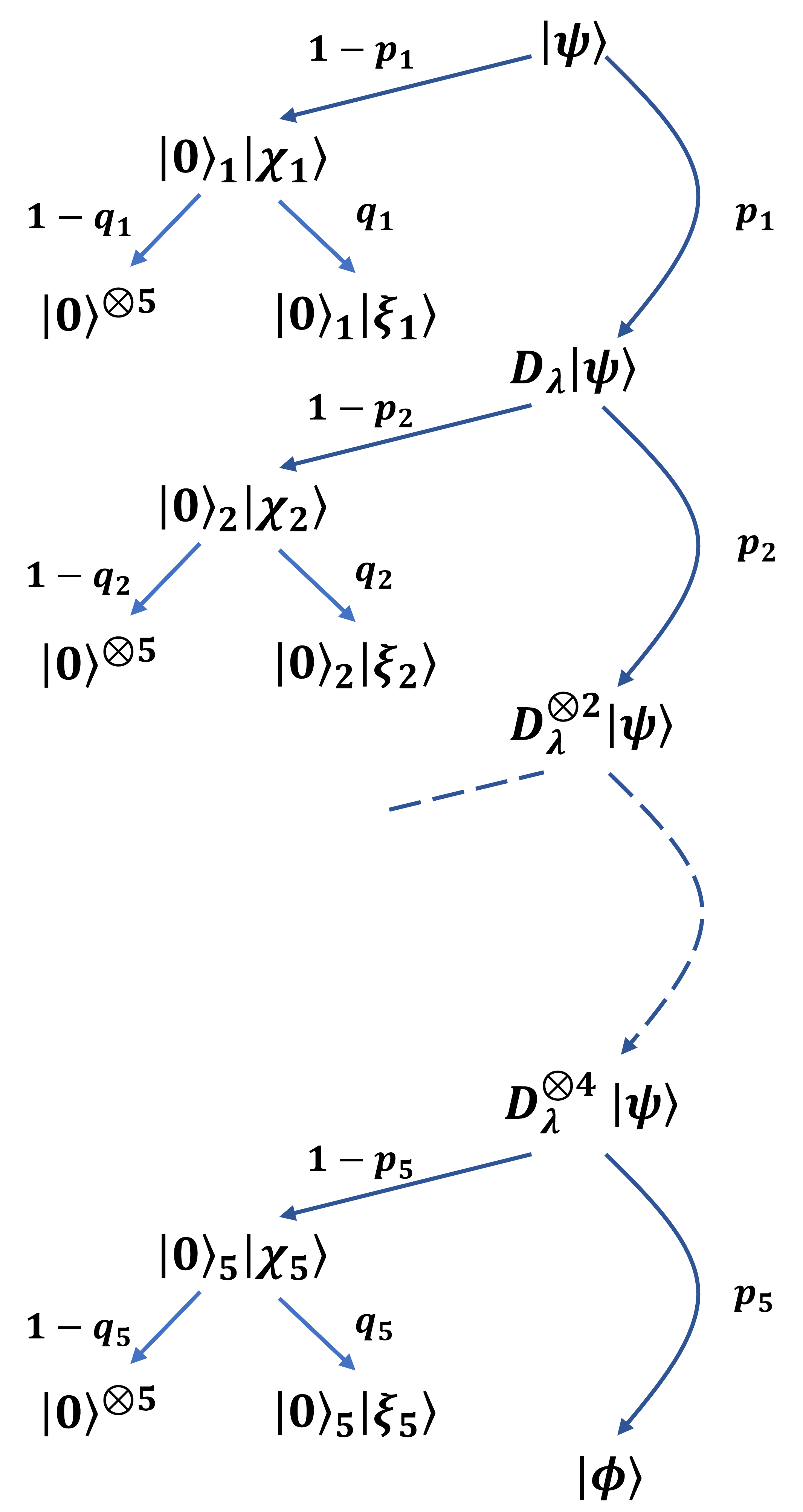}
     \caption{The new protocol proposed in this section. Instead of a standard OSBP, the parties sequentially measure. In the event, one fails, yielding a four qubit state, the parties transform the state probabilistically with a standard OSBP to a four qubit state such that the output fidelity is optimised. In case of an unsuccessful measurement, yielding a three qubit state, the state is transformed deterministically to a product state.}
     \label{fig:Protocol}
 \end{figure}

A numerically derived $F_{prot}(\lambda)$ is plotted in Fig. \ref{Fig:ProtocolFidelity} (the red line). Recall that in order for all states in the initial vicinity to be generic, we require $\epsilon$ to be very small and thus the protocol fidelity to be very large. Considering Fig. \ref{Fig:ProtocolFidelity}, we therefore see that we have two limiting cases of interest:  $\lim_{\lambda\rightarrow 0} F_{prot}(\lambda) = 1$ and $\lim_{\lambda\rightarrow 1/2} F_{prot}(\lambda) = 1$.  We can rule out the first of these limits by the following argument. If the overlap between the two states is greater than $1-\epsilon$, then a deterministic transformation is trivially possible (by simply doing nothing). Indeed, if $\lambda=0$, then one can obviously do the transformation deterministically as the initial and final state are the same. Thus, our protocol must have a greater fidelity than this overlap. The blue line in Fig. \ref{Fig:ProtocolFidelity} shows the overlap between the initial and target state. We can see that $F_{prot}(\lambda)$ is less than this overlap for $\lambda<1/10$. This rules out the limiting case $\lambda \rightarrow 0$. Therefore, we have to consider the limiting case $\lambda \rightarrow 1/2$. We see that for all $\epsilon>0$, there is a $\lambda\in(0,1/2)$ such that $F_{prot}(\lambda)\ge 1-\epsilon$. Consequently, we can choose $\lambda\in (0,1/2)$ such that all states in the initial vicinity are generic states. Next, we upper bound the achievable fidelity of a deterministic transformation, $F_{det}^{\epsilon}(\psi,\phi(\lambda))$. As they are all generic, states in the initial $\epsilon$-vicinity can only be transformed into either (a) LU equivalent states or (b) non-fully-entangled states. Therefore, it remains to show that our protocol has a higher average fidelity than both of these options.

\begin{figure}
\includegraphics[width=0.45\textwidth]{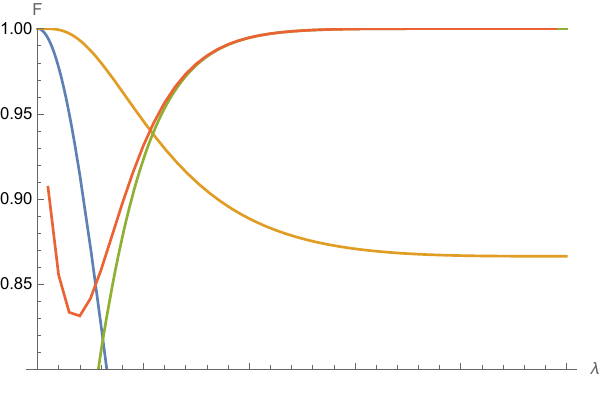}
\centering
\caption{Plot of the fidelities relevant to the proof that approximate transformations are more than deterministic transformations between $\epsilon$-vicinities. The red line corresponds to $F_{prot}(\lambda)$, the blue line corresponds to $|\langle \psi|\phi(\lambda)\rangle|^2$, the yellow line corresponds to $F_{triv}^{UB}(\lambda)$ and the green line corresponds to $F_{1|2345}^{max}(\lambda).$  As $\lambda$ approaches 1/2, if one chooses $\epsilon=1-F_{prot}(\lambda)$, then $\epsilon$ becomes sufficiently small that all states in the initial vicinity are isolated. Moreover, we see that, in this limit, $F_{prot}(\lambda)$ (the red line) is greater than the fidelity achievable with deterministic transformations. Therefore, there is a $\lambda<1/2$ and $\epsilon=1-F_{prot}(\lambda)$ such that an approximate transformation is possible, but there are no deterministic transformations between the $\epsilon$-vicinities.}
\label{Fig:ProtocolFidelity}
\end{figure}

With regards to (a), we want to ensure the initial and final $\epsilon$-vicinities (up to LUs) do not overlap, i.e., there is no state $\epsilon$-close to both $\ket{\phi}$ and $\ket{\psi}$ (up to LUs). To show that this is indeed the case, we can use the following upper bound (see Appendix \ref{sec:apppendixoverlapbetweenballs}):
\begin{align}
    \max_{U_i, V_i}  & \max_{\ket{\chi}} \min \left\{ F(\otimes_i U_i \ket{\psi},\ket{\chi}),\ F(\ket{\chi},\otimes_i V_i \ket{\phi(\lambda)}) \right\}   \nonumber\\
    &\le \frac{3}{4} +  \frac{1}{8} \left(\sum_i \sqrt{\mu_i(\rho_1(\lambda))}\right)^2 \equiv F_{triv}^{UB}(\lambda),
    \label{eq:maxoverlapofLUballs}
\end{align}
where $\mu_i(\rho_1(\lambda))$ are the eigenvalues of the reduced density matrix on qubit 1 for $\ket{\phi(\lambda)}$. If $F_{prot}(\lambda)>F_{triv}^{UB}(\lambda)$, then for $\epsilon=1-F_{prot}(\lambda)$, there is no state that is in both $\epsilon$-vicinities. This upper bound converges to $7/8\ll 1-\epsilon$ as $\lambda$ approaches $\frac{1}{2}$ (see Fig. \ref{Fig:ProtocolFidelity}), thus proving that, for large $\lambda$, there are no LU transformations between the initial and final $\epsilon$-vicinities. 

Next we must show (b): that there are no transformations from states $\epsilon$-close to the initial state to non-fully-entangled states $\epsilon$-close to the final state. The task of categorising LOCC transformations from fully-entangled states to non-fully-entangled states is challenging (see the preliminaries), let alone all such transformations from states in the initial $\epsilon$-vicinity. So instead, we consider a necessary condition for such a transformation. Namely, we look for the closest non-fully-entangled state to $\ket{\phi(\lambda)}$. If there are no non-fully-entangled states with fidelity greater than $F_{prot}(\lambda)$, then there are certainly no deterministic transformations from the initial $\epsilon$-vicinity to non-fully-entangled states with fidelity greater than $F_{prot}(\lambda)$ \footnote{As an alternate necessary condition, one can consider whether one can reach an ensemble of SLOCC equivalent four qubit states from the initial state. However, for this initial state, one can indeed show that such transformations are possible. Therefore, this necessary condition cannot be used to rule out deterministic LOCC transformations to four qubit states.}.

As the target state is permutationally symmetric, it is sufficient to consider two cases of non-fully-entangled states: (i) states which are not entangled in the partition $1|2345$ and (ii) states which are not entangled in the partition $12|345$, i.e., it is sufficient to consider \cite{VidalJonathanNielsen2000_ApproxLOCC}
\begin{align}
    F^{max}_{1|2345}(\lambda) &\equiv \max_{\ket{e_1}_{1}\ket{e_2}_{2345}} F(\ket{e_1}_{1}\ket{e_2}_{2345}, \ket{\phi(\lambda)})\\
                              & = \mu_{max}(\rho_1(\lambda)),\\
    F^{max}_{12|345}(\lambda) &\equiv \max_{\ket{e_1}_{12}\ket{e_2}_{345}} F(\ket{e_1}_{12}\ket{e_2}_{345}, \ket{\phi(\lambda)})\\
                              & =  \mu_{max}(\rho_{12}(\lambda)),
\end{align}
where $\mu_{max}(\rho_1(\lambda))$ ($\mu_{max}(\rho_{12}(\lambda))$) is the maximum eigenvalue of the one-(two)-qubit reduced density matrix of $\ket{\phi(\lambda)}$. All other partitions are combinations of refinements and permutations of these partitions. It is easy to verify that $F^{max}_{1|2345}>F^{max}_{12|345}$.

In summary, provided $F_{prot}(\lambda)$ is greater than $F_{triv}^{UB}(\lambda)$ and 
$F_{1|2345}^{max}(\lambda)$, then we may choose $\epsilon=1-F_{prot}(\lambda)$ and ensure that (a) there are no LU transformations between the $\epsilon$-vicinities, and (b) there are no transformations to non-fully-entangled states between the $\epsilon$-vicinities. In this case, provided $\epsilon$ is small enough ($F_{prot}(\lambda)$ is large enough) to ensure the initial $\epsilon$-vicinity consists of only generic states, we can rule out the possibility of a deterministic transformation between the $\epsilon$-vicinities.

We now complete this argument demonstrating that approximate transformations really are more than simply deterministic transformations from the $\epsilon$-vicinities around our states. Fig. \ref{Fig:ProtocolFidelity} shows the protocol fidelity, $F_{prot}(\lambda)$, plotted with $ F_{triv}^{UB}(\lambda)$ and $F_{1|2345}^{max}(\lambda) $. As mentioned, $F_{prot}(\lambda)>F_{triv}^{UB}(\lambda)$ for large $\lambda$. We have numerically verified that $F_{prot}(\lambda)>F^{max}_{1|2345}(\lambda)$ at a resolution of $10^{-2}$ across the range $(0,1/2)$, i.e., $\forall \lambda \in \{10^{-2},2\times 10^{-2},...,1/2-10^{-2}\}$. Moreover, we have also verified this holds in the vicinity of $\lambda=1/2$ (which is the limit of interest) up to a resolution of $10^{-10}$. This provides strong numerical evidence that $F_{prot}(\lambda)>F^{max}_{1|2345}(\lambda)$ $\forall \lambda \in (0,1/2)$. Assuming this inequality does hold for all $\lambda$  in the interval $(0,1/2)$, then we can freely choose $\lambda<1/2$ large enough and $\epsilon=1-F_{av}(\Lambda_\lambda(\psi),\phi(\lambda))$ such that the initial $\epsilon$-vicinity consists only of generic states, the final $\epsilon$-vicinity consists of only fully-entangled states, there are no LU transformations between the epsilon vicinities and $F_{prot}(\lambda)\ge 1-\epsilon$. Thus, by the above argument, we have an example of a faithful ensemble transformation, for which there is no $\epsilon$-close deterministic pure-state transformation.

To sum up, we have demonstrated strong numerical evidence that there does indeed exist a $\ket{\psi},\ket{\phi}$ and $\epsilon>0$ such that
\begin{equation}
    T_{ens}^{\epsilon,\epsilon}(\psi,\phi) \supsetneq T_{det}^{\epsilon,\epsilon}(\psi,\phi) = \emptyset.
\end{equation}
This demonstrates that approximate transformations are more powerful than simply deterministic LOCC transformations between states in the $\epsilon$-vicinities around the initial and final state.

\section{Conclusion}

Approximate LOCC transformations arise naturally as any real implementation of a quantum process in a lab is subject to noise. Thus, states are never pure, and any map can only be implemented within experimental error. Furthermore, in the multipartite case this study is also very relevant from the theoretical point of view given that in general almost every pure state is isolated. This means that it is generically impossible to transform a pure state into another with perfect fidelity. Thus, given a pair of input and target pure states, this raises the question of whether a faithful transformation with a reasonably large fidelity is possible. 

This problem has already been studied in the bipartite case, where the optimal fidelity has been determined for any pair of pure states. There it has been shown that the optimal fidelity is achieved by a deterministic protocol. However, an analogous result in the multipartite case seems completely out of reach. In the multipartite case, even transformations that can be achieved with fidelity equal to 1 are yet to be fully characterized outside of the generic case. Things only get more complicated in the approximate case due to the fact that the pure states in the target ensemble might belong to different SLOCC classes (including non-fully-entangled ones), which have an extremely complex structure in the multipartite domain. However, we have argued that certain simplifications take place when considering faithful LOCC transformations in this setting. In particular, we have shown that it is enough to consider pure input and output states when considering transformations from and to noisy mixed states in the vicinity of pure states, that finite round LOCC protocols approximate faithful transformation arbitrarily well, and that in fact SLOCC class preserving, finite-round LOCC protocols approximate faithful transformations arbitrarily well. In addition to this, when considering approximate transformations, no state is isolated and the MES is finite. 

Despite the aforementioned difficulty, in this paper we have aimed to study the general landscape of approximate transformations in the multipartite setting. For this, we have introduced a hierarchy of approximate transformations that are well motivated by both operational reasons and from the perspective of entanglement theory; namely, ensemble, optimal-conclusive, deterministic, ensemble-SLOCC and deterministic-SLOCC transformations. Our main result is that while this hierarchy is trivial in the bipartite case (i.e.\ $F_{ens}^\delta(\psi\rightarrow \phi) = F_{det}^\delta(\psi\rightarrow \phi)=F_{ens-SLOCC}^\delta(\psi\rightarrow \phi) = F_{det-SLOCC}^\delta(\psi\rightarrow \phi)$ for any fully-entangled, pure-state pair $\ket{\psi}$ and $\ket{\phi}$), this is not so in the multipartite case, thus revealing a fundamentally richer structure. In more detail, in Theorem 4 we have proven that $F_{ens}^\delta(\psi\rightarrow \phi) =F_{ens-SLOCC}^\delta(\psi\rightarrow \phi)$ for any pair of pure states $\ket{\psi}$ and $\ket{\phi}$ and any $\delta\geq0$, and we have used this and the properties of generic states to establish that the general inequalities $F_{ens-SLOCC}^0(\psi\rightarrow \phi) \geq F_{det-SLOCC}^0(\psi\rightarrow \phi)$ and $F_{ens}^0(\psi\rightarrow \phi) \geq F_{det}^0(\psi\rightarrow \phi)$ can be strict (with both results holding as well for sufficiently small values of $\delta>0$). Finally, we have considered the question of whether there exists an $\epsilon>0$ and a pair of pure states $\ket{\psi}$ and $\ket{\phi}$ such that $F^\epsilon_{ens}(\psi\rightarrow \phi)\geq1-\epsilon$ but $F^\epsilon_{det}(\psi\rightarrow \phi)<1-\epsilon$. The motivation for this question is two-fold. First, it models the practical case where all states can be determined up to a fixed resolution $\epsilon$. Second, if the above is true, this implies that optimal transformations do not need to arise by adding noise to a deterministic transformation. In this respect, we have considered a particular pair of states, derived an upper bound on $F^\epsilon_{det}(\psi\rightarrow \phi)$ and constructed an ensemble-transformation type of protocol that we have numerically verified violates the previous bound for any choice of $\epsilon$  that is sufficiently close to 0. 

In addition to the above, in the future it would be interesting to study further what the optimal fidelities can be for the different types of approximate transformations. While obtaining the optimal fidelities in general appears to be a formidable problem, it is worth studying whether reasonably sharp bounds can be found efficiently. Furthermore, it could be interesting to study faithful transformations of particularly physically-relevant multipartite states, for instance stabilizer states, matrix product states or non-generic states with known physical applications.

\section{Acknowledgements}
DG, MH, CS, and BK  acknowledge financial support from the SFB BeyondC (Grant No. F7107-N38), the Austrian Science Fund (FWF) through the grants P 32273-N27 and DK-ALM W1259-N27. DG and BK acknowledge the BMW endowment fund. JIdV acknowledges financial support from the Spanish Ministerio de Ciencia e Innovaci\'on (grant PID2020-113523GB-I00 and "Severo Ochoa Programme for Centres of Excellence" grant CEX2019-000904-S funded by MCIN/AEI/10.13039/501100011033) and from Comunidad de Madrid (grant QUITEMAD-CM P2018/TCS-4342 and the Multiannual Agreement with UC3M in the line of Excellence of University Professors EPUC3M23 in the context of the V PRICIT)

\bibliographystyle{apsrev4-1}

\begin{thebibliography}{64}%
\makeatletter
\providecommand \@ifxundefined [1]{%
 \@ifx{#1\undefined}
}%
\providecommand \@ifnum [1]{%
 \ifnum #1\expandafter \@firstoftwo
 \else \expandafter \@secondoftwo
 \fi
}%
\providecommand \@ifx [1]{%
 \ifx #1\expandafter \@firstoftwo
 \else \expandafter \@secondoftwo
 \fi
}%
\providecommand \natexlab [1]{#1}%
\providecommand \enquote  [1]{``#1''}%
\providecommand \bibnamefont  [1]{#1}%
\providecommand \bibfnamefont [1]{#1}%
\providecommand \citenamefont [1]{#1}%
\providecommand \href@noop [0]{\@secondoftwo}%
\providecommand \href [0]{\begingroup \@sanitize@url \@href}%
\providecommand \@href[1]{\@@startlink{#1}\@@href}%
\providecommand \@@href[1]{\endgroup#1\@@endlink}%
\providecommand \@sanitize@url [0]{\catcode `\\12\catcode `\$12\catcode
  `\&12\catcode `\#12\catcode `\^12\catcode `\_12\catcode `\%12\relax}%
\providecommand \@@startlink[1]{}%
\providecommand \@@endlink[0]{}%
\providecommand \url  [0]{\begingroup\@sanitize@url \@url }%
\providecommand \@url [1]{\endgroup\@href {#1}{\urlprefix }}%
\providecommand \urlprefix  [0]{URL }%
\providecommand \Eprint [0]{\href }%
\providecommand \doibase [0]{http://dx.doi.org/}%
\providecommand \selectlanguage [0]{\@gobble}%
\providecommand \bibinfo  [0]{\@secondoftwo}%
\providecommand \bibfield  [0]{\@secondoftwo}%
\providecommand \translation [1]{[#1]}%
\providecommand \BibitemOpen [0]{}%
\providecommand \bibitemStop [0]{}%
\providecommand \bibitemNoStop [0]{.\EOS\space}%
\providecommand \EOS [0]{\spacefactor3000\relax}%
\providecommand \BibitemShut  [1]{\csname bibitem#1\endcsname}%
\let\auto@bib@innerbib\@empty
\bibitem [{\citenamefont {Horodecki}\ \emph {et~al.}(2009)\citenamefont
  {Horodecki}, \citenamefont {Horodecki}, \citenamefont {Horodecki},\ and\
  \citenamefont {Horodecki}}]{Horodeckis2009_QuantumEntanglement}%
  \BibitemOpen
  \bibfield  {author} {\bibinfo {author} {\bibfnamefont {R.}~\bibnamefont
  {Horodecki}}, \bibinfo {author} {\bibfnamefont {P.}~\bibnamefont
  {Horodecki}}, \bibinfo {author} {\bibfnamefont {M.}~\bibnamefont
  {Horodecki}}, \ and\ \bibinfo {author} {\bibfnamefont {K.}~\bibnamefont
  {Horodecki}},\ }\href {\doibase 10.1103/RevModPhys.81.865} {\bibfield
  {journal} {\bibinfo  {journal} {Rev. Mod. Phys.}\ }\textbf {\bibinfo {volume}
  {81}},\ \bibinfo {pages} {865} (\bibinfo {year} {2009})}\BibitemShut
  {NoStop}%
\bibitem [{\citenamefont {Donald}\ \emph {et~al.}(2002)\citenamefont {Donald},
  \citenamefont {Horodecki},\ and\ \citenamefont
  {Rudolph}}]{DonaldHorodeckiRudolph2002_LOCC}%
  \BibitemOpen
  \bibfield  {author} {\bibinfo {author} {\bibfnamefont {M.~J.}\ \bibnamefont
  {Donald}}, \bibinfo {author} {\bibfnamefont {M.}~\bibnamefont {Horodecki}}, \
  and\ \bibinfo {author} {\bibfnamefont {O.}~\bibnamefont {Rudolph}},\ }\href
  {\doibase 10.1063/1.1495917} {\bibfield  {journal} {\bibinfo  {journal}
  {Journal of Mathematical Physics}\ }\textbf {\bibinfo {volume} {43}},\
  \bibinfo {pages} {4252} (\bibinfo {year} {2002})}\BibitemShut {NoStop}%
\bibitem [{\citenamefont {Plenio}\ and\ \citenamefont
  {Virmani}(2007)}]{PlenioVirmani2007_IntroToEntanglementMeasures}%
  \BibitemOpen
  \bibfield  {author} {\bibinfo {author} {\bibfnamefont {M.~B.}\ \bibnamefont
  {Plenio}}\ and\ \bibinfo {author} {\bibfnamefont {S.}~\bibnamefont
  {Virmani}},\ }\href@noop {} {\bibfield  {journal} {\bibinfo  {journal}
  {Quantum Info. Comput.}\ }\textbf {\bibinfo {volume} {7}},\ \bibinfo {pages}
  {1–51} (\bibinfo {year} {2007})}\BibitemShut {NoStop}%
\bibitem [{\citenamefont {Nielsen}(1999)}]{Nielsen1999_NielsensThm}%
  \BibitemOpen
  \bibfield  {author} {\bibinfo {author} {\bibfnamefont {M.~A.}\ \bibnamefont
  {Nielsen}},\ }\href {\doibase 10.1103/PhysRevLett.83.436} {\bibfield
  {journal} {\bibinfo  {journal} {Phys. Rev. Lett.}\ }\textbf {\bibinfo
  {volume} {83}},\ \bibinfo {pages} {436} (\bibinfo {year} {1999})}\BibitemShut
  {NoStop}%
\bibitem [{\citenamefont {D\"ur}\ \emph {et~al.}(2000)\citenamefont {D\"ur},
  \citenamefont {Vidal},\ and\ \citenamefont
  {Cirac}}]{DurEtAl2000_ThreeQubitSLOCC}%
  \BibitemOpen
  \bibfield  {author} {\bibinfo {author} {\bibfnamefont {W.}~\bibnamefont
  {D\"ur}}, \bibinfo {author} {\bibfnamefont {G.}~\bibnamefont {Vidal}}, \ and\
  \bibinfo {author} {\bibfnamefont {J.~I.}\ \bibnamefont {Cirac}},\ }\href
  {\doibase 10.1103/PhysRevA.62.062314} {\bibfield  {journal} {\bibinfo
  {journal} {Phys. Rev. A}\ }\textbf {\bibinfo {volume} {62}},\ \bibinfo
  {pages} {062314} (\bibinfo {year} {2000})}\BibitemShut {NoStop}%
\bibitem [{\citenamefont {Turgut}\ \emph {et~al.}(2010)\citenamefont {Turgut},
  \citenamefont {G\"ul},\ and\ \citenamefont {Pak}}]{Turgut2010_GHZtransfo}%
  \BibitemOpen
  \bibfield  {author} {\bibinfo {author} {\bibfnamefont {S.}~\bibnamefont
  {Turgut}}, \bibinfo {author} {\bibfnamefont {Y.}~\bibnamefont {G\"ul}}, \
  and\ \bibinfo {author} {\bibfnamefont {N.~K.}\ \bibnamefont {Pak}},\ }\href
  {\doibase 10.1103/PhysRevA.81.012317} {\bibfield  {journal} {\bibinfo
  {journal} {Phys. Rev. A}\ }\textbf {\bibinfo {volume} {81}},\ \bibinfo
  {pages} {012317} (\bibinfo {year} {2010})}\BibitemShut {NoStop}%
\bibitem [{\citenamefont {K{\i}nta{\c{s}}}\ and\ \citenamefont
  {Turgut}(2010)}]{KintaTurgut2010_WStateTransfo}%
  \BibitemOpen
  \bibfield  {author} {\bibinfo {author} {\bibfnamefont {S.}~\bibnamefont
  {K{\i}nta{\c{s}}}}\ and\ \bibinfo {author} {\bibfnamefont {S.}~\bibnamefont
  {Turgut}},\ }\href {\doibase 10.1063/1.3481573} {\bibfield  {journal}
  {\bibinfo  {journal} {Journal of Mathematical Physics}\ }\textbf {\bibinfo
  {volume} {51}},\ \bibinfo {pages} {092202} (\bibinfo {year}
  {2010})}\BibitemShut {NoStop}%
\bibitem [{\citenamefont {de~Vicente}\ \emph {et~al.}(2013)\citenamefont
  {de~Vicente}, \citenamefont {Spee},\ and\ \citenamefont
  {Kraus}}]{DeVicenteEtAl2013_MES}%
  \BibitemOpen
  \bibfield  {author} {\bibinfo {author} {\bibfnamefont {J.~I.}\ \bibnamefont
  {de~Vicente}}, \bibinfo {author} {\bibfnamefont {C.}~\bibnamefont {Spee}}, \
  and\ \bibinfo {author} {\bibfnamefont {B.}~\bibnamefont {Kraus}},\ }\href
  {\doibase 10.1103/PhysRevLett.111.110502} {\bibfield  {journal} {\bibinfo
  {journal} {Phys. Rev. Lett.}\ }\textbf {\bibinfo {volume} {111}},\ \bibinfo
  {pages} {110502} (\bibinfo {year} {2013})}\BibitemShut {NoStop}%
\bibitem [{\citenamefont {Gour}\ \emph {et~al.}(2017)\citenamefont {Gour},
  \citenamefont {Kraus},\ and\ \citenamefont
  {Wallach}}]{GourKrausWallach2017_AlmostAllTrivStab}%
  \BibitemOpen
  \bibfield  {author} {\bibinfo {author} {\bibfnamefont {G.}~\bibnamefont
  {Gour}}, \bibinfo {author} {\bibfnamefont {B.}~\bibnamefont {Kraus}}, \ and\
  \bibinfo {author} {\bibfnamefont {N.~R.}\ \bibnamefont {Wallach}},\ }\href
  {\doibase 10.1063/1.5003015} {\bibfield  {journal} {\bibinfo  {journal}
  {Journal of Mathematical Physics}\ }\textbf {\bibinfo {volume} {58}},\
  \bibinfo {pages} {092204} (\bibinfo {year} {2017})}\BibitemShut {NoStop}%
\bibitem [{\citenamefont {Sauerwein}\ \emph
  {et~al.}(2018{\natexlab{a}})\citenamefont {Sauerwein}, \citenamefont
  {Wallach}, \citenamefont {Gour},\ and\ \citenamefont
  {Kraus}}]{SauerweinEtAl2018_AlmostAllStatesNotReachable}%
  \BibitemOpen
  \bibfield  {author} {\bibinfo {author} {\bibfnamefont {D.}~\bibnamefont
  {Sauerwein}}, \bibinfo {author} {\bibfnamefont {N.~R.}\ \bibnamefont
  {Wallach}}, \bibinfo {author} {\bibfnamefont {G.}~\bibnamefont {Gour}}, \
  and\ \bibinfo {author} {\bibfnamefont {B.}~\bibnamefont {Kraus}},\ }\href
  {\doibase 10.1103/PhysRevX.8.031020} {\bibfield  {journal} {\bibinfo
  {journal} {Phys. Rev. X}\ }\textbf {\bibinfo {volume} {8}},\ \bibinfo {pages}
  {031020} (\bibinfo {year} {2018}{\natexlab{a}})}\BibitemShut {NoStop}%
\bibitem [{\citenamefont {Vidal}\ \emph {et~al.}(2000)\citenamefont {Vidal},
  \citenamefont {Jonathan},\ and\ \citenamefont
  {Nielsen}}]{VidalJonathanNielsen2000_ApproxLOCC}%
  \BibitemOpen
  \bibfield  {author} {\bibinfo {author} {\bibfnamefont {G.}~\bibnamefont
  {Vidal}}, \bibinfo {author} {\bibfnamefont {D.}~\bibnamefont {Jonathan}}, \
  and\ \bibinfo {author} {\bibfnamefont {M.~A.}\ \bibnamefont {Nielsen}},\
  }\href {\doibase 10.1103/PhysRevA.62.012304} {\bibfield  {journal} {\bibinfo
  {journal} {Phys. Rev. A}\ }\textbf {\bibinfo {volume} {62}},\ \bibinfo
  {pages} {012304} (\bibinfo {year} {2000})}\BibitemShut {NoStop}%
\bibitem [{\citenamefont {Ac\'{\i}n}\ \emph {et~al.}(2000)\citenamefont
  {Ac\'{\i}n}, \citenamefont {Jan\'e}, \citenamefont {D\"ur},\ and\
  \citenamefont {Vidal}}]{Acin2000_OptimalDistillGHZ}%
  \BibitemOpen
  \bibfield  {author} {\bibinfo {author} {\bibfnamefont {A.}~\bibnamefont
  {Ac\'{\i}n}}, \bibinfo {author} {\bibfnamefont {E.}~\bibnamefont {Jan\'e}},
  \bibinfo {author} {\bibfnamefont {W.}~\bibnamefont {D\"ur}}, \ and\ \bibinfo
  {author} {\bibfnamefont {G.}~\bibnamefont {Vidal}},\ }\href {\doibase
  10.1103/PhysRevLett.85.4811} {\bibfield  {journal} {\bibinfo  {journal}
  {Phys. Rev. Lett.}\ }\textbf {\bibinfo {volume} {85}},\ \bibinfo {pages}
  {4811} (\bibinfo {year} {2000})}\BibitemShut {NoStop}%
\bibitem [{\citenamefont
  {Regula}(2022)}]{Regula2022_ApproximateNonEntanglingTransfo}%
  \BibitemOpen
  \bibfield  {author} {\bibinfo {author} {\bibfnamefont {B.}~\bibnamefont
  {Regula}},\ }\href {\doibase 10.1103/physrevlett.128.110505} {\bibfield
  {journal} {\bibinfo  {journal} {Physical Review Letters}\ }\textbf {\bibinfo
  {volume} {128}} (\bibinfo {year} {2022}),\
  10.1103/physrevlett.128.110505}\BibitemShut {NoStop}%
\bibitem [{\citenamefont {Gingrich}(2002)}]{Gingrich2002_LOCC=LU}%
  \BibitemOpen
  \bibfield  {author} {\bibinfo {author} {\bibfnamefont {R.~M.}\ \bibnamefont
  {Gingrich}},\ }\href {\doibase 10.1103/PhysRevA.65.052302} {\bibfield
  {journal} {\bibinfo  {journal} {Phys. Rev. A}\ }\textbf {\bibinfo {volume}
  {65}},\ \bibinfo {pages} {052302} (\bibinfo {year} {2002})}\BibitemShut
  {NoStop}%
\bibitem [{\citenamefont {Chitambar}\ \emph {et~al.}(2014)\citenamefont
  {Chitambar}, \citenamefont {Leung}, \citenamefont {Man{\v{c}}inska},
  \citenamefont {Ozols},\ and\ \citenamefont
  {Winter}}]{Chitambar2014_EverythingYouWantedToKnow}%
  \BibitemOpen
  \bibfield  {author} {\bibinfo {author} {\bibfnamefont {E.}~\bibnamefont
  {Chitambar}}, \bibinfo {author} {\bibfnamefont {D.}~\bibnamefont {Leung}},
  \bibinfo {author} {\bibfnamefont {L.}~\bibnamefont {Man{\v{c}}inska}},
  \bibinfo {author} {\bibfnamefont {M.}~\bibnamefont {Ozols}}, \ and\ \bibinfo
  {author} {\bibfnamefont {A.}~\bibnamefont {Winter}},\ }\href {\doibase
  10.1007/s00220-014-1953-9} {\bibfield  {journal} {\bibinfo  {journal}
  {Communications in Mathematical Physics}\ }\textbf {\bibinfo {volume}
  {328}},\ \bibinfo {pages} {303} (\bibinfo {year} {2014})}\BibitemShut
  {NoStop}%
\bibitem [{\citenamefont
  {Vidal}(1999)}]{Vidal1999_ProbablisticBipariteMonotones}%
  \BibitemOpen
  \bibfield  {author} {\bibinfo {author} {\bibfnamefont {G.}~\bibnamefont
  {Vidal}},\ }\href {\doibase 10.1103/PhysRevLett.83.1046} {\bibfield
  {journal} {\bibinfo  {journal} {Phys. Rev. Lett.}\ }\textbf {\bibinfo
  {volume} {83}},\ \bibinfo {pages} {1046} (\bibinfo {year}
  {1999})}\BibitemShut {NoStop}%
\bibitem [{\citenamefont {Jonathan}\ and\ \citenamefont
  {Plenio}(1999{\natexlab{a}})}]{JonathanPlenio1999_ReachingEnsemblesofPureStates}%
  \BibitemOpen
  \bibfield  {author} {\bibinfo {author} {\bibfnamefont {D.}~\bibnamefont
  {Jonathan}}\ and\ \bibinfo {author} {\bibfnamefont {M.~B.}\ \bibnamefont
  {Plenio}},\ }\href {\doibase 10.1103/PhysRevLett.83.1455} {\bibfield
  {journal} {\bibinfo  {journal} {Phys. Rev. Lett.}\ }\textbf {\bibinfo
  {volume} {83}},\ \bibinfo {pages} {1455} (\bibinfo {year}
  {1999}{\natexlab{a}})}\BibitemShut {NoStop}%
\bibitem [{\citenamefont {Jozsa}(1994)}]{Jozsa1994_Fidelity}%
  \BibitemOpen
  \bibfield  {author} {\bibinfo {author} {\bibfnamefont {R.}~\bibnamefont
  {Jozsa}},\ }\href {\doibase 10.1080/09500349414552171} {\bibfield  {journal}
  {\bibinfo  {journal} {Journal of Modern Optics}\ }\textbf {\bibinfo {volume}
  {41}},\ \bibinfo {pages} {2315} (\bibinfo {year} {1994})}\BibitemShut
  {NoStop}%
\bibitem [{\citenamefont {Fuchs}(1996)}]{Fuchs1996_Distinguishability}%
  \BibitemOpen
  \bibfield  {author} {\bibinfo {author} {\bibfnamefont {C.~A.}\ \bibnamefont
  {Fuchs}},\ }\href@noop {} {\enquote {\bibinfo {title} {Distinguishability and
  accessible information in quantum theory},}\ } (\bibinfo {year} {1996}),\
  \Eprint {http://arxiv.org/abs/arXiv:quant-ph/9601020}
  {arXiv:quant-ph/9601020} \BibitemShut {NoStop}%
\bibitem [{\citenamefont {Hardy}(1999)}]{Hardy1999_OptimalConcentration}%
  \BibitemOpen
  \bibfield  {author} {\bibinfo {author} {\bibfnamefont {L.}~\bibnamefont
  {Hardy}},\ }\href {\doibase 10.1103/PhysRevA.60.1912} {\bibfield  {journal}
  {\bibinfo  {journal} {Phys. Rev. A}\ }\textbf {\bibinfo {volume} {60}},\
  \bibinfo {pages} {1912} (\bibinfo {year} {1999})}\BibitemShut {NoStop}%
\bibitem [{\citenamefont {Bandyopadhyay}\ \emph {et~al.}(2002)\citenamefont
  {Bandyopadhyay}, \citenamefont {Roychowdhury},\ and\ \citenamefont
  {Sen}}]{Bandyopadhyay2002_BipartiteMulticopy}%
  \BibitemOpen
  \bibfield  {author} {\bibinfo {author} {\bibfnamefont {S.}~\bibnamefont
  {Bandyopadhyay}}, \bibinfo {author} {\bibfnamefont {V.}~\bibnamefont
  {Roychowdhury}}, \ and\ \bibinfo {author} {\bibfnamefont {U.}~\bibnamefont
  {Sen}},\ }\href {\doibase 10.1103/PhysRevA.65.052315} {\bibfield  {journal}
  {\bibinfo  {journal} {Phys. Rev. A}\ }\textbf {\bibinfo {volume} {65}},\
  \bibinfo {pages} {052315} (\bibinfo {year} {2002})}\BibitemShut {NoStop}%
\bibitem [{\citenamefont {Duan}\ \emph {et~al.}(2005)\citenamefont {Duan},
  \citenamefont {Feng}, \citenamefont {Li},\ and\ \citenamefont
  {Ying}}]{DuanEtAl2005_MulticopyAndCatalysis}%
  \BibitemOpen
  \bibfield  {author} {\bibinfo {author} {\bibfnamefont {R.}~\bibnamefont
  {Duan}}, \bibinfo {author} {\bibfnamefont {Y.}~\bibnamefont {Feng}}, \bibinfo
  {author} {\bibfnamefont {X.}~\bibnamefont {Li}}, \ and\ \bibinfo {author}
  {\bibfnamefont {M.}~\bibnamefont {Ying}},\ }\href {\doibase
  10.1103/PhysRevA.71.042319} {\bibfield  {journal} {\bibinfo  {journal} {Phys.
  Rev. A}\ }\textbf {\bibinfo {volume} {71}},\ \bibinfo {pages} {042319}
  (\bibinfo {year} {2005})}\BibitemShut {NoStop}%
\bibitem [{\citenamefont {Jonathan}\ and\ \citenamefont
  {Plenio}(1999{\natexlab{b}})}]{JonathanPlenio1999_Catalysis}%
  \BibitemOpen
  \bibfield  {author} {\bibinfo {author} {\bibfnamefont {D.}~\bibnamefont
  {Jonathan}}\ and\ \bibinfo {author} {\bibfnamefont {M.~B.}\ \bibnamefont
  {Plenio}},\ }\href {\doibase 10.1103/PhysRevLett.83.3566} {\bibfield
  {journal} {\bibinfo  {journal} {Phys. Rev. Lett.}\ }\textbf {\bibinfo
  {volume} {83}},\ \bibinfo {pages} {3566} (\bibinfo {year}
  {1999}{\natexlab{b}})}\BibitemShut {NoStop}%
\bibitem [{\citenamefont {Turgut}(2007)}]{Turgut2007_BipartiteCatalysis1}%
  \BibitemOpen
  \bibfield  {author} {\bibinfo {author} {\bibfnamefont {S.}~\bibnamefont
  {Turgut}},\ }\href {\doibase 10.1088/1751-8113/40/40/012} {\bibfield
  {journal} {\bibinfo  {journal} {Journal of Physics A: Mathematical and
  Theoretical}\ }\textbf {\bibinfo {volume} {40}},\ \bibinfo {pages} {12185}
  (\bibinfo {year} {2007})}\BibitemShut {NoStop}%
\bibitem [{\citenamefont {Klimesh}(2007)}]{Klimesh2007_BipartiteCataylsis2}%
  \BibitemOpen
  \bibfield  {author} {\bibinfo {author} {\bibfnamefont {M.}~\bibnamefont
  {Klimesh}},\ }\href {\doibase 10.48550/ARXIV.0709.3680} {\enquote {\bibinfo
  {title} {Inequalities that collectively completely characterize the catalytic
  majorization relation},}\ } (\bibinfo {year} {2007})\BibitemShut {NoStop}%
\bibitem [{\citenamefont {Feng}\ \emph {et~al.}(2002)\citenamefont {Feng},
  \citenamefont {Wang},\ and\ \citenamefont
  {Xu}}]{FengEtAl2002_MutualCatalysis}%
  \BibitemOpen
  \bibfield  {author} {\bibinfo {author} {\bibfnamefont {X.-L.}\ \bibnamefont
  {Feng}}, \bibinfo {author} {\bibfnamefont {Z.-Y.}\ \bibnamefont {Wang}}, \
  and\ \bibinfo {author} {\bibfnamefont {Z.-Z.}\ \bibnamefont {Xu}},\ }\href
  {\doibase 10.1103/PhysRevA.65.022307} {\bibfield  {journal} {\bibinfo
  {journal} {Phys. Rev. A}\ }\textbf {\bibinfo {volume} {65}},\ \bibinfo
  {pages} {022307} (\bibinfo {year} {2002})}\BibitemShut {NoStop}%
\bibitem [{\citenamefont {Neven}\ \emph {et~al.}(2021)\citenamefont {Neven},
  \citenamefont {Gunn}, \citenamefont {Hebenstreit},\ and\ \citenamefont
  {Kraus}}]{NevenEtAl2021_Multicopy}%
  \BibitemOpen
  \bibfield  {author} {\bibinfo {author} {\bibfnamefont {A.}~\bibnamefont
  {Neven}}, \bibinfo {author} {\bibfnamefont {D.~K.}\ \bibnamefont {Gunn}},
  \bibinfo {author} {\bibfnamefont {M.}~\bibnamefont {Hebenstreit}}, \ and\
  \bibinfo {author} {\bibfnamefont {B.}~\bibnamefont {Kraus}},\ }\href
  {\doibase 10.21468/scipostphys.11.2.042} {\bibfield  {journal} {\bibinfo
  {journal} {{SciPost} Physics}\ }\textbf {\bibinfo {volume} {11}} (\bibinfo
  {year} {2021}),\ 10.21468/scipostphys.11.2.042}\BibitemShut {NoStop}%
\bibitem [{\citenamefont {Chen}\ \emph {et~al.}(2010)\citenamefont {Chen},
  \citenamefont {Chitambar}, \citenamefont {Duan}, \citenamefont {Ji},\ and\
  \citenamefont {Winter}}]{ChenWinterEtAl2010_TensorRankSLOCCCatalysis}%
  \BibitemOpen
  \bibfield  {author} {\bibinfo {author} {\bibfnamefont {L.}~\bibnamefont
  {Chen}}, \bibinfo {author} {\bibfnamefont {E.}~\bibnamefont {Chitambar}},
  \bibinfo {author} {\bibfnamefont {R.}~\bibnamefont {Duan}}, \bibinfo {author}
  {\bibfnamefont {Z.}~\bibnamefont {Ji}}, \ and\ \bibinfo {author}
  {\bibfnamefont {A.}~\bibnamefont {Winter}},\ }\href {\doibase
  10.1103/PhysRevLett.105.200501} {\bibfield  {journal} {\bibinfo  {journal}
  {Phys. Rev. Lett.}\ }\textbf {\bibinfo {volume} {105}},\ \bibinfo {pages}
  {200501} (\bibinfo {year} {2010})}\BibitemShut {NoStop}%
\bibitem [{\citenamefont {van Dam}\ and\ \citenamefont
  {Hayden}(2003)}]{vanDamHayden2003_Embezzlement}%
  \BibitemOpen
  \bibfield  {author} {\bibinfo {author} {\bibfnamefont {W.}~\bibnamefont {van
  Dam}}\ and\ \bibinfo {author} {\bibfnamefont {P.}~\bibnamefont {Hayden}},\
  }\href {\doibase 10.1103/PhysRevA.67.060302} {\bibfield  {journal} {\bibinfo
  {journal} {Phys. Rev. A}\ }\textbf {\bibinfo {volume} {67}},\ \bibinfo
  {pages} {060302} (\bibinfo {year} {2003})}\BibitemShut {NoStop}%
\bibitem [{\citenamefont {Bennett}\ \emph
  {et~al.}(1996{\natexlab{a}})\citenamefont {Bennett}, \citenamefont
  {Bernstein}, \citenamefont {Popescu},\ and\ \citenamefont
  {Schumacher}}]{BennettEtAl1996_Assymp2ConcentratingPartialEntByLocalOps}%
  \BibitemOpen
  \bibfield  {author} {\bibinfo {author} {\bibfnamefont {C.~H.}\ \bibnamefont
  {Bennett}}, \bibinfo {author} {\bibfnamefont {H.~J.}\ \bibnamefont
  {Bernstein}}, \bibinfo {author} {\bibfnamefont {S.}~\bibnamefont {Popescu}},
  \ and\ \bibinfo {author} {\bibfnamefont {B.}~\bibnamefont {Schumacher}},\
  }\href {\doibase 10.1103/physreva.53.2046} {\bibfield  {journal} {\bibinfo
  {journal} {Physical Review A}\ }\textbf {\bibinfo {volume} {53}},\ \bibinfo
  {pages} {2046} (\bibinfo {year} {1996}{\natexlab{a}})}\BibitemShut {NoStop}%
\bibitem [{\citenamefont {Bennett}\ \emph
  {et~al.}(1996{\natexlab{b}})\citenamefont {Bennett}, \citenamefont
  {Brassard}, \citenamefont {Popescu}, \citenamefont {Schumacher},
  \citenamefont {Smolin},\ and\ \citenamefont
  {Wootters}}]{BennettEtAl1996_Assymp1PurificationofNoisyEntandFaithfulTeleport}%
  \BibitemOpen
  \bibfield  {author} {\bibinfo {author} {\bibfnamefont {C.~H.}\ \bibnamefont
  {Bennett}}, \bibinfo {author} {\bibfnamefont {G.}~\bibnamefont {Brassard}},
  \bibinfo {author} {\bibfnamefont {S.}~\bibnamefont {Popescu}}, \bibinfo
  {author} {\bibfnamefont {B.}~\bibnamefont {Schumacher}}, \bibinfo {author}
  {\bibfnamefont {J.~A.}\ \bibnamefont {Smolin}}, \ and\ \bibinfo {author}
  {\bibfnamefont {W.~K.}\ \bibnamefont {Wootters}},\ }\href {\doibase
  10.1103/PhysRevLett.76.722} {\bibfield  {journal} {\bibinfo  {journal} {Phys.
  Rev. Lett.}\ }\textbf {\bibinfo {volume} {76}},\ \bibinfo {pages} {722}
  (\bibinfo {year} {1996}{\natexlab{b}})}\BibitemShut {NoStop}%
\bibitem [{\citenamefont {Verstraete}\ \emph
  {et~al.}(2002{\natexlab{a}})\citenamefont {Verstraete}, \citenamefont
  {Dehaene}, \citenamefont {De~Moor},\ and\ \citenamefont
  {Verschelde}}]{VerstraeteEtAl2002_FourQubitsSLOCC1}%
  \BibitemOpen
  \bibfield  {author} {\bibinfo {author} {\bibfnamefont {F.}~\bibnamefont
  {Verstraete}}, \bibinfo {author} {\bibfnamefont {J.}~\bibnamefont {Dehaene}},
  \bibinfo {author} {\bibfnamefont {B.}~\bibnamefont {De~Moor}}, \ and\
  \bibinfo {author} {\bibfnamefont {H.}~\bibnamefont {Verschelde}},\ }\href
  {\doibase 10.1103/PhysRevA.65.052112} {\bibfield  {journal} {\bibinfo
  {journal} {Phys. Rev. A}\ }\textbf {\bibinfo {volume} {65}},\ \bibinfo
  {pages} {052112} (\bibinfo {year} {2002}{\natexlab{a}})}\BibitemShut
  {NoStop}%
\bibitem [{\citenamefont {Chitambar}\ \emph {et~al.}(2008)\citenamefont
  {Chitambar}, \citenamefont {Duan},\ and\ \citenamefont
  {Shi}}]{Chitambar2008_SLOCCDecidabilityisNPHard}%
  \BibitemOpen
  \bibfield  {author} {\bibinfo {author} {\bibfnamefont {E.}~\bibnamefont
  {Chitambar}}, \bibinfo {author} {\bibfnamefont {R.}~\bibnamefont {Duan}}, \
  and\ \bibinfo {author} {\bibfnamefont {Y.}~\bibnamefont {Shi}},\ }\href
  {\doibase 10.1103/PhysRevLett.101.140502} {\bibfield  {journal} {\bibinfo
  {journal} {Phys. Rev. Lett.}\ }\textbf {\bibinfo {volume} {101}},\ \bibinfo
  {pages} {140502} (\bibinfo {year} {2008})}\BibitemShut {NoStop}%
\bibitem [{\citenamefont {Kraus}(2010)}]{Kraus2010_LUequivalence}%
  \BibitemOpen
  \bibfield  {author} {\bibinfo {author} {\bibfnamefont {B.}~\bibnamefont
  {Kraus}},\ }\href {\doibase 10.1103/PhysRevLett.104.020504} {\bibfield
  {journal} {\bibinfo  {journal} {Phys. Rev. Lett.}\ }\textbf {\bibinfo
  {volume} {104}},\ \bibinfo {pages} {020504} (\bibinfo {year}
  {2010})}\BibitemShut {NoStop}%
\bibitem [{\citenamefont {Eltschka}\ and\ \citenamefont
  {Siewert}(2014)}]{EltschkaEtAl2014_SLIPs6}%
  \BibitemOpen
  \bibfield  {author} {\bibinfo {author} {\bibfnamefont {C.}~\bibnamefont
  {Eltschka}}\ and\ \bibinfo {author} {\bibfnamefont {J.}~\bibnamefont
  {Siewert}},\ }\href {\doibase 10.1088/1751-8113/47/42/424005} {\bibfield
  {journal} {\bibinfo  {journal} {Journal of Physics A: Mathematical and
  Theoretical}\ }\textbf {\bibinfo {volume} {47}},\ \bibinfo {pages} {424005}
  (\bibinfo {year} {2014})}\BibitemShut {NoStop}%
\bibitem [{\citenamefont {Gour}\ and\ \citenamefont
  {Wallach}(2013)}]{GourWallach2013_SLIPs5}%
  \BibitemOpen
  \bibfield  {author} {\bibinfo {author} {\bibfnamefont {G.}~\bibnamefont
  {Gour}}\ and\ \bibinfo {author} {\bibfnamefont {N.~R.}\ \bibnamefont
  {Wallach}},\ }\href {\doibase 10.1103/PhysRevLett.111.060502} {\bibfield
  {journal} {\bibinfo  {journal} {Phys. Rev. Lett.}\ }\textbf {\bibinfo
  {volume} {111}},\ \bibinfo {pages} {060502} (\bibinfo {year}
  {2013})}\BibitemShut {NoStop}%
\bibitem [{\citenamefont {Kempf}\ and\ \citenamefont
  {Ness}(1979)}]{KempfNess1979}%
  \BibitemOpen
  \bibfield  {author} {\bibinfo {author} {\bibfnamefont {G.}~\bibnamefont
  {Kempf}}\ and\ \bibinfo {author} {\bibfnamefont {L.}~\bibnamefont {Ness}},\
  }in\ \href {\doibase 10.1007/bfb0066647} {\emph {\bibinfo {booktitle}
  {Lecture Notes in Mathematics}}}\ (\bibinfo  {publisher} {Springer Berlin
  Heidelberg},\ \bibinfo {year} {1979})\ pp.\ \bibinfo {pages}
  {233--243}\BibitemShut {NoStop}%
\bibitem [{\citenamefont {Verstraete}\ \emph {et~al.}(2003)\citenamefont
  {Verstraete}, \citenamefont {Dehaene},\ and\ \citenamefont
  {De~Moor}}]{VerstraeteEtAl_AlgorithmCriticalSLOCC}%
  \BibitemOpen
  \bibfield  {author} {\bibinfo {author} {\bibfnamefont {F.}~\bibnamefont
  {Verstraete}}, \bibinfo {author} {\bibfnamefont {J.}~\bibnamefont {Dehaene}},
  \ and\ \bibinfo {author} {\bibfnamefont {B.}~\bibnamefont {De~Moor}},\ }\href
  {\doibase 10.1103/PhysRevA.68.012103} {\bibfield  {journal} {\bibinfo
  {journal} {Phys. Rev. A}\ }\textbf {\bibinfo {volume} {68}},\ \bibinfo
  {pages} {012103} (\bibinfo {year} {2003})}\BibitemShut {NoStop}%
\bibitem [{\citenamefont {Guo}\ \emph {et~al.}(2016)\citenamefont {Guo},
  \citenamefont {Chitambar},\ and\ \citenamefont
  {Duan}}]{GuoChitambarDuan2016_LOCCHightoLowDim1}%
  \BibitemOpen
  \bibfield  {author} {\bibinfo {author} {\bibfnamefont {C.}~\bibnamefont
  {Guo}}, \bibinfo {author} {\bibfnamefont {E.}~\bibnamefont {Chitambar}}, \
  and\ \bibinfo {author} {\bibfnamefont {R.}~\bibnamefont {Duan}},\ }\href
  {\doibase 10.48550/ARXIV.1601.06220} {\enquote {\bibinfo {title} {Common
  resource state for preparing multipartite quantum systems via local
  operations and classical communication},}\ } (\bibinfo {year}
  {2016})\BibitemShut {NoStop}%
\bibitem [{\citenamefont {Hebenstreit}\ \emph {et~al.}(2018)\citenamefont
  {Hebenstreit}, \citenamefont {Gachechiladze}, \citenamefont {G\"uhne},\ and\
  \citenamefont {Kraus}}]{HebenstreitEtAl_SLOCCHightoLowDim3}%
  \BibitemOpen
  \bibfield  {author} {\bibinfo {author} {\bibfnamefont {M.}~\bibnamefont
  {Hebenstreit}}, \bibinfo {author} {\bibfnamefont {M.}~\bibnamefont
  {Gachechiladze}}, \bibinfo {author} {\bibfnamefont {O.}~\bibnamefont
  {G\"uhne}}, \ and\ \bibinfo {author} {\bibfnamefont {B.}~\bibnamefont
  {Kraus}},\ }\href {\doibase 10.1103/PhysRevA.97.032330} {\bibfield  {journal}
  {\bibinfo  {journal} {Phys. Rev. A}\ }\textbf {\bibinfo {volume} {97}},\
  \bibinfo {pages} {032330} (\bibinfo {year} {2018})}\BibitemShut {NoStop}%
\bibitem [{\citenamefont {Lo}\ and\ \citenamefont
  {Popescu}(2001)}]{LoPopescu2001_BipartiteLOCC=LOCC1}%
  \BibitemOpen
  \bibfield  {author} {\bibinfo {author} {\bibfnamefont {H.-K.}\ \bibnamefont
  {Lo}}\ and\ \bibinfo {author} {\bibfnamefont {S.}~\bibnamefont {Popescu}},\
  }\href {\doibase 10.1103/PhysRevA.63.022301} {\bibfield  {journal} {\bibinfo
  {journal} {Phys. Rev. A}\ }\textbf {\bibinfo {volume} {63}},\ \bibinfo
  {pages} {022301} (\bibinfo {year} {2001})}\BibitemShut {NoStop}%
\bibitem [{\citenamefont {Chitambar}(2011)}]{Chitambar2011_InfiniteRoundLOCC}%
  \BibitemOpen
  \bibfield  {author} {\bibinfo {author} {\bibfnamefont {E.}~\bibnamefont
  {Chitambar}},\ }\href {\doibase 10.1103/PhysRevLett.107.190502} {\bibfield
  {journal} {\bibinfo  {journal} {Phys. Rev. Lett.}\ }\textbf {\bibinfo
  {volume} {107}},\ \bibinfo {pages} {190502} (\bibinfo {year}
  {2011})}\BibitemShut {NoStop}%
\bibitem [{\citenamefont {Spee}\ \emph {et~al.}(2017)\citenamefont {Spee},
  \citenamefont {de~Vicente}, \citenamefont {Sauerwein},\ and\ \citenamefont
  {Kraus}}]{SpeeEtAl2017_LOCCN}%
  \BibitemOpen
  \bibfield  {author} {\bibinfo {author} {\bibfnamefont {C.}~\bibnamefont
  {Spee}}, \bibinfo {author} {\bibfnamefont {J.~I.}\ \bibnamefont
  {de~Vicente}}, \bibinfo {author} {\bibfnamefont {D.}~\bibnamefont
  {Sauerwein}}, \ and\ \bibinfo {author} {\bibfnamefont {B.}~\bibnamefont
  {Kraus}},\ }\href {\doibase 10.1103/PhysRevLett.118.040503} {\bibfield
  {journal} {\bibinfo  {journal} {Phys. Rev. Lett.}\ }\textbf {\bibinfo
  {volume} {118}},\ \bibinfo {pages} {040503} (\bibinfo {year}
  {2017})}\BibitemShut {NoStop}%
\bibitem [{\citenamefont {Spee}\ \emph {et~al.}(2016)\citenamefont {Spee},
  \citenamefont {de~Vicente},\ and\ \citenamefont
  {Kraus}}]{SpeeEtAl2016_FourQubitMES}%
  \BibitemOpen
  \bibfield  {author} {\bibinfo {author} {\bibfnamefont {C.}~\bibnamefont
  {Spee}}, \bibinfo {author} {\bibfnamefont {J.~I.}\ \bibnamefont
  {de~Vicente}}, \ and\ \bibinfo {author} {\bibfnamefont {B.}~\bibnamefont
  {Kraus}},\ }\href {\doibase 10.1063/1.4946895} {\bibfield  {journal}
  {\bibinfo  {journal} {Journal of Mathematical Physics}\ }\textbf {\bibinfo
  {volume} {57}},\ \bibinfo {pages} {052201} (\bibinfo {year}
  {2016})}\BibitemShut {NoStop}%
\bibitem [{\citenamefont {Bennett}\ \emph {et~al.}(1999)\citenamefont
  {Bennett}, \citenamefont {DiVincenzo}, \citenamefont {Fuchs}, \citenamefont
  {Mor}, \citenamefont {Rains}, \citenamefont {Shor}, \citenamefont {Smolin},\
  and\ \citenamefont {Wootters}}]{BennettEtAl1999_SEP>LOCC1}%
  \BibitemOpen
  \bibfield  {author} {\bibinfo {author} {\bibfnamefont {C.~H.}\ \bibnamefont
  {Bennett}}, \bibinfo {author} {\bibfnamefont {D.~P.}\ \bibnamefont
  {DiVincenzo}}, \bibinfo {author} {\bibfnamefont {C.~A.}\ \bibnamefont
  {Fuchs}}, \bibinfo {author} {\bibfnamefont {T.}~\bibnamefont {Mor}}, \bibinfo
  {author} {\bibfnamefont {E.}~\bibnamefont {Rains}}, \bibinfo {author}
  {\bibfnamefont {P.~W.}\ \bibnamefont {Shor}}, \bibinfo {author}
  {\bibfnamefont {J.~A.}\ \bibnamefont {Smolin}}, \ and\ \bibinfo {author}
  {\bibfnamefont {W.~K.}\ \bibnamefont {Wootters}},\ }\href {\doibase
  10.1103/PhysRevA.59.1070} {\bibfield  {journal} {\bibinfo  {journal} {Phys.
  Rev. A}\ }\textbf {\bibinfo {volume} {59}},\ \bibinfo {pages} {1070}
  (\bibinfo {year} {1999})}\BibitemShut {NoStop}%
\bibitem [{\citenamefont {Kleinmann}\ \emph {et~al.}(2011)\citenamefont
  {Kleinmann}, \citenamefont {Kampermann},\ and\ \citenamefont
  {Bru\ss{}}}]{KleinmannEtAl2011_SEP>LOCC2}%
  \BibitemOpen
  \bibfield  {author} {\bibinfo {author} {\bibfnamefont {M.}~\bibnamefont
  {Kleinmann}}, \bibinfo {author} {\bibfnamefont {H.}~\bibnamefont
  {Kampermann}}, \ and\ \bibinfo {author} {\bibfnamefont {D.}~\bibnamefont
  {Bru\ss{}}},\ }\href {\doibase 10.1103/PhysRevA.84.042326} {\bibfield
  {journal} {\bibinfo  {journal} {Phys. Rev. A}\ }\textbf {\bibinfo {volume}
  {84}},\ \bibinfo {pages} {042326} (\bibinfo {year} {2011})}\BibitemShut
  {NoStop}%
\bibitem [{\citenamefont {Chitambar}\ \emph {et~al.}(2012)\citenamefont
  {Chitambar}, \citenamefont {Cui},\ and\ \citenamefont
  {Lo}}]{ChitambarEtAl2012_SEP>LOCC3}%
  \BibitemOpen
  \bibfield  {author} {\bibinfo {author} {\bibfnamefont {E.}~\bibnamefont
  {Chitambar}}, \bibinfo {author} {\bibfnamefont {W.}~\bibnamefont {Cui}}, \
  and\ \bibinfo {author} {\bibfnamefont {H.-K.}\ \bibnamefont {Lo}},\ }\href
  {\doibase 10.1103/PhysRevLett.108.240504} {\bibfield  {journal} {\bibinfo
  {journal} {Phys. Rev. Lett.}\ }\textbf {\bibinfo {volume} {108}},\ \bibinfo
  {pages} {240504} (\bibinfo {year} {2012})}\BibitemShut {NoStop}%
\bibitem [{\citenamefont {Hebenstreit}\ \emph {et~al.}(2016)\citenamefont
  {Hebenstreit}, \citenamefont {Spee},\ and\ \citenamefont
  {Kraus}}]{HebenstreitEtAl2016_SEP>LOCC4}%
  \BibitemOpen
  \bibfield  {author} {\bibinfo {author} {\bibfnamefont {M.}~\bibnamefont
  {Hebenstreit}}, \bibinfo {author} {\bibfnamefont {C.}~\bibnamefont {Spee}}, \
  and\ \bibinfo {author} {\bibfnamefont {B.}~\bibnamefont {Kraus}},\ }\href
  {\doibase 10.1103/PhysRevA.93.012339} {\bibfield  {journal} {\bibinfo
  {journal} {Phys. Rev. A}\ }\textbf {\bibinfo {volume} {93}},\ \bibinfo
  {pages} {012339} (\bibinfo {year} {2016})}\BibitemShut {NoStop}%
\bibitem [{\citenamefont {Gour}\ and\ \citenamefont
  {Wallach}(2011)}]{Gour2011_SEP}%
  \BibitemOpen
  \bibfield  {author} {\bibinfo {author} {\bibfnamefont {G.}~\bibnamefont
  {Gour}}\ and\ \bibinfo {author} {\bibfnamefont {N.~R.}\ \bibnamefont
  {Wallach}},\ }\href {\doibase 10.1088/1367-2630/13/7/073013} {\bibfield
  {journal} {\bibinfo  {journal} {New Journal of Physics}\ }\textbf {\bibinfo
  {volume} {13}},\ \bibinfo {pages} {073013} (\bibinfo {year}
  {2011})}\BibitemShut {NoStop}%
\bibitem [{\citenamefont {Hebenstreit}\ \emph {et~al.}(2021)\citenamefont
  {Hebenstreit}, \citenamefont {Englbrecht}, \citenamefont {Spee},
  \citenamefont {de~Vicente},\ and\ \citenamefont
  {Kraus}}]{HebenstreitEtAl2021_SEP1isnotSEP}%
  \BibitemOpen
  \bibfield  {author} {\bibinfo {author} {\bibfnamefont {M.}~\bibnamefont
  {Hebenstreit}}, \bibinfo {author} {\bibfnamefont {M.}~\bibnamefont
  {Englbrecht}}, \bibinfo {author} {\bibfnamefont {C.}~\bibnamefont {Spee}},
  \bibinfo {author} {\bibfnamefont {J.~I.}\ \bibnamefont {de~Vicente}}, \ and\
  \bibinfo {author} {\bibfnamefont {B.}~\bibnamefont {Kraus}},\ }\href
  {\doibase 10.1088/1367-2630/abe60c} {\bibfield  {journal} {\bibinfo
  {journal} {New Journal of Physics}\ }\textbf {\bibinfo {volume} {23}},\
  \bibinfo {pages} {033046} (\bibinfo {year} {2021})}\BibitemShut {NoStop}%
\bibitem [{\citenamefont {Verstraete}\ \emph
  {et~al.}(2002{\natexlab{b}})\citenamefont {Verstraete}, \citenamefont
  {Dehaene},\ and\ \citenamefont
  {De~Moor}}]{VerstraeteEtAl2002_GHZOptimalDistillation}%
  \BibitemOpen
  \bibfield  {author} {\bibinfo {author} {\bibfnamefont {F.}~\bibnamefont
  {Verstraete}}, \bibinfo {author} {\bibfnamefont {J.}~\bibnamefont {Dehaene}},
  \ and\ \bibinfo {author} {\bibfnamefont {B.}~\bibnamefont {De~Moor}},\ }\href
  {\doibase 10.1103/PhysRevA.65.032308} {\bibfield  {journal} {\bibinfo
  {journal} {Phys. Rev. A}\ }\textbf {\bibinfo {volume} {65}},\ \bibinfo
  {pages} {032308} (\bibinfo {year} {2002}{\natexlab{b}})}\BibitemShut
  {NoStop}%
\bibitem [{\citenamefont
  {Yildiz}(2010)}]{Yildiz2010_OptimalDistillationOfWState}%
  \BibitemOpen
  \bibfield  {author} {\bibinfo {author} {\bibfnamefont {A.}~\bibnamefont
  {Yildiz}},\ }\href {\doibase 10.1103/PhysRevA.82.012317} {\bibfield
  {journal} {\bibinfo  {journal} {Phys. Rev. A}\ }\textbf {\bibinfo {volume}
  {82}},\ \bibinfo {pages} {012317} (\bibinfo {year} {2010})}\BibitemShut
  {NoStop}%
\bibitem [{\citenamefont {Cui}\ \emph {et~al.}(2010)\citenamefont {Cui},
  \citenamefont {Helwig},\ and\ \citenamefont
  {Lo}}]{CuiEtAl2010_UpperBoundsonTransfoBetweenGHZStates}%
  \BibitemOpen
  \bibfield  {author} {\bibinfo {author} {\bibfnamefont {W.}~\bibnamefont
  {Cui}}, \bibinfo {author} {\bibfnamefont {W.}~\bibnamefont {Helwig}}, \ and\
  \bibinfo {author} {\bibfnamefont {H.-K.}\ \bibnamefont {Lo}},\ }\href
  {\doibase 10.1103/PhysRevA.81.012111} {\bibfield  {journal} {\bibinfo
  {journal} {Phys. Rev. A}\ }\textbf {\bibinfo {volume} {81}},\ \bibinfo
  {pages} {012111} (\bibinfo {year} {2010})}\BibitemShut {NoStop}%
\bibitem [{\citenamefont {Sauerwein}\ \emph
  {et~al.}(2018{\natexlab{b}})\citenamefont {Sauerwein}, \citenamefont
  {Schwaiger},\ and\ \citenamefont
  {Kraus}}]{Sauerwein2018_DifferentiableTransfo}%
  \BibitemOpen
  \bibfield  {author} {\bibinfo {author} {\bibfnamefont {D.}~\bibnamefont
  {Sauerwein}}, \bibinfo {author} {\bibfnamefont {K.}~\bibnamefont
  {Schwaiger}}, \ and\ \bibinfo {author} {\bibfnamefont {B.}~\bibnamefont
  {Kraus}},\ }\href {\doibase 10.48550/ARXIV.1808.02819} {\enquote {\bibinfo
  {title} {Discrete and differentiable entanglement transformations},}\ }
  (\bibinfo {year} {2018}{\natexlab{b}})\BibitemShut {NoStop}%
\bibitem [{\citenamefont {Chen}\ and\ \citenamefont
  {Hayashi}(2011)}]{ChenHayashi2011_MulticopyStochasticMultipartite}%
  \BibitemOpen
  \bibfield  {author} {\bibinfo {author} {\bibfnamefont {L.}~\bibnamefont
  {Chen}}\ and\ \bibinfo {author} {\bibfnamefont {M.}~\bibnamefont {Hayashi}},\
  }\href {\doibase 10.1103/PhysRevA.83.022331} {\bibfield  {journal} {\bibinfo
  {journal} {Phys. Rev. A}\ }\textbf {\bibinfo {volume} {83}},\ \bibinfo
  {pages} {022331} (\bibinfo {year} {2011})}\BibitemShut {NoStop}%
\bibitem [{\citenamefont {Linden}\ \emph {et~al.}(1999)\citenamefont {Linden},
  \citenamefont {Popescu}, \citenamefont {Schumacher},\ and\ \citenamefont
  {Westmoreland}}]{LindenEtAl1999MREGS2}%
  \BibitemOpen
  \bibfield  {author} {\bibinfo {author} {\bibfnamefont {N.}~\bibnamefont
  {Linden}}, \bibinfo {author} {\bibfnamefont {S.}~\bibnamefont {Popescu}},
  \bibinfo {author} {\bibfnamefont {B.}~\bibnamefont {Schumacher}}, \ and\
  \bibinfo {author} {\bibfnamefont {M.}~\bibnamefont {Westmoreland}},\ }\href
  {\doibase 10.48550/ARXIV.QUANT-PH/9912039} {\enquote {\bibinfo {title}
  {Reversibility of local transformations of multiparticle entanglement},}\ }
  (\bibinfo {year} {1999})\BibitemShut {NoStop}%
\bibitem [{\citenamefont {Bennett}\ \emph {et~al.}(2000)\citenamefont
  {Bennett}, \citenamefont {Popescu}, \citenamefont {Rohrlich}, \citenamefont
  {Smolin},\ and\ \citenamefont {Thapliyal}}]{BennetEtAl2000_MREGS1}%
  \BibitemOpen
  \bibfield  {author} {\bibinfo {author} {\bibfnamefont {C.~H.}\ \bibnamefont
  {Bennett}}, \bibinfo {author} {\bibfnamefont {S.}~\bibnamefont {Popescu}},
  \bibinfo {author} {\bibfnamefont {D.}~\bibnamefont {Rohrlich}}, \bibinfo
  {author} {\bibfnamefont {J.~A.}\ \bibnamefont {Smolin}}, \ and\ \bibinfo
  {author} {\bibfnamefont {A.~V.}\ \bibnamefont {Thapliyal}},\ }\href {\doibase
  10.1103/PhysRevA.63.012307} {\bibfield  {journal} {\bibinfo  {journal} {Phys.
  Rev. A}\ }\textbf {\bibinfo {volume} {63}},\ \bibinfo {pages} {012307}
  (\bibinfo {year} {2000})}\BibitemShut {NoStop}%
\bibitem [{\citenamefont {Fuchs}\ and\ \citenamefont {van~de
  Graaf}(1999)}]{FuchsDeGraaf1997_FidelityInequalities}%
  \BibitemOpen
  \bibfield  {author} {\bibinfo {author} {\bibfnamefont {C.}~\bibnamefont
  {Fuchs}}\ and\ \bibinfo {author} {\bibfnamefont {J.}~\bibnamefont {van~de
  Graaf}},\ }\href {\doibase 10.1109/18.761271} {\bibfield  {journal} {\bibinfo
   {journal} {{IEEE} Transactions on Information Theory}\ }\textbf {\bibinfo
  {volume} {45}},\ \bibinfo {pages} {1216} (\bibinfo {year}
  {1999})}\BibitemShut {NoStop}%
\bibitem [{\citenamefont {Andersson}\ and\ \citenamefont
  {Oi}(2007)}]{AnderssonOi2007_BinaryMeasurementsSufficient}%
  \BibitemOpen
  \bibfield  {author} {\bibinfo {author} {\bibfnamefont {E.}~\bibnamefont
  {Andersson}}\ and\ \bibinfo {author} {\bibfnamefont {D.~K.~L.}\ \bibnamefont
  {Oi}},\ }\href {\doibase 10.48550/ARXIV.0712.2665} {\  (\bibinfo {year}
  {2007}),\ 10.48550/ARXIV.0712.2665}\BibitemShut {NoStop}%
\bibitem [{\citenamefont {H\"ubener}\ \emph {et~al.}(2009)\citenamefont
  {H\"ubener}, \citenamefont {Kleinmann}, \citenamefont {Wei}, \citenamefont
  {Gonz\'alez-Guill\'en},\ and\ \citenamefont
  {G\"uhne}}]{Hubeneretal2009_GeometricEntSymStates}%
  \BibitemOpen
  \bibfield  {author} {\bibinfo {author} {\bibfnamefont {R.}~\bibnamefont
  {H\"ubener}}, \bibinfo {author} {\bibfnamefont {M.}~\bibnamefont
  {Kleinmann}}, \bibinfo {author} {\bibfnamefont {T.-C.}\ \bibnamefont {Wei}},
  \bibinfo {author} {\bibfnamefont {C.}~\bibnamefont {Gonz\'alez-Guill\'en}}, \
  and\ \bibinfo {author} {\bibfnamefont {O.}~\bibnamefont {G\"uhne}},\ }\href
  {\doibase 10.1103/PhysRevA.80.032324} {\bibfield  {journal} {\bibinfo
  {journal} {Phys. Rev. A}\ }\textbf {\bibinfo {volume} {80}},\ \bibinfo
  {pages} {032324} (\bibinfo {year} {2009})}\BibitemShut {NoStop}%
\bibitem [{\citenamefont {Neven}\ \emph {et~al.}(2016)\citenamefont {Neven},
  \citenamefont {Mathonet}, \citenamefont {G\"uhne},\ and\ \citenamefont
  {Bastin}}]{NevenEtAl2016_FidelitySymStates}%
  \BibitemOpen
  \bibfield  {author} {\bibinfo {author} {\bibfnamefont {A.}~\bibnamefont
  {Neven}}, \bibinfo {author} {\bibfnamefont {P.}~\bibnamefont {Mathonet}},
  \bibinfo {author} {\bibfnamefont {O.}~\bibnamefont {G\"uhne}}, \ and\
  \bibinfo {author} {\bibfnamefont {T.}~\bibnamefont {Bastin}},\ }\href
  {\doibase 10.1103/PhysRevA.94.052332} {\bibfield  {journal} {\bibinfo
  {journal} {Phys. Rev. A}\ }\textbf {\bibinfo {volume} {94}},\ \bibinfo
  {pages} {052332} (\bibinfo {year} {2016})}\BibitemShut {NoStop}%
\bibitem [{\citenamefont {Aulbach}\ \emph {et~al.}(2010)\citenamefont
  {Aulbach}, \citenamefont {Markham},\ and\ \citenamefont
  {Murao}}]{AulbachEtAl2010_GeomEntSymPosStates}%
  \BibitemOpen
  \bibfield  {author} {\bibinfo {author} {\bibfnamefont {M.}~\bibnamefont
  {Aulbach}}, \bibinfo {author} {\bibfnamefont {D.}~\bibnamefont {Markham}}, \
  and\ \bibinfo {author} {\bibfnamefont {M.}~\bibnamefont {Murao}},\ }\href
  {\doibase 10.1088/1367-2630/12/7/073025} {\bibfield  {journal} {\bibinfo
  {journal} {New Journal of Physics}\ }\textbf {\bibinfo {volume} {12}},\
  \bibinfo {pages} {073025} (\bibinfo {year} {2010})}\BibitemShut {NoStop}%
\bibitem [{\citenamefont {Bartle}\ and\ \citenamefont
  {Sherbert}(1999)}]{Bartle1999_BolzanoWeierstrass}%
  \BibitemOpen
  \bibfield  {author} {\bibinfo {author} {\bibfnamefont {R.~G.}\ \bibnamefont
  {Bartle}}\ and\ \bibinfo {author} {\bibfnamefont {D.~R.}\ \bibnamefont
  {Sherbert}},\ }\href@noop {} {}\bibinfo {edition} {3rd}\ ed.\ (\bibinfo
  {publisher} {John Wiley \& Sons},\ \bibinfo {address} {Nashville, TN},\
  \bibinfo {year} {1999})\BibitemShut {NoStop}%
\bibitem [{\citenamefont {S{\l{}}owik}\ \emph {et~al.}(2020)\citenamefont
  {S{\l{}}owik}, \citenamefont {Hebenstreit}, \citenamefont {Kraus},\ and\
  \citenamefont {Sawicki}}]{Slowik2020_SLOCCtypes}%
  \BibitemOpen
  \bibfield  {author} {\bibinfo {author} {\bibfnamefont {O.}~\bibnamefont
  {S{\l{}}owik}}, \bibinfo {author} {\bibfnamefont {M.}~\bibnamefont
  {Hebenstreit}}, \bibinfo {author} {\bibfnamefont {B.}~\bibnamefont {Kraus}},
  \ and\ \bibinfo {author} {\bibfnamefont {A.}~\bibnamefont {Sawicki}},\ }\href
  {\doibase 10.22331/q-2020-07-20-300} {\bibfield  {journal} {\bibinfo
  {journal} {{Quantum}}\ }\textbf {\bibinfo {volume} {4}},\ \bibinfo {pages}
  {300} (\bibinfo {year} {2020})}\BibitemShut {NoStop}%
\end{thebibliography}

\section{Appendix}

\subsection{SEP Ensemble Transformations}
\label{sec:AppendixSepEnsTransfo}
Here we derive Thm. \ref{thm:SEPensemble}. The proof is analogous to that presented in Refs. \cite{Gour2011_SEP, GourKrausWallach2017_AlmostAllTrivStab, HebenstreitEtAl2021_SEP1isnotSEP}.

\setcounter{theorem}{1}
\begin{theorem}[\cite{Gour2011_SEP,HebenstreitEtAl2021_SEP1isnotSEP}]
The state $g\ket{\psi_s}$ can be transformed to the (finite) ensemble $\{(p_i, h_i\ket{\psi_s})\}$ (with $h_i$ local and invertible) via SEP if and only if there exists a finite set of probabilities $\{p_{ij}\}$, symmetries $\{S_j\} \subseteq \mathcal{S}_{\ket{\psi_s}}$, and $N_q\in\mathcal{N}_{g\ket{\psi_s}}$ such that $\sum_j p_{ij} = p_{i}$ and
\begin{equation}
    \sum_{ij} \frac{1}{r_{i}} p_{ij} S_{ij}^\dagger H_i S_{ij} + g^\dagger\sum_q N_q^\dagger N_q g = G,\label{eq:SEPensAppendix}
\end{equation}
where $r_i=||h_i\ket{\psi_s}||^2/||g\ket{\psi_s}||^2$.
\end{theorem}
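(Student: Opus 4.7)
The plan is to adapt the argument used for the deterministic case (Theorem~1) to ensemble transformations, treating each branch of the output ensemble as playing the role of the single deterministic output, and keeping track of the extra book-keeping via the pair of indices $(i,j)$.

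For the ``only if'' direction I would start from a SEP map realizing the ensemble transformation, described by local Kraus operators $\{M_k\}$ satisfying $\sum_k M_k^\dagger M_k=\mathbbm{1}$. I would then partition them according to the output they produce: those that annihilate $g\ket{\psi_s}$ go into the set $\{N_q\}\subseteq\mathcal{N}_{g\ket{\psi_s}}$, while those that do not must map $g\ket{\psi_s}$ to a vector proportional to some $h_i\ket{\psi_s}$. Grouping the latter by $i$ and re-indexing them as $M_{ij}$, the identity $M_{ij}g\ket{\psi_s}=\alpha_{ij}h_i\ket{\psi_s}$ implies that $h_i^{-1}M_{ij}g$ fixes $\ket{\psi_s}$ up to the scalar $\alpha_{ij}$; since $h_i$ and $g$ are local and invertible, we can write $M_{ij}=\alpha_{ij}h_iS_{ij}g^{-1}$ for some local symmetry $S_{ij}\in\mathcal{S}_{\ket{\psi_s}}$. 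A direct probability computation gives $p_{ij}=\|M_{ij}g\ket{\psi_s}\|^2/\|g\ket{\psi_s}\|^2=|\alpha_{ij}|^2 r_i$, so $|\alpha_{ij}|^2=p_{ij}/r_i$, and $\sum_j p_{ij}=p_i$ by construction. Sandwiching the completeness relation between $g^\dagger$ and $g$ then yields Eq.~(\ref{eq:SEPensAppendix}).

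For the ``if'' direction I would run the argument in reverse. Given $\{p_{ij}\}$, $\{S_{ij}\}\subseteq\mathcal{S}_{\ket{\psi_s}}$ and $\{N_q\}\subseteq\mathcal{N}_{g\ket{\psi_s}}$ satisfying Eq.~(\ref{eq:SEPensAppendix}), define candidate Kraus operators
\begin{equation}
M_{ij}=\sqrt{p_{ij}/r_i}\,h_iS_{ij}g^{-1},
\end{equation}
together with the $N_q$. All of them are manifestly local. Conjugating Eq.~(\ref{eq:SEPensAppendix}) by $g^{-\dagger}$ and $g^{-1}$ yields exactly the completeness relation $\sum_{ij}M_{ij}^\dagger M_{ij}+\sum_q N_q^\dagger N_q=\mathbbm{1}$, so $\{M_{ij}\}\cup\{N_q\}$ defines a SEP map. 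By construction $M_{ij}g\ket{\psi_s}=\sqrt{p_{ij}/r_i}\,h_i\ket{\psi_s}$, so after normalisation the branch labelled $(i,j)$ produces the state $h_i\ket{\psi_s}/\|h_i\ket{\psi_s}\|$ with probability $p_{ij}$; summing over $j$ gives the state $h_i\ket{\psi_s}$ with total probability $p_i$, i.e.\ the desired ensemble.

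The main technical point to handle carefully is the step $M_{ij}=\alpha_{ij}h_iS_{ij}g^{-1}$ in the forward direction: a priori $M_{ij}$ need only satisfy $M_{ij}g\ket{\psi_s}\propto h_i\ket{\psi_s}$, so one must argue that it can be replaced (without changing its action on $g\ket{\psi_s}$) by an operator of the form $\alpha_{ij}h_iS_{ij}g^{-1}$, with the ``leftover'' piece absorbed into the annihilating set $\{N_q\}$. Locality of the leftover piece follows since $h_i$, $S_{ij}$ and $g^{-1}$ are local, and the freedom to choose $S_{ij}$ (which can in particular be taken equal to $\mathbbm{1}$) is precisely why the statement allows the doubly-indexed sum over $(i,j)$ rather than just over $i$. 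Once this decomposition is in place, the remaining steps are just book-keeping.
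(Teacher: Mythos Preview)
Your overall strategy matches the paper's: split the Kraus operators of a SEP realization into those that annihilate $g\ket{\psi_s}$ and those that do not, express the latter as $M_{ij}=\alpha_{ij}h_iS_{ij}g^{-1}$ using the stabilizer, and read off Eq.~(\ref{eq:SEPensAppendix}) from completeness; the converse then follows by direct construction. This is exactly what the appendix does (citing \cite{Gour2011_SEP,GourKrausWallach2017_AlmostAllTrivStab,HebenstreitEtAl2021_SEP1isnotSEP} for the splitting step).

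However, your last paragraph contains a genuine error. You correctly flag that from $M_{ij}g\ket{\psi_s}=\alpha_{ij}h_i\ket{\psi_s}$ one cannot immediately conclude $h_i^{-1}M_{ij}g\in\mathcal{S}_{\ket{\psi_s}}$, since the stabilizer is defined inside $GL(d,\mathbbm{C})^{\otimes n}$ and $M_{ij}$ could a priori be singular. But your proposed fix --- replacing $M_{ij}$ by $\alpha_{ij}h_iS_{ij}g^{-1}$ and absorbing the ``leftover'' $M_{ij}-\alpha_{ij}h_iS_{ij}g^{-1}$ into $\{N_q\}$ --- fails, because the difference of two product operators is in general \emph{not} a product operator, so the leftover is not local and does not lie in $\mathcal{N}_{g\ket{\psi_s}}$. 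Your sentence ``Locality of the leftover piece follows since $h_i$, $S_{ij}$ and $g^{-1}$ are local'' is simply false.

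The correct resolution (implicit in the references the paper cites) is that no leftover is needed: any local Kraus operator $M_{ij}=\otimes_\ell A_\ell$ with $M_{ij}g\ket{\psi_s}\propto h_i\ket{\psi_s}\neq 0$ is automatically invertible. Indeed, the single-site reduced state of $h_i\ket{\psi_s}$ at site $\ell$ has full rank (since $h_i$ is local invertible and the seed state is fully entangled), while the single-site reduced state of $M_{ij}g\ket{\psi_s}$ at site $\ell$ has rank at most $\mathrm{rk}(A_\ell)$; hence each $A_\ell$ is full rank. With $M_{ij}$ invertible, $h_i^{-1}M_{ij}g$ is a bona fide element of $\mathcal{S}_{\ket{\psi_s}}$, and the rest of your argument goes through unchanged.
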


\begin{proof}
The state $g \ket{\psi}/n_g$ can be transformed to the ensemble $\{(p_i, h_i \ket{\psi}/n_{h_i})\}$ iff there is a (CPTP) SEP channel $\Lambda$, which decomposes into trace-non-increasing separable CP maps, $\Lambda_i$, corresponding to each output; i.e., $\Lambda=\sum_i \Lambda_i\otimes \ket{i}\bra{i}$, such that 
\begin{equation}
    \Lambda_i\left( \frac{g| \psi\rangle\langle \psi |g^\dagger}{ n_g^2} \right)= p_k  \frac{h_i |  \psi\rangle\langle \psi | h_i^\dagger}{ n_{h_i}^2} \ \forall i.
    \label{eq:appendixSEPens1}
\end{equation}
For the same reasons as in Refs \cite{Gour2011_SEP, GourKrausWallach2017_AlmostAllTrivStab, HebenstreitEtAl2021_SEP1isnotSEP}, we have

\begin{equation}
    \Lambda_i(\cdot)=\sum_{j} M_{ij} (\cdot) M_{ij}^\dagger + \sum_q N_{iq}^\dagger (\cdot) N_{iq},
    \label{eq:appendixsepchannel}
\end{equation}
where $M_{ij}$ are invertible, $N_{qi}\in \mathcal{N}_{g \ket{\psi}_s}$ and
\begin{align}
    M_{ij} g \ket{\psi}/n_g &= \sqrt{p_{ij}} h_i \ket{\psi}/n_{h_i} \\
    N_{iq} g \ket{\psi}/n_g &= 0,
\end{align}
where $\sqrt{p_{ij}}$ are some positive coefficients. Rearranging the first equation above we have:
\begin{align}
    \frac{1}{\sqrt{p_{ij}}} \frac{n_{h_i}}{n_g} h_{i}^{-1}M_{ij}g \ket{\psi} = \ket{\psi}.
\end{align}
Hence, the operator belongs to the stabilizer of $\ket{\psi}$, and thus we have:
\begin{align}
    M_{ij}=\frac{n_g}{n_{h_i}} \sqrt{p_{ij}}\ h_i S_{ij} g^{-1}.
\end{align}
Applying this to Eq. (\ref{eq:appendixSEPens1}), we have $\sum_{j} p_{ij} = p_j$. Moreover, applying the completeness relation (and gathering annihilating operators) we end up with Eq. (\ref{eq:SEPensAppendix}).

\end{proof}

 As final comments, we note that it follows from the same arguments found in Ref. \cite{HebenstreitEtAl2021_SEP1isnotSEP} that $\text{LOCC}_{\mathbbm{N}}$ ensemble transformations within an SLOCC class are a subset of SEP$_1$ ensemble transformations. That is, any LOCC$_\mathbf{N}$ ensemble transformation within an SLOCC class does not make use of non-invertible, local operators (the $N_{iq}$ in Eq (\ref{eq:appendixsepchannel})). Moreover, it is easy to verify that the entanglement monotones introduced in Ref. \cite{Sauerwein2018_DifferentiableTransfo} (Eq. (\ref{eq:genericentmonotones})) are in fact invariant (monotonic) under deterministic (ensemble) SEP transformations within a generic SLOCC class. Whereas it was shown in 
 Ref. \cite{Sauerwein2018_DifferentiableTransfo} that they are invariant under SEP$_1$ within an SLOCC class, it is straightforward to show that (taking the non-invertible operators into account) any SEP map (with outputs only within the same SLOCC class) can never increase those entanglement monotones.

\subsection{Alternative Proof of Lemma 3}
\label{sec:AppendixusingVidalsInequality}
Here, we present an alternate proof that it is in fact sufficient to restrict ourselves to transformations where the initial state is pure. To see this, assume an initial mixed state, $\rho$, is $\delta$-close to some pure state, $\ket{\psi}$, and $\Lambda(\rho)$ is $\epsilon$-close to $\ket{\phi}$, with $\Lambda$ being some LOCC protocol. Then by Eq. (\ref{eq:vidalinequality}) we have that
\begin{equation}
    d(\psi \rightarrow \phi ) \le d(\rho \rightarrow \phi ) + d(\psi,\rho).
\end{equation}
Using Eq. (\ref{eq:FidInequality}) and Eq.  (\ref{eq:FidInequalityonepurestate}), we have 
\begin{equation}
    F(\psi \rightarrow \phi ) \ge 1 - ( \sqrt{\delta} + \sqrt{\epsilon}).
\end{equation}
Note, however that the bound in Lemma 3 is tighter as  $\epsilon \ll \sqrt{\epsilon}$.

\subsection{Approximate Transformations in the Limit $\epsilon\rightarrow 0$}
\label{sec:Appendixlimitingcase}

In this section, we consider multipartite approximate transformations in the limiting case where $\delta =0$ and $\epsilon$ is arbitrarily small. That is, we fix the initial state (set $\delta=0$) and ask which states can be reached via approximate transformations for all $\epsilon >0$, i.e., we consider
\begin{align}
    A(\ket{\psi}) = \{ \ket{\phi} \in \mathcal{H} :  \forall \epsilon > 0,\ \exists\ \Lambda\in T^{0, \epsilon}_{ens}(\psi,\phi) \}.
\end{align}
It is easy to see that $A(\ket{\psi})$ can be equivalently characterised as the set of  limit points of a sequence of LOCC maps evaluated on $\ket{\psi}$, i.e., 
\begin{align}
    A(\ket{\psi}) &= \big\{\ket{\phi}\in\mathcal{H} : \exists (\Lambda_i)_{i\in\mathbbm{N}}\subseteq LOCC \nonumber \\
    & \qquad \qquad \quad  : \lim_{i\rightarrow \infty}\ \Lambda_i(|\psi\rangle\langle \psi|) \ =\ |\phi\rangle\langle \phi|  \big\},
\end{align}
where $\lim_{i\rightarrow \infty}\ \Lambda_i(|\psi\rangle\langle \psi|) \ =\ |\phi\rangle\langle \phi|$ denotes that  $\Lambda_i(|\psi \rangle\langle\psi |)$ converges to $|\phi\rangle\langle\phi|$ with respect to the trace distance. 

All $\text{LOCC}$ maps are  separable maps. As the set of separable maps, $\text{SEP}$, on a finite Hilbert space is a closed and bounded subset of a finite vector space, it is compact. Therefore, by the Bolzano–Weierstrass theorem \cite{Bartle1999_BolzanoWeierstrass}, the sequence $(\Lambda_i)_{i\in\mathbbm{N}}\subseteq \text{LOCC} \subseteq \text{SEP}$ will contain a sub-sequence, $(\tilde{\Lambda}_i)\subseteq \text{SEP}$, which converges to some separable map, $\mathcal{E}\in \text{SEP}$, i.e., $\tilde{\Lambda}_i \rightarrow_\diamond \mathcal{E}\in \text{SEP}$ (where convergence is defined with respect to the diamond norm). Convergence wrt to the diamond norm implies that, for all $\ket{\psi}$, $\lim_{i\rightarrow \infty}\ \tilde{\Lambda}_i(|\psi\rangle\langle \psi|)\ =\ \mathcal{E}(|\psi\rangle\langle \psi|)$. In particular, as $\lim_{i\rightarrow \infty}\ \tilde{\Lambda}_i(|\psi\rangle\langle \psi|) \ =\ |\phi\rangle\langle \phi|$, we have $ \mathcal{E}(|\psi\rangle\langle \psi|) =|\phi\rangle\langle \phi|$. That is, every element of $A(\ket{\psi})$ can also be reached from $\ket{\psi}$ via a separable map, i.e.,
\begin{align}
    A(\ket{\psi})\subseteq \big\{\ket{\psi}\in\mathcal{H} : \exists\ \mathcal{E}\in \text{SEP}, \ \mathcal{E}(|\psi\rangle\langle \psi|) =|\phi\rangle\langle \phi|\big\}.
\end{align}
This result tells us that the limiting case of approximate transformations is at most as powerful as $\text{SEP}$ transformations. Consequently, all the result regarding isolation and the MES from Ref. \cite{DeVicenteEtAl2013_MES, GourKrausWallach2017_AlmostAllTrivStab, SauerweinEtAl2018_AlmostAllStatesNotReachable} (see the preliminaries) also hold in this limiting case.

\subsection{Overlap between $\epsilon$-Vicinities}
\label{sec:apppendixoverlapbetweenballs}

In this appendix, we study when the vicinities of the initial and the final state can overlap. In this case, the trivial transformation, i.e., not doing anything, would transform the initial state to the target state. Note that the fact that we have to consider any LU-equivalent state here complicates the derivation of conditions which ensure that no trivial transformation exists. 

Let us define

    \begin{align}
        &F_{triv}(\ket{\psi},\ket{\phi}) \nonumber\\
        &\quad \equiv \max_{U_i, V_i}  \max_{\ket{\chi}} \min \left\{ F(\otimes_i U_i \ket{\psi},\ket{\chi}),\ F(\ket{\chi},\otimes_i V_i \ket{\phi(\lambda)}) \right\} \nonumber \\
        &\quad = \max_{U_i}  \max_{\ket{\chi}} \min \left\{ F(\ket{\psi},\ket{\chi}),\ F(\ket{\chi}, \otimes_i U_i  \ket{\phi(\lambda)}) \right\}.
    \end{align}
    
Given an $\epsilon\ge0$, if $F_{triv}\ge 1-\epsilon$, then there exists a state, $\ket{\chi}$, that lies in both the $\epsilon$-vicinities of $\ket{\psi}$ and $\ket{\phi}$, and therefore there is a trivial transformation between the two states. Conversely, if we construct a protocol such that $F_{prot}>F_{triv}$, then we know that, for $\epsilon=1-F_{prot}$, there is no state which lies in both $\epsilon$-vicinities at the same time. Hence, the trivial transformation, i.e., doing nothing, would not lead to a higher fidelity. We then have the following observation.

\begin{observation}
   The following bounds on $F_{triv}(\ket{\psi}\ket{\phi})$ hold
    \begin{align}
        F^{LB}_{triv}(\ket{\psi},\ket{\phi})\le F_{triv}(\ket{\psi},\ket{\phi}) \le F^{UB}_{triv}(\ket{\psi},\ket{\phi}),
    \end{align}
    where
    \begin{align}
    F_{triv}^{LB}(\ket{\psi},\ket{\phi})&=F(\ket{\psi},\ket{\phi}),\\
    F_{triv}^{UB}(\ket{\psi},\ket{\phi})&= \frac{3}{4} +  \frac{1}{4} \left(\sum_i \sqrt{\mu_i(\rho_1(\psi))\mu_i(\rho_1(\phi))}\right)^2 
    \end{align}
    where $\mu_i(\rho_1(\psi))$ ($\mu_i(\rho_1(\phi))$) are the ordered eigenvalues of the site-1 reduced density of $\ket{\psi}$ ($\ket{\phi}$). 
\end{observation}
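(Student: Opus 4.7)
My plan is to prove the lower bound by exhibiting a specific choice, and prove the upper bound in two steps: first reducing $F_{triv}$ to the LU-optimised fidelity $F_{LU}(\ket{\psi},\ket{\phi})$ via the trace-distance triangle inequality, then applying a bipartite-unitary relaxation to bound $F_{LU}$ itself.

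For the lower bound, I would simply take $\bigotimes_i U_i = \mathbbm{1}^{\otimes n}$ and $\ket{\chi}=\ket{\phi}$. Then $F(\ket{\psi},\ket{\chi})=F(\ket{\psi},\ket{\phi})$ while $F(\ket{\chi},\bigotimes_i U_i \ket{\phi})=1$, so the minimum equals $F(\ket{\psi},\ket{\phi})$. Since $F_{triv}$ is defined as a maximum over $U_i$ and $\ket{\chi}$, this immediately gives $F_{triv}(\ket{\psi},\ket{\phi}) \ge F(\ket{\psi},\ket{\phi})$.

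For the upper bound, I would pick any (near-)maximiser $(\{U_i\},\ket{\chi})$ and set $\ket{\tilde\phi}=\bigotimes_i U_i \ket{\phi}$, so both $F(\ket{\psi},\ket{\chi})$ and $F(\ket{\chi},\ket{\tilde\phi})$ are at least $F_{triv}$ (modulo passing to a maximising sequence if the supremum is not attained). I would then exploit the identity $D(\ket{\xi_1}\bra{\xi_1},\ket{\xi_2}\bra{\xi_2})=\sqrt{1-F(\xi_1,\xi_2)}$, which makes the triangle inequality for the trace distance read
\begin{equation}
\sqrt{1-F(\ket{\psi},\ket{\tilde\phi})}\le \sqrt{1-F(\ket{\psi},\ket{\chi})}+\sqrt{1-F(\ket{\chi},\ket{\tilde\phi})}\le 2\sqrt{1-F_{triv}}.
\end{equation}
Squaring gives $F(\ket{\psi},\ket{\tilde\phi})\ge 4 F_{triv}-3$. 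Maximising the left-hand side over the local unitaries $U_i$ yields $F_{LU}(\ket{\psi},\ket{\phi})\ge 4F_{triv}-3$, i.e.\ $F_{triv}\le \tfrac{3}{4}+\tfrac{1}{4} F_{LU}(\ket{\psi},\ket{\phi})$.

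To finish, I would relax the product unitaries $\bigotimes_i U_i$ to bipartite unitaries $V_1\otimes V_{2\ldots n}$ in the split $1|2\cdots n$, giving $F_{LU}(\ket{\psi},\ket{\phi})\le \max_{V_1,V_{2\ldots n}}|\langle\psi|V_1\otimes V_{2\ldots n}|\phi\rangle|^2$. This bipartite maximum is the standard inner-product optimisation invoked elsewhere in the paper (and in Ref.~\cite{VidalJonathanNielsen2000_ApproxLOCC}), and evaluates to $\bigl(\sum_i\sqrt{\mu_i(\rho_1(\psi))\mu_i(\rho_1(\phi))}\bigr)^2$ when the eigenvalues are ordered. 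Combining the two bounds yields $F_{triv}\le \tfrac{3}{4}+\tfrac{1}{4}\bigl(\sum_i\sqrt{\mu_i(\rho_1(\psi))\mu_i(\rho_1(\phi))}\bigr)^2=F_{triv}^{UB}$. The only real subtlety is the passage from ``for a maximising choice'' to ``for the supremum'', which I would handle by taking a maximising sequence and using continuity of the fidelities in their arguments; nothing else in the argument is more than a triangle inequality plus a standard bipartite identity, so I do not expect a genuine obstacle.
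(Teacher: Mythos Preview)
Your proposal is correct and follows essentially the same approach as the paper's own proof: both establish the lower bound by the trivial choice $\ket{\chi}=\ket{\phi}$, and both prove the upper bound by combining the pure-state identity $D=\sqrt{1-F}$ with the triangle inequality for the trace distance to get $F_{triv}\le \tfrac{3}{4}+\tfrac{1}{4}F_{LU}(\ket{\psi},\ket{\phi})$, and then relax local unitaries to bipartite unitaries and invoke the Vidal--Jonathan--Nielsen formula. The only cosmetic difference is that you work with a (near-)maximiser while the paper bounds $\min\{F(\psi,\chi),F(\chi,\phi)\}$ for arbitrary $\ket{\chi}$ first and then maximises; your remark about the supremum is not actually needed since the optimisation is over compact sets (local unitaries and normalised states).
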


Before proving this theorem, note that one need not only consider the site-1 reduced density matrix to obtain a bound. Any bipartite splitting leads to an upper bound. We chose to optimise over bipartite unitaries in the bipartition $1|2...n$ because this suffices for our examples.

\begin{proof}
    Whereas the lower bound is trivial, the upper bound can be shown as follows. We begin by noting that, as $\ket{\psi}, \ket{\phi}$ and $\ket{\chi}$ are all pure, the upper bound in Eq. (\ref{eq:FidInequality}) is an equality. Therefore, we have
\begin{align}
    \min &\{F(\ket{\psi},\ket{\chi}), F(\ket{\chi},\ket{\phi})    \nonumber \\
    &= \min \left\{ 1-D^2(\ket{\psi},\ket{\chi}), 1-D^2(\ket{\chi},\ket{\phi})\right\}.
\end{align}
Now, although $F$ is not a metric, $D$ is. Therefore, we have
\begin{align}
    D(\psi,\phi)\le D(\psi,\chi)+D(\chi,\phi) \le 2 \max \left\{D(\psi,\chi), D(\chi,\phi)    \right\}.
\end{align}
As $D$ is positive, we have
\begin{equation}
    \frac{1}{4}D^2(\phi,\psi) \le \max \left\{D^2(\psi,\chi), D^2(\chi,\phi)    \right\},
\end{equation}
and thus, we have for any $\ket{\chi}$
\begin{align}
    \min \{F(\ket{\psi},\ket{\chi}), F(\ket{\chi},\ket{\phi})   
    &\le 1- \frac{1}{4} D^2(\phi,\psi)\\
    &= \frac{3}{4} + \frac{1}{4} F(\ket{\phi},\ket{\psi}).
\end{align}
Therefore, we have
\begin{align}
&F_{triv}(\ket{\psi},\ket{\phi})\\
    &\quad \le \frac{3}{4} +\frac{1}{4} \max_{U_i} F(\ket{\psi}, \otimes_i U_i \ket{\phi}) \nonumber\\
    &\quad \le \frac{3}{4} +\frac{1}{4} \max_{U,V} F(\ket{\psi}, U\otimes V \ket{\phi}) \nonumber \\
    &\quad = \frac{3}{4} +  \frac{1}{4} \left(\sum_i \sqrt{\mu_i(\rho_1(\psi))\mu_i(\rho_1(\phi))}\right)^2 \nonumber\\
    &\quad \equiv F_{triv}^{UB}(\ket{\psi},\ket{\phi}),
     \label{eq:Fnothingupperbound}
\end{align}
where, in the third to last line, we have used bipartite unitaries instead of fully-local unitaries to provide an upper bound, and the second to last line follows from Ref. \cite{VidalJonathanNielsen2000_ApproxLOCC}, with $\mu_i(\rho_1(\psi))$ corresponding to the sorted eigenvalues of the qubit-1 reduced density matrix of $\psi$ (and likewise for $\phi$).
\end{proof}

\end{document}